% %%%%%%%%%%%%%%%%%%%%%%%%%%%%%%%%%%%%%%%%%%%%%%%%%%%%%%%%%%%%%%%%%%%%%%%%%%%%%%%%
% % FLAGS
% %

% \RequirePackage[l2tabu,orthodox]{nag}
\documentclass
[11pt,letterpaper]
{article}
\newif\iffull

\fulltrue
% %\fullfalse

% %%%%%%%%%%%%%%%%%%%%%%%%%%%%%%%%%%%%%%%%%%%%%%%%%%%%%%%%%%%%%%%%%%%%%%%%%%%%%%%%

% %%%%%%%%%%%%%%%%%%%%%%%%%%%%%%%%%%%%%%%%%%%%%%%%%%%%%%%%%%%%%%%%%%%%%%%%%%%%%%%%
% \documentclass[11pt]{article}
% %%%%%%%%%%%%%%%%%%%%%%%%%%%%%%%%%%%%%%%%%%%%%%%%%%%%%%%%%%%%%%%%%%%%%%%%%%%%%%%%

% %%%%%%%%%%%%%%%%%%%%%%%%%%%%%%%%%%%%%%%%%%%%%%%%%%%%%%%%%%%%%%%%%%%%%%%%%%%%%%%%
% % PACKAGES
\usepackage[notes=true,later=false,camera=true]{dtrt}
   \usepackage{hyperref}
   \hypersetup{hyperindex=true,pdfpagemode=UseOutlines,bookmarksnumbered=true,bookmarksopen=true,bookmarksopenlevel=2,pdfstartview=FitH,pdfborder={0 0 1},linkbordercolor=dt@linkcolor,citebordercolor=dt@linkcolor,urlbordercolor=dt@linkcolor}
\usepackage[utf8]{inputenc}
\usepackage{etex}
% {{{ common }}}
\usepackage{xspace,enumerate}
\usepackage[T1]{fontenc}
\usepackage[full]{textcomp}
% {{{ babelamerican }}}
\usepackage[american]{babel}
% {{{ mathtools }}}
\usepackage{mathtools}
\usepackage{titling}
% {{{ boldmath }}}
% fix for "too many math alphabets" problem
 % default 3
% \usepackage{bm}
% \usepackage{stmaryrd}
% {{{ amsthm }}}
\usepackage{amsthm}

% {{{ geometry-nice }}}
\usepackage[
letterpaper,
top=1.2in,
bottom=1.2in,
left=1in,
right=1in]{geometry}
% {{{ newpx }}}
\usepackage{newpxtext} % T1, lining figures in math, osf in text
\usepackage{textcomp} % required for special glyphs
\usepackage[varg,bigdelims]{newpxmath}
\usepackage[scr=rsfso]{mathalfa}% \mathscr is fancier than \mathcal
\usepackage{bm} % load after all math to give access to bold math
% \useosf %no longer needed
\linespread{1.1}% Give Palatino more leading (space between lines)
\let\mathbb\varmathbb
% {{{ microtype }}}
\usepackage{microtype}

\usepackage[capitalise,nameinlink]{cleveref}
\crefname{lemma}{Lemma}{Lemmas}
\crefname{fact}{Fact}{Facts}
\crefname{theorem}{Theorem}{Theorems}
\crefname{mtheorem}{Theorem}{Theorems}
\crefname{corollary}{Corollary}{Corollaries}
\crefname{claim}{Claim}{Claims}
\crefname{example}{Example}{Examples}
\crefname{algorithm}{Algorithm}{Algorithms}
\crefname{subroutine}{Subroutine}{Subroutines}
\crefname{problem}{Problem}{Problems}
\crefname{definition}{Definition}{Definitions}
\usepackage{paralist}
\usepackage{turnstile}
\usepackage{mdframed}
\usepackage{tikz}
\usepackage{caption}
\DeclareCaptionType{Algorithm}
\usepackage{newfloat}

%%%BEGIN BRIAN'S ADDITIONS
\usepackage{physics}
\usepackage[shortlabels]{enumitem}
\setlist[enumerate,1]{label={(\arabic*)}}
%%%END BRIAN'S ADDITIONS

\newtheorem{theorem}{Theorem}[section]
\newtheorem{mtheorem}{Theorem}
\newtheorem*{theorem*}{Theorem}

\newtheorem{proposition}[theorem]{Proposition}
\newtheorem*{proposition*}{Proposition}
\newtheorem{lemma}[theorem]{Lemma}
\newtheorem*{lemma*}{Lemma}
\newtheorem{corollary}[theorem]{Corollary}
\newtheorem*{conjecture*}{Conjecture}
\newtheorem{fact}[theorem]{Fact}
\newtheorem*{fact*}{Fact}

\newtheorem*{hypothesis*}{Hypothesis}

\theoremstyle{definition}
\newtheorem{definition}[theorem]{Definition}
\newtheorem*{definition*}{Definition}

\newtheorem{algorithm}{Algorithm}

\theoremstyle{remark}

\newtheorem*{claim*}{Claim}
\newtheorem{remark}[theorem]{Remark}
\newtheorem*{remark*}{Remark}

\newtheorem*{observation*}{Observation}

% MACROS

\newcommand{\Authornotecolored}[3]{}
\newcommand{\Authorcomment}[2]{}
\newcommand{\Authorfnote}[2]{}

\definecolor{forestgreen(traditional)}{rgb}{0.0, 0.27, 0.13}

\usepackage{boxedminipage}
% example:
% \center \noindent\begin{boxedminipage}{1.0\linewidth}}
% content
% \end{boxedminipage}
% \noindent
% {{{ parentheses }}}
% various bracket-like commands
% round parentheses

\newcommand{\Paren}[1]{\left(#1\right)}

% square brackets

% absolute value
%\newcommand{\abs}[1]{\lvert#1\rvert}

\newcommand{\bigabs}[1]{\bigl\lvert#1\bigr\rvert}

% cardinality

% set

% norm
\renewcommand{\norm}[1]{\lVert#1\rVert}
\newcommand{\Norm}[1]{\left\lVert#1\right\rVert}
\newcommand{\bignorm}[1]{\bigl\lVert#1\bigr\rVert}

% 2-norm

% 2-norm squared

% norm squared

% 1-norm

% infty-norm

% inner product

\newcommand{\bigiprod}[1]{\big\langle#1\big\rangle}
\newcommand{\Bigiprod}[1]{\Big\langle#1\Big\rangle}
% {{{ probability }}}
% expectation, probability, variance
\newcommand{\Esymb}{\mathbb{E}}

\DeclareMathOperator*{\E}{\Esymb}

%\newcommand{\given}{\;\middle\vert\;}
%\newcommand{\given}{\mathrel{}\middle\vert\mathrel{}}
%\newcommand{\given}{\mathrel{}\middle|\mathrel{}}
% {{{ miscmacros }}}
% middle delimiter in the definition of a set

% tensor product

% add explanations to math displays

% spectral order (Loewner order)

% smallest and largest eigenvalue

% symmetric difference

% set of bits

% no stupid bullets for itemize environmentx
% \renewcommand{\labelitemi}{--}
% control white space of list and display environments

% short for emptyset
%\newcommand{\eset}{\emptyset}
% moved to mathabbreviations
% short for epsilon
%\newcommand{\e}{\epsilon}
% moved to mathabbreviations
% super index with parentheses

% tensor power notation

% multiplicative inverse

% dual element

% subset
%\newcommand{\sse}{\subseteq}
% moved to mathabbreviations
% vertical space in math formula

% setminus

% define something by an equation (display)

% define something by an equation (inline)

% declare function f by $f \from X \to Y$

% big middle separator (for conditioning probability spaces)

% better vector definition and some variations
%\renewcommand{\vec}[1]{{\bm{#1}}}

% punctuation at the end of a displayed formula

% inner product for matrices
\newcommand\bdot\bullet
% transpose

% indicator function / vector

% place a qed symbol inside display formula
%\qedhere
% {{{ mathoperators }}}

%\newcommand{\Tr}{\mathrm{Tr}}

\DeclareMathOperator{\poly}{poly}

\DeclareMathOperator{\polylog}{polylog}

% operators with limits

% smaller summation/product symbols

% {{{ textabbreviations }}}
% some abbreviations

% {{{ foreignwords }}}

% {{{ names }}}
% Hungarian/Polish/East European names

% {{{ numbersets }}}
% number sets
\newcommand{\Z}{\mathbb Z}
\newcommand{\N}{\mathbb N}
\newcommand{\R}{\mathbb{R}}

\renewcommand{\P}{\mathcal P}

% {{{ problems }}}
% macros to denote computational problems
% use texorpdfstring to avoid problems with hyperref (can use problem
% macros also in headings

% list of problems

% {{{ alphabet }}}

\newcommand{\cN}{\mathcal N}

% {{{ leqslant }}}
% slanted lower/greater equal signs
\renewcommand{\leq}{\leqslant}
\renewcommand{\le}{\leqslant}
\renewcommand{\geq}{\geqslant}
\renewcommand{\ge}{\geqslant}
% {{{ varepsilon }}}
\let\epsilon=\varepsilon
% {{{ numberequationwithinsection }}}
\numberwithin{equation}{section}
% {{{ restate }}}
% set of macros to deal with restating theorem environments (or anything
% else with a label)
% adapted from Boaz Barak
\newcommand\MYcurrentlabel{xxx}
% \MYstore{A}{B} assigns variable A value B
\newcommand{\MYstore}[2]{%
  \global\expandafter \def \csname MYMEMORY #1 \endcsname{#2}%
}
% \MYload{A} outputs value stored for variable A
\newcommand{\MYload}[1]{%
  \csname MYMEMORY #1 \endcsname%
}
% new label command, stores current label in \MYcurrentlabel
\newcommand{\MYnewlabel}[1]{%
  \renewcommand\MYcurrentlabel{#1}%
  \MYoldlabel{#1}%
}
% new label command that doesn't do anything
\newcommand{\MYdummylabel}[1]{}
\newcommand{\torestate}[1]{%
  % overwrite label command
  \let\MYoldlabel\label%
  \let\label\MYnewlabel%
  #1%
  \MYstore{\MYcurrentlabel}{#1}%
  % restore old label command
  \let\label\MYoldlabel%
}
\newcommand{\restatetheorem}[1]{%
  % overwrite label command with dummy
  \let\MYoldlabel\label
  \let\label\MYdummylabel
  \begin{theorem*}[Restatement of \cref{#1}]
    \MYload{#1}
  \end{theorem*}
  \let\label\MYoldlabel
}
\newcommand{\restatelemma}[1]{%
  % overwrite label command with dummy
  \let\MYoldlabel\label
  \let\label\MYdummylabel
  \begin{lemma*}[Restatement of \cref{#1}]
    \MYload{#1}
  \end{lemma*}
  \let\label\MYoldlabel
}
\newcommand{\restateprop}[1]{%
  % overwrite label command with dummy
  \let\MYoldlabel\label
  \let\label\MYdummylabel
  \begin{proposition*}[Restatement of \cref{#1}]
    \MYload{#1}
  \end{proposition*}
  \let\label\MYoldlabel
}
\newcommand{\restatefact}[1]{%
  % overwrite label command with dummy
  \let\MYoldlabel\label
  \let\label\MYdummylabel
  \begin{fact*}[Restatement of \prettyref{#1}]
    \MYload{#1}
  \end{fact*}
  \let\label\MYoldlabel
}
\newcommand{\restate}[1]{%
  % overwrite label command with dummy
  \let\MYoldlabel\label
  \let\label\MYdummylabel
  \MYload{#1}
  \let\label\MYoldlabel
}
% {{{ mathabbreviations }}}

\newcommand{\eps}{\epsilon}

% {{{ allowdisplaybreaks }}}
% allows page breaks in large display math formulas
\allowdisplaybreaks
% {{{ sloppy }}}
% avoid math spilling on margin
\sloppy
\newcommand*{\Id}{\mathbb{I}}

\newcommand{\pE}{\mathop{\tilde{\mathbb E}}}

\newcommand{\1}{\bm{1}}

% TITLE

%\def\abs#1{\left|#1  \right|}

% \def\norm#1{\left\| #1 \right\|}

%\def\PP{\mathcal{P}}

% expected value

\let\svthefootnote\thefootnote
\newcommand\blfootnote[1]{%
  \let\thefootnote\relax%
  \footnotetext{#1}%
  \let\thefootnote\svthefootnote%
}

\title{Polynomial-Time Sum-of-Squares Can Robustly Estimate Mean and Covariance of Gaussians Optimally}

\author{Pravesh K.\ Kothari\thanks{Supported by NSF CAREER Award \#2047933 and an award from the Google Research Scholar program.} \and Peter Manohar\thanks{Supported in part by an ARCS Fellowship, NSF Graduate Research Fellowship (under Grant No.\ DGE1745016) and NSF CCF-1814603. 
} \and Brian Hu Zhang\thanks{This material is based on work supported by the National Science Foundation under grants IIS-1718457, IIS-1901403, and CCF-1733556, and the ARO under award W911NF2010081.}}

	\date{\texttt{\{praveshk,pmanohar,bhzhang\}@cs.cmu.edu} \\ Computer Science Department \\ Carnegie Mellon University \\ Pittsburgh, USA}

%       
%

%%%%%%%%%%%%%%%%%%%%%%%%%%%%%%%%%%%%%%%%%%%%%%%%%%%%%%%%%%%%%%%%%%%%%%%%%%%%%%%%
% NOTES

%%%%%%%%%%%%%%%%%%%%%%%%%%%%%%%%%%%%%%%%%%%%%%%%%%%%%%%%%%%%%%%%%%%%%%%%%%%%%%%%

\usepackage{autonum} % must be loaded last\
\makeatletter
\autonum@generatePatchedReferenceCSL{Cref}
\makeatother
\makeatletter
\newcommand{\restore@Environment}[1]{%
  \AtBeginDocument{%
    \csletcs{#1*}{#1}%
    \csletcs{end#1*}{end#1}%
  }%
}
\forcsvlist\restore@Environment{alignat,equation,gather,multline,flalign,align}
\makeatother

%%%%%%%%%%%%%%%%%%%%%%%%%%%%%%%%%%%%%%%%%%%%%%%%%%%%%%%%%%%%%%%%%%%%%%%%%%%%%%%%
% % bibliography settings
%  \usepackage[backend=bibtex8,style=alphabetic,maxnames=99,sorting=anyt,noerroretextools=true]{biblatex}
%   \bibliography{references}
%   \renewcommand*{\bibfont}{\small}
%%%%%%%%%%%%%%%%%%%%%%%%%%%%%%%%%%%%%%%%%%%%%%%%%%%%%%%%%%%%%%%%%%%%%%%%%%%%%%%%

%%%%%%%%%%%%%%%%%%%%%%%%%%%%%%%%%%%%%%%%%%%%%%%%%%%%%%%%%%%%%%%%%%%%%%%%%%%%%%%%
% FORMATTING
\allowdisplaybreaks

\newcommand{\keywords}[1]{\bigskip\par\noindent{\footnotesize\textbf{Keywords\/}: #1}}
% %%%%%%%%%%%%%%%%%%%%%%%%%%%%%%%%%%%%%%%%%%%%%%%%%%%%%%%%%%%%%%%%%%%%%%%%%%%%%%%%

% %%%%%%%%%%%%%%%%%%%%%%%%%%%%%%%%%%%%%%%%%%%%%%%%%%%%%%%%%%%%%%%%%%%%%%%%%%%%%%%%
% % MACROS
% \renewcommand{\R}{{\mathbb R}}
% \newcommand{\N}{{\mathbb N}}
% % \newcommand{\norm}[1]{\lVert #1 \rVert}
 \renewcommand{\abs}[1]{\lvert #1 \rvert}

 \renewcommand{\ip}[1]{\langle #1 \rangle}

\newcommand{\tvdist}{\Delta_{\text{TV}}}
\newcommand{\degfin}{12}
\newcommand{\deghalf}{6}
\newcommand{\sampmean}{\mu_0}

\begin{document}

\pagestyle{empty}

% MAKE TITLE

\maketitle\blfootnote{Any opinions, findings, and conclusions or recommendations expressed in this material are those of the author(s) and do not necessarily reflect the views of the National Science Foundation.}
\thispagestyle{empty} % seems to be required here to avoid page number on first page

% ABSTRACT

\begin{abstract}
In this work, we revisit the problem of estimating the mean and covariance of an unknown $d$-dimensional Gaussian distribution in the presence of an $\epsilon$-fraction of adversarial outliers. The work of \cite{DiakonikolasKK016} gave a polynomial time algorithm for this task with optimal $\tilde{O}(\eps)$ error using $n = \poly(d, 1/\eps)$ samples. 

On the other hand, \cite{KS17} introduced a general framework for robust moment estimation via a canonical sum-of-squares relaxation that succeeds for the more general class of \emph{certifiably subgaussian} and \emph{certifiably hypercontractive}~\cite{BakshiK20} distributions. When specialized to Gaussians, this algorithm obtains the same $\tilde{O}(\epsilon)$ error guarantee as \cite{DiakonikolasKK016} but incurs a super-polynomial sample complexity ($n = d^{O(\log 1/\eps)}$) and running time ($n^{O(\log(1/\eps))}$). This cost appears inherent to their analysis as it relies only on sum-of-squares certificates of upper bounds on directional moments while the analysis in \cite{DiakonikolasKK016} relies on \emph{lower bounds} on directional moments inferred from algebraic relationships between moments of Gaussian distributions. 

We give a new, simple analysis of the \emph{same} canonical sum-of-squares relaxation used in \cite{KS17,BakshiK20} and show that for Gaussian distributions, their algorithm achieves the same error, sample complexity and running time guarantees as of the specialized algorithm in~\cite{DiakonikolasKK016}. Our key innovation is a new argument that allows using moment lower bounds without having sum-of-squares certificates for them. We believe that our proof technique will likely be useful in designing new robust estimation algorithms. 

\keywords{Robust estimation, sum-of-squares, mean estimation}
\end{abstract}
\clearpage

% TOC

  % assumes microtype
  \microtypesetup{protrusion=false}
  \tableofcontents{}
  \microtypesetup{protrusion=true}

\clearpage

\pagestyle{plain}
\setcounter{page}{1}
%%%%%%%%%%%%%%%%%%%%%%%%%%%%%%%%%%%%%%%%%%%%%%%%%%%%%%%%%%%%%%%%%%%%%%%%%%%%%%%%
%%%%%%%%%%%%%%%%%%%%%%%%%%%%%%%%%%%%%%%%%%%%%%%%%%%%%%%%%%%%%%%%%%%%%%%%%%%%%%%%
%%%%%%%%%%%%%%%%%%%%%%%%%%%%%%%%%%%%%%%%%%%%%%%%%%%%%%%%%%%%%%%%%%%%%%%%%%%%%%%%
\section{Introduction}
Designing estimation algorithms for estimating basic parameters of probability distributions from samples is a foundational computational problem in machine learning. However, natural estimation algorithms, such as taking the sample mean for population mean, can be brittle -- even a single outlier in the data can lead to an arbitrarily large estimation error. In the 1960s, Tukey and Huber began systematic efforts to build \emph{robust estimators} that can tolerate minor deviations of the input from the chosen model, such as the injection of a small constant fraction of adversarially chosen outliers into the sample. While this effort has led to a burgeoning body of work called \emph{robust statistics}, the algorithms from this line of work typically require exponential time in the underlying dimension to succeed and are thus inefficient in high-dimensional settings. 

In 2016, two papers~\cite{DiakonikolasKK016,LaiRV16} pioneered a systematic effort to build \emph{computationally efficient} robust estimators. Since their work, the study of \emph{algorithmic robust statistics} has evolved into an active area, that, in addition to yielding concrete solutions to basic robust estimation problems, has led to the synthesis of truly new algorithmic ideas (often improving even the classical, non-robust algorithms) that identify and clarify general principles for efficient robust estimation. 

A key insight from this line of work has been a general blueprint for robust estimation using the \emph{sum-of-squares} (SoS) method. A sequence of works in 2018 gave a canonical sum-of-squares relaxation and a rounding algorithm that gives the nearly statistically optimal outlier-robust estimation of \emph{moments}~\cite{KS17} and robust clustering~\cite{KothariSteinhardt17,DBLP:conf/stoc/Hopkins018} of spherical mixtures of a broad class of probability distributions. Since then, this framework has been refined and expanded to obtain state-of-the-art robust estimation algorithms for problems such as outlier-robust regression~\cite{KlivansKM18,BakshiP21}, clustering of non-spherical mixtures~\cite{BakshiK20,DBLP:conf/focs/BakshiDHKKK20}, heavy tailed estimation~\cite{Hopkins,DBLP:conf/stoc/CherapanamjeriH20}, list-decodable regression and subspace recovery~\cite{DBLP:conf/nips/KarmalkarKK19,BakshiK20,DBLP:conf/colt/RaghavendraY20,DBLP:conf/soda/RaghavendraY20} and robust learning of a mixture of arbitrary Gaussians~\cite{DBLP:conf/stoc/LiuM21,DBLP:journals/corr/abs-2012-02119}. 

In addition, algorithms from the SoS-based robust estimation framework have the advantage of abstracting out natural analytic properties of the statistical model in question and yielding robust estimators for all distributions that satisfy such properties in a blackbox way. For example, the algorithms for robust estimation of moments~\cite{KS17} and clustering~\cite{DBLP:conf/stoc/Hopkins018,KothariSteinhardt17} apply to all \emph{certifiably subgaussian} distributions, that, informally speaking, are distributions that admit ``sum-of-squares certificates'' of the property of having subgaussian low order moments. As another example, the covariance estimation algorithm of~\cite{BakshiK20} applies to all distributions that admit sum-of-squares certificates of bounds on moments of degree-$2$ polynomials (\emph{certifiable hypercontractivity}). Such properties are already known to hold for a broad class of distributions and verifying them for a new family immediately generalizes such results. In fact, one can interpret the analysis in the sum-of-squares framework as identifying structural properties (certificates of appropriate analytic properties) of the  distribution families that can be exploited for the design of efficient robust estimation algorithms.\footnote{See this \href{https://simons.berkeley.edu/talks/recent-progress-algorithmic-robust-statistics-sum-squares-method}{recent talk} for this perspective and its applications to weakening distributional assumptions in robust estimation.}

\paragraph{Robust Mean Estimation for Gaussians.} While the SoS-based framework above typically achieves the best known recovery guarantees (among polynomial time algorithms), a striking exception so far has been the task of robustly estimating the mean and covariance of an unknown Gaussian distribution. In this problem, the algorithm is given input data $Y = \{y_1, y_2,\ldots, y_n\} \subseteq \R^d$ that is obtained by \emph{arbitrarily and adversarially} corrupting $\epsilon n$ points in an i.i.d.\ sample $X = \{x_1, x_2, \ldots, x_n\}$ from an unknown Gaussian distribution $\cN(\mu,\Sigma)$. The algorithm of~\cite{DiakonikolasKK016} obtains estimates $\hat{\mu},\hat{\Sigma}$ so that the total variation distance $d_{TV}(\cN(\mu, \Sigma), \cN(\hat{\mu},\hat{\Sigma})) \leq \tilde{O}(\epsilon)$. This is optimal up to logarithmic factors in $\epsilon$ in the bound (and there is evidence~\cite{DiakonikolasKS17} that such a loss might be necessary for polynomial time algorithms). Their algorithm requires $n=\poly(d,1/\epsilon)$ samples and $\poly(n)$ running time. On the other hand, the best known SoS-based algorithm for the problem is obtained by specializing the analysis in~\cite{BakshiK20} for mean and covariance estimation of \emph{certifiably hypercontractive} distributions to the case of Gaussians. This analysis yields the same error bound of $\tilde{O}(\epsilon)$ on the total variation error but requires super-polynomially many $d^{O(\log 1/\epsilon)}$ samples and $n^{O(\log 1/\epsilon)}$ running time. 

There is an important technical bottleneck in the analysis of the canonical SoS algorithm for obtaining the stronger guarantees in~\cite{DiakonikolasKK016}. The analysis in~\cite{KS17} (and extensions in~\cite{BakshiK20}) only uses upper bounds on the higher moments of distributions. On the other hand, the stronger analysis in~\cite{DiakonikolasKK016} implicitly relies on a non-trivial \emph{lower bound} on moments of arbitrary subsets of the original sample of size $(1-\epsilon)n$. The best known sum-of-squares certificates for such a lower bound property appear to require an exponential cost in $O(\log 1/\epsilon)$ in both sample complexity and running time.
And, it is plausible (though, still unproven) that such a cost is necessary! This state of affairs leads us to the main motivating question of this work: 

\emph{Can the canonical SoS based algorithm give a robust estimate with $\tilde{O}(\epsilon)$ total variation error for the mean and covariance of Gaussian distributions in polynomial time and samples? Or is the SoS framework for moment estimation weaker, when specialized to Gaussian distributions?} 

In this work, we give a new analysis of the canonical sum-of-squares-based algorithm for robust mean estimation for Gaussians (that only has subgaussian upper bounds on $4$th moment as constraints) that recovers the polynomial running time and sample complexity guarantees of~\cite{DiakonikolasKK016} and same error up to $\poly \log 1/\epsilon$ factors. Our key innovation (that we explain later in this section) is a new argument that works around the issue of finding efficient sum-of-squares certificates for moment lower bounds and yet manages to prove the stronger guarantee. We believe that this new technique will likely find further applications in efficient robust estimation. 

%%%%%%%%%%%%%%%%%%%%%%%%%%%%%%%%%%%%%%%%%%%%%%%%%%%%%%%%%%%%%%%%%%%%%%%%%%%%%%%%
%%%%%%%%%%%%%%%%%%%%%%%%%%%%%%%%%%%%%%%%%%%%%%%%%%%%%%%%%%%%%%%%%%%%%%%%%%%%%%%%
\subsection{Our results}
Formally, our algorithms work in the following strong contamination model for corrupted samples used in several prior works on robust estimation, beginning with~\cite{DiakonikolasKK016,LaiRV16}. 

\begin{definition}[Strong contamination model]
Let $D$ be a distribution on $\R^d$ and let $X = \{x_1, x_2, \ldots,x_n\}$ be an i.i.d.\ sample from $D$. 
In the strong contamination model, an $\epsilon$-corrupted sample is obtained by replacing any adversarially chosen $\epsilon n$ points from $X$ with arbitrary outliers to obtain $Y = \{y_1, y_2,\ldots,y_n\}$. 
\end{definition}

Our main result is an analysis of the following canonical SoS relaxation for mean and covariance estimation along with a  simple rounding (used in~\cite{KS17,BakshiK20}) for estimating the mean and covariance of an unknown distribution.

\begin{algorithm}[Mean and spectral norm covariance estimation]
  \begin{mdframed}
  \label{program:canonical}
      \mbox{}
    \begin{description}
    \item[Input:] Parameter $\eps \in (0,1)$, and corrupted samples $y_1, \dots, y_{n} \in \R^d$.
    \item[Operation:] 
         Find a degree-$\degfin$ pseudo-expectation $\pE$ (solution to the SoS semidefinite programming relaxation) in  variables $x'_1, \dots, x'_n \in \R^d$, ~$w_1, \dots, w_n \in \R$, ~$\mu' := \E_i x'_i$, ~$\Sigma' := \E_{i} (x'_i - \mu')(x'_i - \mu')^{\top}$ satisfying the following set of constraints:
            \begin{enumerate}[(1)]
        \item \textbf{Booleanity of intersection Variables:} $w_i^2 = w_i$ for every $i \in [n]$,
        \item \textbf{Size of intersection:} $\sum_{i = 1}^n w_i = (1 - \eps)n$,
        \item \textbf{Intersection constraints:} $w_i x'_i = w_i y_i$ for every $i \in [n]$,
        \item \textbf{Certifiable Subgaussianity of 4th moments:} \\ $\frac{1}{n} \sum_{i =1}^n \Paren{\ip{x'_i - \mu',v}^2 - v^{\top}\Sigma'v}^2 \leq (2 + \tilde{O}(\eps)) (v^{\top} \Sigma' v)^2$ for every $v \in \R^d$.
      \end{enumerate}
  \item[Output:] $\hat{\mu} = \pE[\mu'], \hat{\Sigma} = \pE[\Sigma']$.
    \end{description}
         \end{mdframed}
  \end{algorithm}

The constraints of the program encode the task of finding a set of points $X'$ that intersects the input sample $Y$ in $(1-\epsilon)n$ points (encoded by the first $3$ sets of constraints) such that the empirical 4-th moments of $X'$ are bounded above by at most $\sim 2$ times the square of the 2nd moments (the last set of constraints). The last set of constraints, though apparently infinitely many (one for every $v \in \R^d$) admit a succinct representation via techniques of~\cite{KS17,DBLP:conf/stoc/Hopkins018} (see \cref{sec:elimination}, or~\cite{FlemingKP19}, Chapter 4 for an exposition). The intended solution for this polynomial system is $X' = X$ -- the unknown, true i.i.d.\ sample. (And then setting $w_i = \1(x_i = y_i)$, and $\mu', \Sigma'$ to be the empirical mean/covariance of $X$.) It is easy to check that $X$ satisfies the last set of constraints -- the only property of i.i.d.\ Gaussian samples that we enforce. 

We prove the following formal guarantees on \cref{program:canonical}.
\begin{mtheorem}[Mean and spectral norm covariance estimation]
\label{thm:main}
\MYstore{thm:main}{
\cref{program:canonical} takes input an $\epsilon$-corrupted sample of size $n$ from a Gaussian distribution with mean $\mu$ and covariance $\Sigma$ and in $\poly(n)$-time, outputs estimates $\hat{\mu} \in \R^d, \hat{\Sigma} \in \R^{d \times d}$ with the following guarantee. If $\Sigma \succeq 2^{-\poly(d)}I$, and $n \geq \tilde{O}(d^2 \log^5(1/\delta)/\eps^2)$, with probability at least $1 - \delta$ over the draw of the original uncorrupted sample $X$, the estimates $\hat{\mu}, \hat{\Sigma}$ satisfy:
\begin{enumerate}[(1)]
\item (Mean estimation) $\Norm{\Sigma^{-1/2}(\hat{\mu} - \mu)}_2 \leq \tilde{O}(\eps)$, and
\item (Covariance estimation in spectral norm) $(1-\tilde{O}(\epsilon))\Sigma \preceq \hat{\Sigma} \preceq (1+\tilde{O}(\epsilon)) \Sigma$.
\end{enumerate}
}
\MYload{thm:main}
\end{mtheorem}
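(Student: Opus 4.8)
The plan is to run the standard SoS-based robust estimation argument, but with a twist that lets us get away with only the degree-$4$ subgaussianity constraint while still extracting the sharp $\tilde O(\eps)$ error. Throughout, fix a pseudo-expectation $\pE$ satisfying the constraints of \cref{program:canonical}. First I would establish the usual deterministic regularity conditions on the clean sample $X$: with the stated $n$, the empirical mean and covariance of $X$ are $(1\pm\tilde O(\eps))$-close to $\mu,\Sigma$ in the appropriate norms, and more importantly $X$ satisfies a two-sided moment bound — not just the certifiable upper bound in constraint (4), but also a matching \emph{lower bound} $\frac{1}{n}\sum_i (\ip{x_i-\mu_X,v}^2 - v^\top\Sigma_X v)^2 \ge (2-\tilde O(\eps))(v^\top\Sigma_X v)^2$ for all $v$. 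The point — and this is the paper's stated innovation — is that we only need this lower bound to hold \emph{for the fixed clean sample $X$}, as an ordinary inequality, not as an SoS certificate; it will be used as a ``scalar'' fact about $X$ inside an otherwise-SoS argument.

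Next, let $w$ be the indicator of the (pseudo-distribution over) intersection set and let $S$ index the points where the clean sample $X$ and the solution $X'$ agree; $|S|\ge(1-2\eps)n$. The heart of the argument is a cross-term / difference-of-moments bound: I would show, working inside the SoS proof system where possible, that for every unit $v$ (in the $\Sigma$-geometry),
\[
  \Bigabs{\frac{1}{n}\sum_{i\in S}\Paren{\ip{x_i-\mu_X,v}^2 - \ip{x'_i-\mu',v}^2}} \;\le\; \tilde O(\eps)\cdot v^\top\Sigma v \;+\; \tilde O(\eps)\cdot v^\top\Sigma' v,
\]
and similarly for the first moments $\ip{\mu_X-\mu',v}$. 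This is obtained by writing the sum over $S$ as (sum over $[n]$) minus (sum over the $\le 2\eps n$ bad indices), bounding the bad-index contribution by Cauchy–Schwarz against the degree-$4$ moment bounds on \emph{both} $X$ (the clean lower/upper bounds, used as scalars) and $X'$ (constraint (4), used as an SoS inequality), so that an $\eps$-fraction of a bounded-$4$th-moment sum contributes only $\sqrt{\eps}\cdot\sqrt{\eps}=\eps$ after the second Cauchy–Schwarz. Combining this with the clean-sample regularity gives, after taking $\pE[\cdot]$, that $\pE[v^\top\Sigma' v] = (1\pm\tilde O(\eps))\,v^\top\Sigma v$ for all $v$, which is exactly the spectral-norm covariance guarantee (2) once we argue $\hat\Sigma = \pE[\Sigma']$ is itself PSD and two-sided sandwiched — here the subtlety is moving the bound from ``for all fixed $v$, $\pE[v^\top\Sigma'v]$ is close'' to a Löwner statement about $\hat\Sigma=\pE[\Sigma']$, which is immediate since $v^\top\hat\Sigma v = \pE[v^\top\Sigma'v]$.

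For the mean estimate (1), I would use the now-established fact $\Sigma'\approx\Sigma$ (in SoS, $v^\top\Sigma'v \sle (1+\tilde O(\eps))v^\top\Sigma v$) together with the first-moment cross-term bound: $\ip{\hat\mu-\mu,v} = \pE[\ip{\mu'-\mu_X,v}] + \ip{\mu_X-\mu,v}$, the second term is $\tilde O(\eps)\sqrt{v^\top\Sigma v}$ by clean-sample regularity, and the first is controlled by writing $\mu'-\mu_X = \frac1n\sum_i(x'_i - x_i)$, restricting to $S$ (the rest is $\le 2\eps n$ terms, bounded via the $4$th-moment / $2$nd-moment constraints and Cauchy–Schwarz to get $\tilde O(\eps)\sqrt{v^\top\Sigma'v} + \tilde O(\eps)\sqrt{v^\top\Sigma v}$), and on $S$ using $w_i x'_i = w_i y_i = w_i x_i$ so the $S$-contribution essentially vanishes up to the $w$-vs-$S$ discrepancy, again an $\eps$-fraction term. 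Setting $v = \Sigma^{-1}(\hat\mu-\mu)$ and rearranging yields $\Norm{\Sigma^{-1/2}(\hat\mu-\mu)}_2 \le \tilde O(\eps)$.

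The main obstacle — and the crux of the whole paper — is the step where the clean-sample \emph{moment lower bound} is invoked: in the standard argument one needs, inside the SoS system, a \emph{certificate} that the intersection subset of $X$ has $4$th moments bounded \emph{below}, and such certificates are only known at degree $\Omega(\log 1/\eps)$, which is precisely what blows up the complexity in \cite{KS17,BakshiK20}. The trick to make work is to never certify the lower bound as a polynomial inequality; instead, one fixes the clean sample, treats $\frac1n\sum_{i\in S}(\ip{x_i-\mu_X,v}^2-\ldots)^2 \ge (2-\tilde O(\eps))(v^\top\Sigma_X v)^2$ as a numerical inequality that holds for \emph{this particular} $X$ and \emph{this particular} (pseudo-distribution over) $S$, and feeds it into the comparison between $X$-moments and $X'$-moments — the $X'$ side stays SoS, the $X$ side is scalars. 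Getting the bookkeeping right so that the ``$S$ depends on the pseudo-distribution'' issue does not break the scalar-vs-SoS separation (e.g.\ by working with the $w_i$ variables and the fixed $x_i$'s, never needing SoS-ness in the $x_i$ direction) is the delicate part; once that is set up, the rest is the routine $\eps$-fraction Cauchy–Schwarz accounting sketched above.
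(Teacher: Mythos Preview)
Your high-level framing is right---the innovation is to use resilience of the clean sample as a scalar fact, not as an SoS certificate, applied to the vector $a_i=\pE[w_i]\,\1(x_i=y_i)$---but the specific mechanism you sketch does not deliver $\tilde O(\eps)$.

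The claim that ``an $\eps$-fraction of a bounded-$4$th-moment sum contributes only $\sqrt{\eps}\cdot\sqrt{\eps}=\eps$ after the second Cauchy--Schwarz'' is the gap. Cauchy--Schwarz against the $4$th-moment upper bound gives
\[
  \tfrac{1}{n}\sum_{i\notin S}\ip{x_i-\mu_X,v}^2 \;\le\; \sqrt{\tfrac{|S^c|}{n}}\cdot\sqrt{\tfrac{1}{n}\textstyle\sum_i\ip{x_i-\mu_X,v}^4} \;=\; O(\sqrt{\eps})\cdot v^\top\Sigma_X v,
\]
and feeding this into the outer Cauchy--Schwarz (the one that extracts the factor $\E_i(1-w'_i)\le 2\eps$) yields $\sqrt{2\eps\cdot O(\sqrt{\eps})}=O(\eps^{3/4})$, which is exactly the \cite{KS17} bound you are trying to beat. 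No arrangement of Cauchy--Schwarz against $4$th moments does better. The resilience property you need is not the fourth-moment lower bound you wrote down but the \emph{second}-moment resilience over fractional subsets: for every $a\in[0,1]^n$ with $\E_i a_i\ge 1-2\eps$,
\[
  \bigl|\E_i\,a_i\bigl[\ip{x_i-\mu_X,v}^2 - v^\top\Sigma_X v\bigr]\bigr| \;\le\; \tilde O(\eps)\,v^\top\Sigma_X v.
\]
Applied with $a_i=\pE[w'_i]$ this gives $\pE\E_i(1-w'_i)\ip{x_i-\mu_X,v}^2\le\tilde O(\eps)\,v^\top\Sigma_X v$ directly, replacing your $O(\sqrt{\eps})$ step; combined with constraint~(4) on the $X'$ side, the squared error becomes $2\eps\cdot\bigl(\pE[v^\top\Sigma'v]-v^\top\Sigma_Xv+\tilde O(\eps)(\cdots)\bigr)$, which is $\tilde O(\eps^2)$ once the covariance result is available.

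For the covariance guarantee two further devices are needed that your sketch omits. First, the paper rewrites $\Sigma'=\E_{ij}\tfrac12(x'_i-x'_j)(x'_i-x'_j)^\top$ to eliminate $\mu'$ from the accounting. Second, the analogue of ``$V$'' here is $2(v^\top\Sigma'v)^2$, so the error bound involves $\pE[(v^\top\Sigma'v)^2]$, which is \emph{not} controlled by $v^\top\hat\Sigma v=\pE[v^\top\Sigma'v]$; your line ``$\pE[v^\top\Sigma' v]=(1\pm\tilde O(\eps))v^\top\Sigma v$'' therefore does not follow from a cross-term bound of the form $|\cdot|\le\tilde O(\eps)v^\top\Sigma v+\tilde O(\eps)v^\top\Sigma'v$. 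The paper closes this by decomposing $v^\top\Sigma'v=A_v+B_v$ into the $w'$-weighted and complementary parts, bounding $\pE[A_v^2]\le(1+\tilde O(\eps))(v^\top\Sigma_0 v)^2$ via resilience applied to $a_{ij}=\sqrt{\pE[w'_{ij}]}$, and then solving a self-referential quadratic inequality for $\pE[B_v^2]$. Finally, note that $v^\top\Sigma'v\le(1+\tilde O(\eps))v^\top\Sigma v$ is never established as an SoS inequality, only after taking $\pE$; your mean argument should not invoke it ``in SoS.''
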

\begin{remark}[Computational Model and Numerical Issues]
Our algorithm succeeds in the standard word RAM model of computation. In this model, the input sample $Y$ is given to the algorithm after ``truncating'' the real numbers to rational numbers with $\poly(d)$ bits of precision. The running time of our algorithm is polynomial in the size of the bit representation of the input. The assumption on the smallest eigenvalue of $\Sigma$ in the statement above is entirely an artefact of the numerical issues as the truncation of $Y$ to rational numbers, in general, does not allow recovering eigenvalues of $\Sigma$ that are not representable in polynomially many bits of precision. Such an assumption is required (but sometimes not stated explicitly) by all prior works on robust estimation when implemented in the standard word RAM model of computation. 

We note that it is possible (though, requires additional steps in the algorithm) to remove the assumption on the smallest eigenvalue of $\Sigma$ if we instead assume that the unknown $\Sigma$ has rational entries. Such an assumption is clearly necessary as algorithms in the word RAM model can only output $\hat{\Sigma}$ with rational entries. We omit the description of such a method and instead choose to make an assumption that the smallest eigenvalue of $\Sigma$ can be written down in $\poly(d)$ bits. 
\end{remark}

\cref{thm:main} shows that the algorithm of \cite{KS17}, when analyzed for Gaussian distributions, achieves the information-theoretically optimal $\tilde{O}(\eps)$ error guarantee using $n = \poly(d, 1/\eps)$ samples and $\poly(n)$ running time. This shows that the analysis of \cite{KS17}, which is tight for the more general class of certifiable subgaussian and certifiable hypercontractive distributions, can be improved in the specific case of Gaussians.

The guarantees achieved by \cref{thm:main} are weaker than the guarantees of the algorithm in \cite{DiakonikolasKK016}, whose estimate $\hat{\Sigma}$ is additionally close to $\Sigma$ in relative Frobenius error. We show that by analyzing the degree-$\degfin$ SoS relaxation of the following program (that replaces the certifiable subgaussianity constraints by certifiable hypercontractivity constraints on degree-$2$ polynomials), we can upgrade the guarantees of \cref{thm:main} to achieve the stronger Frobenius norm guarantee of \cite{DiakonikolasKK016}. We note that this program (with additional higher-degree certifiable hypercontractivity constraints) was analyzed in~\cite{BakshiK20} to obtain similar guarantees on the mean and covariance estimation of the more general class of all certifiably hypercontractive distributions, but needed $n=d^{O(\log 1/\epsilon)}$ samples and $n^{O(\log 1/\epsilon)}$ running time. Our contribution is obtaining a sharper analysis of the same program for the case of Gaussian distributions. 
  \begin{algorithm}[Frobenius norm covariance estimation]
  \label{program:frobnorm}
     \begin{mdframed}
    \mbox{}
    \begin{description}
    \item[Input:]
      Parameter $\eps \in (0,1)$, and corrupted samples $y_1, \dots, y_{n} \in \R^d$.
    \item[Operation:]
    Find a degree-$\degfin$ pseudo-expectation $\pE$ in the variables $x'_1, \dots, x'_n \in \R^d$, ~$w_1, \dots, w_n \in \R$, $\mu' := \E_i x'_i$, $\Sigma' := \E_{i} (x'_i - \mu')(x'_i - \mu')^{\top}$ satisfying the following set of constraints:
      \begin{enumerate}[(1)]
      \item $w_i^2 = w_i$ for every $i \in [n]$,
      \item $\sum_{i = 1}^n w_i = (1 - \eps)n$,
      \item $w_i x'_i = w_i y_i$ for every $i \in [n]$,
      \item $\E_i \ip{(x'_i - \mu')(x'_i - \mu')^{\top} - \Sigma', P}^2 \leq (2 + \tilde{O}(\eps)) \norm{P}_F^2$ for every symmetric $P \in \R^{d \times d}$.
      \end{enumerate}
    \item[Output:] $\hat \Sigma := \pE[\Sigma']$.
    \end{description}
      \end{mdframed}
  \end{algorithm}
  
\begin{mtheorem}[Frobenius norm covariance estimation with $\Sigma \approx \Id$]
\label{thm:frobnorm}
\MYstore{thm:frobnorm}{
\cref{program:frobnorm} takes input an $\epsilon$-corrupted sample of size $n$ from a Gaussian distribution with mean $\mu$ and covariance $\Sigma$ with $(1 - \tilde{O}(\eps)) \Id \preceq \Sigma \preceq(1 + \tilde{O}(\eps)) \Id$, and in $\poly(n)$-time, outputs an estimate $\hat{\Sigma} \in \R^{d \times d}$ with the following guarantee. If $n \geq \tilde{O}(d^2 \log^5(1/\delta)/\eps^2)$, then with probability at least $1 - \delta$ over the draw of the original uncorrupted sample $X$, the estimate $\hat{\Sigma}$ satisfies $\norm{\Sigma^{-1/2} \hat{\Sigma} \Sigma^{-1/2} - \Id}_F \leq \tilde{O}(\eps)$.
}
\MYload{thm:frobnorm}
\end{mtheorem}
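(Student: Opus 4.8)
The plan is to analyze \cref{program:frobnorm} along the lines of \cref{thm:main}, upgrading spectral-norm control of $\hat\Sigma$ to relative Frobenius control. Since $\Sigma^{-1/2}\hat\Sigma\Sigma^{-1/2}-\Id = \Sigma^{-1/2}(\hat\Sigma-\Sigma)\Sigma^{-1/2}$ and $\Norm{\Sigma^{-1/2}}_{\mathrm{op}}^2 = 1+\tilde O(\eps)$ by the hypothesis $\Sigma\approx\Id$, it suffices to prove $\normf{\hat\Sigma-\Sigma}\le\tilde O(\eps)$. First I would record the regularity conditions on the clean sample $X$ that hold with probability $\ge 1-\delta$ once $n\ge\tilde O(d^2\log^5(1/\delta)/\eps^2)$: (i) $X$ satisfies constraint (4), so \cref{program:frobnorm} is feasible; (ii) $\normf{\hat\Sigma_X-\Sigma}\le\tilde O(\eps)$ and $\Norm{\hat\mu_X-\mu}_2\le\tilde O(\eps)$, where $\hat\mu_X,\hat\Sigma_X$ are the empirical mean and covariance; (iii) a matching moment \emph{lower} bound $\frac1n\sum_i \ip{(x_i-\hat\mu_X)(x_i-\hat\mu_X)^{\top}-\hat\Sigma_X,\,P}^2 \ge (2-\tilde O(\eps))\normf{P}^2$ for every symmetric $P$; and (iv) uniform \emph{truncated-tail} bounds: with $\tau=\Theta(\log(1/\eps))$ and $c_i(P):=\ip{(x_i-\hat\mu_X)(x_i-\hat\mu_X)^{\top}-\hat\Sigma_X,\,P}$, $\frac1n\sum_i\max\{c_i(P)^2-\tau^2\normf{P}^2,0\}\le\tilde O(\eps)\normf{P}^2$ and $\frac1n\sum_i\max\{\abs{c_i(P)}-\tau\normf{P},0\}\le\tilde O(\eps)\normf{P}$ for all symmetric $P$, together with the analogous degree-$1$ bounds for the linear forms $\ip{x_i-\mu,v}$. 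Conditions (iii)–(iv) are the only facts about the \emph{specific} Gaussian sample we use and are exactly where moment lower bounds enter; crucially they are statements about the \emph{fixed full} sample, not about arbitrary subsets, so they hold with $\poly(d,1/\eps)$ samples by standard Gaussian concentration plus a union bound over an $\eps$-net of symmetric matrices.

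Given feasibility, let $\pE$ be the degree-$\degfin$ pseudo-expectation returned by the algorithm, write $Q_i := (x_i'-\mu')(x_i'-\mu')^{\top}$, and introduce the overlap variables $z_i := w_i\Ind(x_i=y_i)$. From constraints (1)–(3) one derives, within low-degree SoS, that $z_i^2=z_i$, $1-z_i=(1-z_i)^2\succeq 0$, $z_i x_i'=z_i x_i$, and $\frac1n\sum_i(1-z_i)\le 2\eps$ with $\frac1n\sum_i z_i-(1-2\eps)=\frac1n\sum_i\bigl(1-\Ind(x_i=y_i)\bigr)(1-w_i)^2\succeq 0$. Since constraint (4) restricted to rank-one $P=vv^{\top}$ yields a certifiable fourth-moment bound on the linear forms $\ip{x_i'-\mu',v}$, the mean-estimation analysis behind \cref{thm:main} applies here and (together with the pseudo-expectation bounds underlying it) delivers $\pE\Norm{\mu'-\mu}_2^2\le\tilde O(\eps^2)$, hence $\pE\Norm{\mu'-\hat\mu_X}_2^2\le\tilde O(\eps^2)$ by (ii); this is used only to show that the mean-correction terms below stay at the $\tilde O(\eps)$ scale.

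Now fix a symmetric $P$ with $\normf{P}=1$ and set $\Delta:=\ip{\Sigma'-\hat\Sigma_X,P}$, $T:=\frac1n\sum_i(1-z_i)$, $c_i:=c_i(P)$. Using $z_i x_i'=z_i x_i$ and $\frac1n\sum_i c_i=0$, expanding $\ip{\Sigma'-\hat\Sigma_X,P}$ over the overlap gives the identity
\[
\Delta(1-T)\;=\;\frac1n\sum_i(1-z_i)\ip{Q_i-\Sigma',P}\;-\;\frac1n\sum_i(1-z_i)\,c_i\;+\;(\text{mean corrections}),
\]
where the mean corrections are controlled by $\pE\Norm{\mu'-\hat\mu_X}_2^2\le\tilde O(\eps^2)$ and the identity $\frac1n\sum_i(x_i-\hat\mu_X)=0$ (which is what prevents a dimension-dependent blow-up). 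The second sum is $\tilde O(\eps)$ by the degree-$1$ instance of (iv). The heart of the argument is to bound the first sum, via the ``left-over fourth moment''
\[
\textstyle\frac1n\sum_i(1-z_i)\ip{Q_i-\Sigma',P}^2 = \frac1n\sum_i\ip{Q_i-\Sigma',P}^2 - \frac1n\sum_i z_i\ip{Q_i-\Sigma',P}^2 \;\le\; (2+\tilde O(\eps)) - \frac1n\sum_i z_i\ip{(x_i-\mu')(x_i-\mu')^{\top}-\Sigma',P}^2,
\]
using constraint (4) and $z_i Q_i=z_i(x_i-\mu')(x_i-\mu')^{\top}$. Writing $\ip{(x_i-\mu')(x_i-\mu')^{\top}-\Sigma',P}=c_i+\delta_i-\Delta$ (with $\delta_i$ a mean correction) and expanding the square, the SoS fact $\frac1n\sum_i z_i c_i^2\ge 2-\tilde O(\eps)$ — obtained from (iii) by the clipping inequality $c_i^2\le\max\{c_i^2,\tau^2\}$ and (iv), since $\frac1n\sum_i(1-z_i)c_i^2\le\frac1n\sum_i\max\{c_i^2-\tau^2,0\}+\tau^2 T\le\tilde O(\eps)$ — together with $\frac1n\sum_i z_i\ge 1-2\eps$ and $\frac1n\sum_i z_i c_i=-\frac1n\sum_i(1-z_i)c_i=\tilde O(\eps)$, gives $\frac1n\sum_i z_i\ip{(x_i-\mu')(x_i-\mu')^{\top}-\Sigma',P}^2\ge (2-\tilde O(\eps))+(1-O(\sqrt\eps))\Delta^2-\tilde O(\eps)$; hence the left-over fourth moment is $\le\tilde O(\eps)-(1-O(\sqrt\eps))\Delta^2\le\tilde O(\eps)$. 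By the SoS Cauchy–Schwarz inequality for the idempotent weights $1-z_i$, $\bigl(\frac1n\sum_i(1-z_i)\ip{Q_i-\Sigma',P}\bigr)^2\le T\cdot\frac1n\sum_i(1-z_i)\ip{Q_i-\Sigma',P}^2\le 2\eps\cdot\tilde O(\eps)=\tilde O(\eps^2)$, so $\pE\bigabs{\frac1n\sum_i(1-z_i)\ip{Q_i-\Sigma',P}}\le\tilde O(\eps)$. Combining, $\bigabs{\pE[\Delta(1-T)]}\le\tilde O(\eps)$; and since $\pE\Delta^2\le\tilde O(\eps)$ and $T^2\le 4\eps^2$ give $\pE\abs{\Delta T}\le\tilde O(\eps^{3/2})$, we conclude $\bigabs{\ip{\hat\Sigma-\hat\Sigma_X,P}}=\bigabs{\pE\Delta}\le\tilde O(\eps)$. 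As $P$ was arbitrary, $\normf{\hat\Sigma-\hat\Sigma_X}\le\tilde O(\eps)$, and with (ii) and the opening reduction this yields $\normf{\Sigma^{-1/2}\hat\Sigma\Sigma^{-1/2}-\Id}\le\tilde O(\eps)$.

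The main obstacle is the left-over fourth-moment step, and specifically the SoS derivation of $\frac1n\sum_i z_i c_i^2\ge(2-\tilde O(\eps))\normf{P}^2$: this is a \emph{moment lower bound for the adversarially-selected subset} $\{i:w_i\Ind(x_i=y_i)=1\}$, and the naive route would demand an SoS certificate that every $(1-2\eps)n$-subset of a Gaussian sample has degree-$2$ moment variance at least $2-\tilde O(\eps)$ — precisely the source of the $d^{O(\log 1/\eps)}$ cost in prior analyses. The clipping identity replaces this by the single lower bound (iii) for the fixed full sample plus the truncated-tail bound (iv), both cheap to establish for i.i.d.\ Gaussian data. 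The secondary technical point that requires care is proving that the mean-correction terms $\delta_i$ truly contribute only at the $\tilde O(\eps)$ scale: here the exact centering $\frac1n\sum_i(x_i-\hat\mu_X)=0$ and the estimate $\pE\Norm{\mu'-\mu}_2^2\le\tilde O(\eps^2)$ are both essential to avoid a spurious $\sqrt{d}$ factor. (As in the remark following \cref{thm:main}, numerical/word-RAM issues are handled exactly as there, and no lower-eigenvalue assumption is needed because $\Sigma\succeq(1-\tilde O(\eps))\Id$.)
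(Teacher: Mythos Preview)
Your proposal differs from the paper's proof in two structural ways, and one of them creates a genuine gap.

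\textbf{Pairing trick vs.\ direct argument.} The paper reduces covariance estimation to a mean-type problem over the $n^2$ paired quantities $X_{ij}=\tfrac12(x_i-x_j)(x_i-x_j)^{\top}$ (with $X'_{ij}$, $Y_{ij}$, and $w_{ij}=w_iw_j$ analogously), and then invokes a generic estimation lemma (\cref{lem:technicallem}) specialized as \cref{lem:technicallem-frob}. Because $X_{ij}$ is translation-invariant, the unknown $\mu'$ disappears and there are no mean-correction terms. You instead work directly with $Q_i=(x'_i-\mu')(x'_i-\mu')^{\top}$ and must control the mean corrections by hand.

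\textbf{Resilience outside $\pE$ vs.\ clipping inside SoS.} The paper's device is to set $a_{ij}=\pE[w_{ij}]\Ind(X_{ij}=Y_{ij})\in[0,1]$ and apply the resilience bounds of \cref{lem:annoying-statements} to these \emph{scalars}; this is legal precisely because after taking $\pE$ the indicator variables become real numbers and resilience is just an inequality in $\R$. Your route stays inside SoS via the termwise inequality $(1-z_i)c_i^2\le\max\{c_i^2-\tau^2,0\}+\tau^2(1-z_i)$, where the first summand is a fixed scalar controlled by your condition (iv). This is a legitimate alternative, and (iii)--(iv) are essentially repackagings of the paper's resilience conditions.

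\textbf{The gap.} Your claim $\pE\Norm{\mu'-\mu}_2^2\le\tilde O(\eps^2)$ is not delivered by the analysis behind \cref{thm:main}. That analysis (see \cref{lem:technicallem}) only yields $\pE\ip{\mu'-\mu_0,v}^2\le O(\eps)\bigl(\pE[v^{\top}\Sigma' v]+v^{\top}\Sigma_0 v\bigr)=O(\eps)\norm{v}_2^2$ per direction; the $\tilde O(\eps)$ conclusion for $\hat\mu=\pE[\mu']$ comes from a separate bound on $\pE[\E_i(1-w'_i)\ip{x'_i-\mu_0,v}]^2$, not from a sharper bound on $\pE\ip{\mu'-\mu_0,v}^2$ itself. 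This matters because the mean-correction term $(\hat\mu_X-\mu')^{\top}P(\hat\mu_X-\mu')$ enters your identity for $\Delta(1-T)$ with coefficient $\tfrac1n\sum_i z_i\approx 1$. With only $\pE\ip{\hat\mu_X-\mu',v}^2\le O(\eps)$ per direction, diagonalizing $P=\sum_k\lambda_k v_kv_k^{\top}$ gives
\[
\Bigl|\pE\bigl[(\hat\mu_X-\mu')^{\top}P(\hat\mu_X-\mu')\bigr]\Bigr|\le\sum_k|\lambda_k|\,\pE\ip{\hat\mu_X-\mu',v_k}^2\le O(\eps)\norm{P}_*\le O(\eps\sqrt d),
\]
a dimension-dependent term that the centering $\sum_i(x_i-\hat\mu_X)=0$ does not touch (that centering only helps the \emph{other} mean-correction summand, the one linear in $x_i-\hat\mu_X$). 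The same issue recurs in the $\delta_i$ cross-terms inside your left-over-fourth-moment step. The paper's pairing trick is exactly what eliminates this obstacle; without it, you would need an independent, dimension-free argument for $\pE\Norm{\mu'-\hat\mu_X}_2^2\le\tilde O(\eps)$ (or better), which you have not supplied and which does not follow from \cref{thm:main}.
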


We note that
\cref{thm:frobnorm} requires that the input distribution has covariance that is close to $\Id$ in spectral norm. This is easily achieved by first running \cref{program:canonical} to derive an estimate $\hat \Sigma_0$ that is close to $\Sigma$ in spectral norm, and then running \cref{program:frobnorm} on inputs that are linearly transformed as $y \mapsto \hat \Sigma_0^{-1/2}y$. After this linear transformation, the new, ``true'' covariance $\Sigma_0^{-1/2} \Sigma \hat \Sigma_0^{-1/2}$ satisfies $(1 - \tilde{O}(\eps)) \Id \preceq \hat \Sigma_0^{-1/2} \Sigma \hat \Sigma_0^{-1/2} \preceq(1 + \tilde{O}(\eps)) \Id$.

We thus obtain the final corollary: 
\begin{corollary}[Mean and Frobenius norm covariance estimation]
\label{cor:finalalg}
There is an SoS-based algorithm that takes as input an $\epsilon$-corrupted sample of size $n$ from a Gaussian distribution with mean $\mu$ and covariance $\Sigma$ and in $\poly(n)$-time, outputs estimates $\hat{\mu} \in \R^d, \hat{\Sigma} \in \R^{d \times d}$ with the following guarantee. If $\Sigma \succeq 2^{-\poly(d)}I$, and $n \geq \tilde{O}(d^2 \log^5(1/\delta)/\eps^2)$, with probability at least $1 - \delta$ over the draw of the original uncorrupted sample $X$, the estimates $\hat{\mu}, \hat{\Sigma}$ satisfy:
\begin{enumerate}[(1)]
\item (Mean estimation) $\Norm{\Sigma^{-1/2}(\hat{\mu} - \mu)}_2 \leq \tilde{O}(\eps)$, and
\item (Covariance estimation in Frobenius norm) $\norm{\Sigma^{-1/2} \hat{\Sigma} \Sigma^{-1/2} - \Id}_F \leq \tilde{O}(\eps)$.
\end{enumerate}
In particular, $\tvdist(N(\mu, \Sigma), N(\hat{\mu}, \hat{\Sigma})) \leq \tilde{O}(\eps)$.
\end{corollary}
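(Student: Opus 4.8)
The plan is a two-phase reduction that composes \cref{thm:main} and \cref{thm:frobnorm}. \textbf{Phase 1.} Run \cref{program:canonical} on the corrupted sample $Y$; by \cref{thm:main} this returns, in $\poly(n)$ time, a vector $\hat{\mu}$ and a matrix $\hat{\Sigma}_0$ with $\Norm{\Sigma^{-1/2}(\hat{\mu}-\mu)}_2 \le \tilde{O}(\eps)$ and $(1-\tilde{O}(\eps))\Sigma \preceq \hat{\Sigma}_0 \preceq (1+\tilde{O}(\eps))\Sigma$. Output this $\hat{\mu}$, which already yields item~(1). \textbf{Phase 2.} Whiten the data by $y_i \mapsto \hat{\Sigma}_0^{-1/2} y_i$. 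Since $\hat{\Sigma}_0 \succeq (1-\tilde{O}(\eps))\Sigma \succeq 2^{-\poly(d)}\Id$, this map is well-conditioned and computable to $\poly(d)$ bits, so (as in the remark after \cref{thm:main}) the truncation affects the error only negligibly. Because applying a fixed invertible map commutes with replacing $\eps n$ of the points, $\{\hat{\Sigma}_0^{-1/2} y_i\}$ is an $\eps$-corrupted sample from $\cN(\hat{\Sigma}_0^{-1/2}\mu,\Sigma_1)$ with $\Sigma_1 := \hat{\Sigma}_0^{-1/2}\Sigma\hat{\Sigma}_0^{-1/2}$, and the spectral sandwich for $\hat{\Sigma}_0$ rearranges to $(1-\tilde{O}(\eps))\Id \preceq \Sigma_1 \preceq (1+\tilde{O}(\eps))\Id$ — exactly the hypothesis of \cref{thm:frobnorm}. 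Running \cref{program:frobnorm} on the whitened sample therefore returns $\hat{\Sigma}_1$ with $\Norm{\Sigma_1^{-1/2}\hat{\Sigma}_1\Sigma_1^{-1/2}-\Id}_F \le \tilde{O}(\eps)$. Output $\hat{\Sigma} := \hat{\Sigma}_0^{1/2}\hat{\Sigma}_1\hat{\Sigma}_0^{1/2}$.

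Next I would check that un-whitening does not inflate the Frobenius error (a careless argument here can cost a spurious $\sqrt{d}$ factor). Writing $M := \Sigma_1^{-1/2}\hat{\Sigma}_1\Sigma_1^{-1/2}$, we have $\hat{\Sigma}_1 = \Sigma_1^{1/2} M \Sigma_1^{1/2}$ and $\Norm{M-\Id}_F \le \tilde{O}(\eps)$, hence $\Sigma^{-1/2}\hat{\Sigma}\Sigma^{-1/2} = Q M Q^{\top}$ where $Q := \Sigma^{-1/2}\hat{\Sigma}_0^{1/2}\Sigma_1^{1/2}$. But $QQ^{\top} = \Sigma^{-1/2}\hat{\Sigma}_0^{1/2}\Sigma_1\hat{\Sigma}_0^{1/2}\Sigma^{-1/2} = \Sigma^{-1/2}\Sigma\Sigma^{-1/2} = \Id$, using $\hat{\Sigma}_0^{1/2}\Sigma_1\hat{\Sigma}_0^{1/2} = \Sigma$; so $Q$ is orthogonal and $\Norm{\Sigma^{-1/2}\hat{\Sigma}\Sigma^{-1/2}-\Id}_F = \Norm{Q(M-\Id)Q^{\top}}_F = \Norm{M-\Id}_F \le \tilde{O}(\eps)$, which is item~(2). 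The final bound then follows by plugging items~(1) and~(2) into the standard Gaussian estimate $\tvdist(\cN(\mu_1,\Sigma_1),\cN(\mu_2,\Sigma_2)) \le O(\Norm{\Sigma_1^{-1/2}(\mu_1-\mu_2)}_2) + O(\Norm{\Sigma_1^{-1/2}\Sigma_2\Sigma_1^{-1/2}-\Id}_F)$, valid when the right-hand side is below a small absolute constant.

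The step requiring the most care — and the main (mild) obstacle — is the probabilistic bookkeeping: $\hat{\Sigma}_0$ is a function of the corrupted input, so the whitened clean points $\{\hat{\Sigma}_0^{-1/2}x_i\}$ are not literally an \iid sample independent of the algorithm, whereas \cref{thm:frobnorm} is phrased with probability over a genuine \iid clean sample. The fix is the standard one: the conclusions of \cref{thm:main} and \cref{thm:frobnorm} hold \emph{deterministically} for any clean sample $X$ meeting a fixed list of regularity/concentration conditions; for an \iid Gaussian sample with $n \ge \tilde{O}(d^2\log^5(1/\delta)/\eps^2)$ those conditions hold with probability $\ge 1-\delta$; and they are affine-equivariant, i.e.\ if $X$ is regular for $\cN(\mu,\Sigma)$ then $\{S^{-1/2}x_i\}$ is regular for $\cN(S^{-1/2}\mu,S^{-1/2}\Sigma S^{-1/2})$ for every invertible $S$. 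Hence it suffices to condition on the event that $X$ is regular for $\cN(\mu,\Sigma)$ in the senses required by both theorems (probability $\ge 1-\delta$): the first invocation succeeds directly, and the second succeeds because its hypotheses for $\cN(\hat{\Sigma}_0^{-1/2}\mu,\Sigma_1)$ are implied by equivariance with $S=\hat{\Sigma}_0$. Everything else is the elementary algebra above.
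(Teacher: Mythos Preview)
Your proposal is correct and follows the same two-phase structure as the paper's proof: run \cref{thm:main} to get $\hat{\mu}$ and a spectral estimate $\hat{\Sigma}_0$, whiten by $\hat{\Sigma}_0^{-1/2}$, run \cref{thm:frobnorm}, and un-whiten. Your inline orthogonality calculation for $Q$ is exactly what the paper packages as \cref{prop:basischange}.

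The one substantive difference is in how the dependency between $\hat{\Sigma}_0$ and the Phase~2 sample is handled. The paper sidesteps the issue by sample-splitting: it draws $2n$ points, uses the first half for Phase~1 and a fresh second half for Phase~2, so that the whitened clean points in Phase~2 are genuinely \iid\ Gaussian independent of $\hat{\Sigma}_0$. You instead reuse the same sample and argue---correctly---that the resilience conditions underlying both theorems are deterministic and affine-equivariant, so conditioning once on the good event for $X$ w.r.t.\ $\cN(\mu,\Sigma)$ suffices for both phases regardless of which data-dependent $S$ is used to whiten. Your route is slightly more economical (no factor of two in sample size) and avoids a minor glitch in the paper's version (splitting a $2\eps n$-corrupted set of size $2n$ need not give two $\eps$-corrupted halves; each half could be up to $2\eps$-corrupted). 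The paper's route is marginally simpler to state since it invokes \cref{thm:frobnorm} as a black box without unpacking its proof into ``deterministic given resilience.''
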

Thus, we obtain the same guarantee\footnote{Our formal guarantees are not explicit about $\polylog(1/\eps)$ factors in the error, and as a result, formally speaking our error bounds only match that of \cite{DiakonikolasKK016} up to $\polylog(1/\eps)$ factors. We believe that our argument in fact gets the same $\polylog(1/\eps)$ dependence, as we rely on the same concentration bounds as in \cite{DiakonikolasKK016}, which is where the $\polylog(1/\eps)$ factors arise. But, our proofs currently do not explicitly show this.} on the total variation distance as in \cite{DiakonikolasKK016}.

\subsection{A brief overview of our key idea}
We now give a high level sketch of the key idea used in our proof. First, let's briefly recap the style of analysis in~\cite{KS17,BakshiK20} by focusing on the guarantee for mean estimation. The analysis in these works utilizes the ``proofs to algorithms'' framework of algorithm design via the sum-of-squares method. The polynomial constraints in the our program (\cref{program:canonical}) encodes finding a set $X'$ of samples that intersects the input corrupted sample $Y$ in $(1-\epsilon)n$ points and has $4$th moments upper bounded in terms of the squared $2$nd moments in every direction. The analysis proceeds by using the constraints to derive, via a $O(1)$-degree SoS proof that the error in the so called \emph{Mahalanobis} norm, $\Norm{\Sigma^{-1/2}(\mu'-\mu)}_2 \leq O(\epsilon^{3/4})$. Such an inequality implies that any degree $O(1)$-pseudo-expectation $\pE$ that satisfies the constraints of our program must also satisfy all consequences obtainable via $O(1)$-degree SoS proofs. As a result, the rounded estimate $\hat{\mu} = \pE[\mu']$ also satisfies the inequality above giving us the required guarantee. 

The Mahalanobis error bound of $\epsilon^{3/4}$ here comes from the upper bound on the $4$th moments (in general, we can obtain $\sim \sqrt{t}\epsilon^{1-1/t}$ error bounds by working with upper bounds on $t$-th moments) encoded in our constraints and is polynomially off from the optimal $\tilde{O}(\epsilon)$ bound we intend to achieve when the unknown distribution is Gaussian. In fact, the analysis and the bounds obtained by~\cite{KS17,BakshiK20} are \emph{information-theoretically optimal}: for any $t$, there are two distributions with means that are $\sqrt{t}\epsilon^{1-1/t}$ far, satisfy the $2t$-th moment upper bound condition, and are $\sim \epsilon$-different in total variation distance. In particular, one can take such a pair of distributions and produce corrupted samples that are statistically indistinguishable!

At this point, one might wonder -- how can one hope the \emph{same set of constraints} to yield a tighter guarantee for Gaussians? Indeed, our analysis follows a substantially different path to use the Gaussianity of the underlying input distribution. 

\paragraph{Enter Resilience.} At a high-level, in retrospect, the key property of Gaussians that \cite{DiakonikolasKK016} exploit (which is not satisfied by distributions that constitute the ``hard examples'' above) is a certain mild anti-concentration property (that we call resilience, following~\cite{DBLP:journals/corr/SteinhardtCV17}) inferred from \emph{lower bounds} on moments of subsamples. Specifically, if $X$ is a typical i.i.d.\ sample from a $d$-dimensional, $0$-mean Gaussian distribution of size $n \gg d/\epsilon$, and $S \subseteq X$ is \emph{any} subset of size $(1-\epsilon)n$, then the empirical covariance $\Sigma_S$ of points in $S$ satisfies $\Sigma_S \geq (1-\tilde{O}(\epsilon)) \Sigma_X$. Or, equivalently, that for $\bar{S} = X \setminus S$, it holds that $\Sigma_{\bar{S}} \preceq \polylog(1/\eps) \Sigma_X$. The analysis of~\cite{KS17} can only infer (via Cauchy-Schwarz inequality) the exponentially worse bound of $\Sigma_{\bar{S}} \leq O(1/\sqrt{\epsilon}) \Sigma_X$. 

Crucially, resilience of the covariance \emph{cannot} be inferred from 4th moment upper bounds, such as those encoded by our constraints. It can indeed be inferred from an argument that relies on boundedness of $O(\log 1/\epsilon)$ moments, but if we wanted our sample $X$ to have all of its $\leq O(\log 1/\epsilon)$ moments close to that of the true distribution, we would need $d^{O(\log 1/\epsilon)}$ (in particular, superpolynomially many) samples.%, i.e. superpolynomially many samples. 

A key insight of \cite{DiakonikolasKK016} is the observation that one can prove the resilience of covariance by a simple Hoeffding + union bound for a sample of size $n \sim d/\epsilon$. Notice that Hoeffding's inequality itself relies on subgaussianity of all moments of the distribution but the relevant consequence of it -- namely resilience -- can be ``seen'' in typical samples of size $\sim d/\epsilon$. 

\paragraph{Resilience is likely not efficiently certifiable.} While this is encouraging, using this property within the sum-of-squares framework poses a major issue. Notice that, a priori, verifying that a sample $X$ satisfies resilience requires an exponential search since we need to verify some property for every subset $S$ of size $(1-\epsilon)n$. Indeed, given a sample of size $n$ -- as far as we know, there is no known polynomial time algorithm to output a \emph{certificate} (whether via sum-of-squares or otherwise) of such a property. On the other hand, since the analysis style in~\cite{KS17} involves deriving a bound on the Mahalanobis distance between $\mu'$ and the true mean $\mu$, we would need a low-degree sum-of-squares certificate of resilience in order to plug it into the SoS framework. This is the key technical issue that prevented prior attempts to ``SoSize'' the argument of~\cite{DiakonikolasKK016} for mean (and more generally, covariance estimation) for Gaussians. 

\paragraph{Circumventing certificates by proving ``only in pseudo-expectation".} Our main contribution is an argument that allows us to use resilience without requiring an SoS certificate.  Notice that, though powerful and elegant, obtaining a low-degree sum-of-squares proof of a bound on the Mahalanobis distance $\Norm{\Sigma^{-1/2}(\mu'-\mu)}_2$ is overkill for our purpose! We only need the inequality \emph{after taking pseudo-expectations}. Our key idea, thus, is to directly prove a bound on $\Norm{\Sigma^{-1/2}(\pE[\mu']-\mu)}_2$ without going through low-degree sum-of-squares proofs. 

If, for a second, we pretend that pseudo-expectations are in fact actual probability distributions over solutions $X'$, then this is akin to proving an inequality on the expectation of the solution $X'$ without establishing (the considerably stronger claim) that it holds ``pointwise'' in the support of the distribution. Thus, our idea above can be summarized as attempting to prove a fact ``in pseudo-expectation'' without establishing it pointwise in the support of the ``pseudo-distribution''. 

We show that for the purpose of arguing ``after taking pseudo-expectations'',  we can in fact leverage the resilience bound discussed above. Our final argument thus derives some facts ``within low-degree sum-of-squares proof system'' -- with some technical choices that make the composition with facts ``after taking expectations'' possible. Making this work and extending to covariance estimation in spectral and then Frobenius norms requires some more technical work which, for the purpose of this overview, we omit.

To the best of our knowledge, this is the first example in the SoS proofs to algorithms framework for robust statistics where the difference between facts ``derived via low-degree SoS proofs'' vs ``proved only in pseudo-expectations'' appears to make a significant material difference to the results so obtained. We believe that this style of analysis might come in handy in future applications of the SoS method to robust statistics and more generally, problems in statistical estimation. 

%%%%%%%%%%%%%%%%%%%%%%%%%%%%%%%%%%%%%%%%%%%%%%%%%%%%%%%%%%%%%%%%%%%%%%%%%%%%%%%%
%%%%%%%%%%%%%%%%%%%%%%%%%%%%%%%%%%%%%%%%%%%%%%%%%%%%%%%%%%%%%%%%%%%%%%%%%%%%%%%%
%%%%%%%%%%%%%%%%%%%%%%%%%%%%%%%%%%%%%%%%%%%%%%%%%%%%%%%%%%%%%%%%%%%%%%%%%%%%%%%%
\section{Preliminaries}
\label{sec:sos}
In this section, we give an overview of the sum-of-squares algorithm and state the concentration properties of Gaussians that we need for our results.
\subsection{A crash course in sum-of-squares}
We give a brief overview of the sum-of-squares (SoS) algorithm. For a more in-depth survey, see~\cite{FlemingKP19}. 

The sum-of-squares algorithm works in the standard word RAM model of computation. We assume that all numerical inputs are rational numbers represented as a pair of integers describing the numerator and the denominator. In order to measure the running time of algorithms, we will need to account for the length of the numbers that arise during the run of the algorithm. The following definition captures the size of the representations of the rational numbers:
\begin{definition}[Bit complexity]
The bit complexity of an integer $p \in \Z$ is $1+ \lceil \log_2 p \rceil$. The bit complexity of a rational number $p/q$ where $p,q \in \Z$ is the sum of the bit complexities of $p$ and $q$. 
\end{definition}

We now move to discussing the SoS algorithm. Consider a generic polynomial feasibility problem of the form
\begin{align}
    \qq{find} x \in \R^m \qq{s.t.} f_i(x) \ge 0~~\forall i,\quad g_j(x) = 0~~\forall j \label{eq:csp}
\end{align}
where $f_i$ and $g_j$ are arbitrary polynomial functions of $x$ with rational coefficients of bit complexity $B$, and the total number of constraints is $\poly(m)$. Let $\P_{m,k}$ denote the set of polynomials $p$ in $m$ variables with degree at most $k$. A  degree-$k$ pseudo-expectation is an object that mimics a real expectation $\E$ for low-degree polynomials, and is defined as follows.
\begin{definition}[Degree-$k$ pseudo-expectation]
A degree-$k$ pseudo-expectation ($k$ even) over $m$ variables is a linear operator $\pE \colon \P_{m,k} \to \R$ satisfying:
\begin{enumerate}
    \item (\emph{Normalization}) $\pE[1] = 1$, and
    \item (\emph{PSDness}) $\pE[p^2] \geq 0$ for all $p \in \P_{m, k/2}$.
\end{enumerate} 
\end{definition}
We say that the PSDness condition is satisfied with error $\tau$ if $\pE[p^2] \geq -\tau \norm{p}_2^2$ for each $p \in \P_{m, k/2}$, where $\norm{p}_2$ is the $\ell_2$-norm of the vector of coefficients of $p$.

We now define what it means for $\pE$ to (approximately) satisfy constraints.
\begin{definition}[Satisfying constraints]
For a polynomial $g$, we say that a degree-$k$ $\pE$ satisfies the constraint $\{g = 0\}$ exactly if for every polynomial $p$ of degree $\leq k-\deg(g)$, $\pE[p g_j] = 0$ and $\tau$-approximately if $|\pE[p g_j]| \leq \tau \norm{p}_2$. We say that $\pE$ satisfies the constraint $\{g \geq 0\}$ exactly if for every polynomial $p$ of degree $\leq k/2 - \deg(g)/2$, it holds that $\pE[p^2 g] \geq 0$ and $\tau$-approximately if $\pE[p^2 g] \geq -\tau \norm{p}_2^2$.  
\end{definition}
We note that in the above two definitions, the requirements on the degree
of the polynomial is such that $\pE$ is well-defined, e.g., $\pE[p g_j]$ is only well-defined when $\deg(p g_j) \leq k$. 

For intuition, it is helpful to observe that a pseudo-expectation is a relaxation of the familiar notion of expectations: it may be useful to think of pseudo-expectation as satisfying $\pE[p] = \E_{z \sim D}[p(z)]$ for some distribution $D$ over $\R^m$.
Clearly, if $D$ is a distribution over \emph{feasible} solutions of the constraints in~\eqref{eq:csp}, then $\pE$ satisfies all constraints.

We are now ready to define the {\em sum-of-squares algorithm}.
\begin{fact}[Sum-of-Squares algorithm, \cite{Shor87,parrilo2000structured,Nesterov00,Lasserre01}]
There is an algorithm, the \emph{degree-$k$ sum-of-squares algorithm}, with the following properties: The algorithm takes as input $B \in \N$, $\tau >0$, $k \in \N$, and a problem of the form~\eqref{eq:csp} with $\poly(m)$ constraints, each with rational coefficients of bit complexity $B$. If there is a degree-$k$ pseudo-distribution satisfying~\eqref{eq:csp}, then the algorithm outputs in $\poly(B, \log \frac{1}{\tau}) \cdot m^{O(k)}$ a degree-$k$ pseudo-expectation $\pE$ that $\tau$-approximately satisfies all the constraints in~\eqref{eq:csp}, and otherwise outputs ``infeasible''.
\end{fact}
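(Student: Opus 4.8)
The plan is to recast the existence of a degree-$k$ pseudo-expectation satisfying~\eqref{eq:csp} as a semidefinite feasibility problem, and then solve that problem with the ellipsoid method. A degree-$k$ pseudo-expectation $\pE$ is determined by the finitely many numbers $\pE[x^\alpha]$ over multi-indices $\alpha$ with $|\alpha| \le k$; arranging the values of $\pE$ on products of degree-$\le k/2$ monomials into the \emph{moment matrix} $M$, indexed by monomials of degree $\le k/2$ with $M[\alpha,\beta] = \pE[x^{\alpha+\beta}]$, we get a symmetric matrix of dimension $N := \binom{m+k/2}{k/2} = m^{O(k)}$. The PSDness axiom $\pE[p^2] \ge 0$ for $p \in \P_{m,k/2}$ is exactly $M \succeq 0$; normalization is a single affine constraint; and each equality constraint $\{g_j = 0\}$ (resp.\ inequality constraint $\{f_i \ge 0\}$) of~\eqref{eq:csp} translates into $m^{O(k)}$ affine constraints on the entries of $M$ (resp.\ the PSDness of an associated localizing matrix of dimension $m^{O(k)}$ whose entries are affine in those of $M$). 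Collecting everything yields an SDP feasibility problem in $m^{O(k)}$ real variables with $m^{O(k)}$ affine constraints and $\poly(m)$ PSD blocks, all of whose coefficients have bit complexity $\poly(B,k,\log m)$, whose feasible points are in bijection with the degree-$k$ pseudo-distributions of~\eqref{eq:csp}.

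Next I would run the ellipsoid method on this SDP. The only nontrivial ingredient is a weak separation oracle: given a trial point $M$, test the affine constraints directly in polynomial time, and to test $M \succeq 0$ (and the localizing PSD blocks), attempt an approximate eigendecomposition; if an eigenvalue is noticeably negative, the corresponding rank-one matrix $vv^{\top}$ gives the separating hyperplane $\langle vv^{\top}, M\rangle \ge 0$. Each oracle call and each ellipsoid update is polynomial-time rational arithmetic on numbers of bit length $\poly(N,B,\log(1/\tau))$, and the number of iterations is controlled by the standard volume-shrinking argument once one fixes an a-priori outer radius $R$ for the search region and argues that, after replacing ``exact satisfaction'' by ``$\tau$-approximate satisfaction'', the feasible region (if nonempty) contains a Euclidean ball of radius $r$ with $\log(R/r) = \poly(N,B,\log(1/\tau))$. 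The method then halts after $\poly(N,B,\log(1/\tau))$ steps either with a $\tau$-approximately feasible $M$, from which we read off the desired $\pE$, or with a certificate that the current ellipsoid has fallen below the volume threshold, which we report as ``infeasible''. Multiplying the per-step cost by the iteration count gives the claimed running time $\poly(B,\log\tfrac1\tau)\cdot m^{O(k)}$.

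The technical heart — and the reason this is quoted as a known fact rather than proved from scratch — is the numerical bookkeeping hidden in the previous paragraph: one must exhibit the outer radius $R$ with $\log R = \poly(B,k,\log m)$ (this is where one either assumes that~\eqref{eq:csp} includes an explicit ball constraint $\sum_i x_i^2 \le \rho$, or otherwise establishes a priori magnitude bounds on the entries of a normalized pseudo-expectation), and one must show that the $\tau$-relaxed feasible region, if nonempty, contains a ball of non-negligible radius. This second point is essential because the exact SoS feasible set can easily have empty interior — e.g.\ whenever there are equality constraints, or the PSD constraint is forced to be tight — so the ellipsoid method's volume argument applies only after this ``fattening''; this is precisely why the conclusion guarantees only $\tau$-approximate satisfaction and why the running time carries the $\poly(\log(1/\tau))$ factor. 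Once these two quantitative bounds are in hand, the rest is the classical analysis of the ellipsoid method for semidefinite programming (Gr\"otschel--Lov\'asz--Schrijver), and the stated guarantees follow.
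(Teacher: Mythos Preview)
The paper does not prove this statement at all: it is stated as a \emph{Fact} with citations to the original literature (Shor, Parrilo, Nesterov, Lasserre) and is treated as a black box throughout. So there is no ``paper's own proof'' to compare against.

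That said, your outline is the standard and correct way this fact is established: identify a degree-$k$ pseudo-expectation with its moment matrix, observe that PSDness of the moment matrix (and localizing matrices for the inequality constraints) together with the affine constraints from normalization and equality constraints gives an SDP of size $m^{O(k)}$, and solve it approximately via the ellipsoid method. You also correctly flag the two genuinely delicate points --- the a priori outer-radius bound and the ``fattening'' argument showing the $\tau$-relaxed feasible set contains a ball --- and correctly note that these are precisely why the guarantee is only $\tau$-approximate. This is exactly the content behind the cited references, so your proposal is an accurate sketch of the known proof.
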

For the purposes of this paper, we can set $\tau = 2^{-m}$ and $B = \poly(m)$. The ``total error'' that we will incur will be $O(\poly(m,B) 2^{-m}) = O(\poly(m) 2^{-m})$ which is negligible.

We state some basic known facts about the pseudo-expectations that we use below.
\begin{proposition}[see for e.g., \cite{BarakS16,FlemingKP19}]
\label{lem:pEfacts}
For any degree-$k$ $\pE$, the following Cauchy-Schwarz inequalities hold: 
\begin{enumerate}
    \item For any $p, q \in \P_{m,k/2}$, we have $\pE[pq]^2 \leq \pE[p^2]\pE[q^2]$.
    \item For any $p_1, \dots, p_n, q_1, \dots, q_n \in \P_{m,k/2}$ and distribution $\mathcal D$ over $[n]$, $\pE$ satisfies the polynomial inequality $\E_{i \sim \mathcal D}[p_i q_i]^2 \leq \E_{i \sim \mathcal D}[p_i^2]\E_{i \sim \mathcal D}[q_i^2]$. In particular, $(p_1 + p_2 + p_3)^2 \leq  3(p_1^2 + p_2^2 + p_3^2)$.
\end{enumerate}
\end{proposition}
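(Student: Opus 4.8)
The plan is to prove the two parts separately, using only the two defining axioms of a degree-$k$ pseudo-expectation --- normalization and (approximate) PSDness --- together with one elementary fact about real quadratics for Part~1 and the Lagrange identity for Part~2.

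For Part~1, fix $p,q \in \P_{m,k/2}$ and consider the real-valued function $\phi(\lambda) := \pE\bigl[(\lambda p - q)^2\bigr] = \lambda^2\,\pE[p^2] - 2\lambda\,\pE[pq] + \pE[q^2]$. This is a genuine quadratic in the real variable $\lambda$; it is well-defined because $\lambda p - q \in \P_{m,k/2}$ and hence $(\lambda p - q)^2 \in \P_{m,k}$, and it is nonnegative for every $\lambda \in \R$ by PSDness. The one nontrivial step is the classical fact that a real quadratic $a\lambda^2 + b\lambda + c$ with $a \ge 0$ that is nonnegative on all of $\R$ satisfies $b^2 \le 4ac$ (if $a>0$ this is the discriminant condition; if $a=0$ nonnegativity forces $b=0$ and the inequality reads $0 \le 4\cdot 0\cdot c$). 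Applying it with $a=\pE[p^2]$, $b=-2\pE[pq]$, $c=\pE[q^2]$ gives $4\,\pE[pq]^2 \le 4\,\pE[p^2]\,\pE[q^2]$, which is the claim. If one wants to track the $\tau$-approximate PSDness of the actual solver output, the same argument run with $\phi(\lambda) \ge -\tau\norm{\lambda p - q}_2^2$ yields the inequality up to an additive term that is a fixed polynomial in $\norm{p}_2,\norm{q}_2$ times $\tau$; with $\tau = 2^{-m}$ this sits inside the $O(\poly(m)2^{-m})$ slack already budgeted in the main text.

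For Part~2 the cleanest route is to exhibit an explicit sum-of-squares certificate rather than to argue via the bilinear form. Set $\lambda_i := \cD(i) \ge 0$; then Lagrange's identity is the polynomial identity
\[
  \Bigl(\sum_{i} \lambda_i p_i^2\Bigr)\Bigl(\sum_{i}\lambda_i q_i^2\Bigr) - \Bigl(\sum_{i}\lambda_i p_i q_i\Bigr)^2 \;=\; \sum_{i<j} \lambda_i\lambda_j\,(p_i q_j - p_j q_i)^2 ,
\]
whose right-hand side is a nonnegative combination of squares since $\lambda_i\lambda_j \ge 0$. Hence $\E_{i\sim\cD}[p_i^2]\,\E_{i\sim\cD}[q_i^2] - \E_{i\sim\cD}[p_i q_i]^2$ is a sum of squares, so $\pE\bigl[r^2\cdot(\E_i[p_i^2]\E_i[q_i^2] - \E_i[p_iq_i]^2)\bigr] \ge 0$ for every $r$ of admissible degree; that is exactly what it means for $\pE$ to satisfy the stated polynomial inequality (in all our applications the $p_i,q_i$ have $O(1)$ degree, so the degrees are comfortably within range). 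The displayed special case is the instance $\cD = \mathrm{Unif}\{1,2,3\}$ with $q_1 = q_2 = q_3 = 1$: there $\E_i[p_i q_i] = \tfrac13(p_1+p_2+p_3)$, $\E_i[p_i^2] = \tfrac13(p_1^2+p_2^2+p_3^2)$, $\E_i[q_i^2] = 1$, and the identity rearranges to $(p_1+p_2+p_3)^2 \le 3(p_1^2+p_2^2+p_3^2)$.

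I do not expect a genuine obstacle here: both parts are standard and the cited references already contain them. The only point that needs a little care is degree bookkeeping --- checking in Part~1 that $(\lambda p - q)^2$ has degree at most $k$ so that PSDness applies, and observing in Part~2 that the Lagrange identity is a formal polynomial identity whose transfer to $\pE$ is only meaningful for $p_i,q_i$ of low enough degree.
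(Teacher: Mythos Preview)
Your proof is correct; the paper does not actually prove this proposition but cites it as a standard fact from \cite{BarakS16,FlemingKP19}. Your arguments---the discriminant/quadratic-in-$\lambda$ method for Part~1 and Lagrange's identity for Part~2---are exactly the standard proofs found in those references, and your caveat about degree bookkeeping in Part~2 is the right thing to flag (the statement as written is slightly loose about degrees, but the applications in the paper only use low-degree $p_i, q_i$).
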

\begin{definition}[SoS proofs of non-negativity]
\label{def:sosproof}
Let $h : \R^d \to \R$ be a polynomial. We say that $h$ {\em has a degree-$\ell$ SoS proof of nonnegativity} if $h = \sum_{i=1}^r h_i^2$ for some polynomials $h_i \in \P_{d, \ell/2}$. 
\end{definition}

%%%%%%%%%%%%%%%%%%%%%%%%%%%%%%%%%%%%%%%%%%%%%%%%%%%%%%%%%%%%%%%%%%%%%%%%%%%%%%%%
%%%%%%%%%%%%%%%%%%%%%%%%%%%%%%%%%%%%%%%%%%%%%%%%%%%%%%%%%%%%%%%%%%%%%%%%%%%%%%%%
\subsection{Resilience and certifiable subgaussianity of Gaussian moments}
\label{sec:gaussianproperties}
We now give a brief overview of the key properties of Gaussian moments that we use.

Our algorithm relies on concentration bounds of Gaussians from~\cite{DiakonikolasKK016}, which prove resilience of the first $4$ moments of the Gaussian distribution. We state the bounds for the first two moments below.
\begin{lemma}[Resilience of first and second moments; Lemmas~4.4, 4.3 in \cite{DiakonikolasKK016}]
\label{lem:resiliencemean}
Let $x_1,\dots, x_n \sim \cN(0, \Id_{d \times d})$, and $n \ge 
O((d+\log(1/\delta))/\eps^2)$.
Then with probability $1-\delta$, for all $v \in \R^d$ and vectors $a \in [0, 1]^n$ such that $\E_i a_i \ge 1 - \eps$, we have
\begin{flalign*}
&\bigabs{\E_i a_i \ip{x_i, v}} \leq \tilde{O}(\eps) \norm{v}_2 \enspace, \qq{and} \\
&\bigabs{\E_i a_i [ \ip{x_i, v}^2 - \norm{v}_2^2]} \leq \tilde{O}(\eps) \norm{v}_2^2 \enspace.
\end{flalign*}
\end{lemma}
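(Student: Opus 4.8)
The plan is to reduce the lemma to three standard high-probability events about the sample $X=\{x_1,\dots,x_n\}$ and then obtain both inequalities by short, deterministic manipulations. By homogeneity of both inequalities in $v$ I may assume $\norm{v}_2=1$ throughout. Writing $b:=\1-a\in[0,1]^n$, the hypothesis $\E_i a_i\ge 1-\eps$ becomes $\E_i b_i\le\eps$, and for any function $f$ we have $\E_i a_i f(x_i)=\E_i f(x_i)-\E_i b_i f(x_i)$.

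The three events I would condition on, each holding with probability $\ge 1-\delta/3$ once $n\ge O((d+\log(1/\delta))/\eps^2)$, are: (A) $\opnorm{\frac1n\sum_i x_ix_i^{\top}-\Id}\le\tilde{O}(\eps)$; (B) $\norm{\frac1n\sum_i x_i}_2\le\tilde{O}(\eps)$; and (C) $\max_{S\subseteq[n],\,\card S=\lceil\eps n\rceil}\frac1n\lambda_{\max}\!\big(\sum_{i\in S}x_ix_i^{\top}\big)\le\tilde{O}(\eps)$. Event (A) is the usual spectral concentration of the empirical covariance of $\gtrsim d/\eps^2$ standard Gaussians (matrix Bernstein), and (B) follows from $\frac1n\sum_i x_i\sim\cN(0,\tfrac1n\Id)$, so that $\norm{\frac1n\sum_i x_i}_2^2\sim\tfrac1n\chi^2_d\le O((d+\log(1/\delta))/n)\le O(\eps^2)$.

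Granting (A)--(C), the two bounds follow quickly. The key subroutine is: for any $b\in[0,1]^n$ with $\E_i b_i\le\eps$ and any unit $v$, the functional $b\mapsto\E_i b_i\ip{x_i,v}^2=\frac1n v^{\top}\!\big(\sum_i b_i x_ix_i^{\top}\big)v$ is linear with nonnegative coefficients, hence (fractional knapsack) is maximized by greedily placing weight $1$ on the $\lceil\eps n\rceil$ largest values of $\ip{x_i,v}^2$; therefore $\E_i b_i\ip{x_i,v}^2\le\max_{\card S=\lceil\eps n\rceil}\frac1n\sum_{i\in S}\ip{x_i,v}^2\le\max_{\card S=\lceil\eps n\rceil}\frac1n\lambda_{\max}(\sum_{i\in S}x_ix_i^{\top})\le\tilde{O}(\eps)$ by (C), uniformly over $v$ and $b$. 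For the first moment, $\E_i a_i\ip{x_i,v}=\E_i\ip{x_i,v}-\E_i b_i\ip{x_i,v}$; the first term is at most $\norm{\frac1n\sum_i x_i}_2=\tilde{O}(\eps)$ by (B), and Cauchy--Schwarz (using $b_i\ge0$) gives $\bigabs{\E_i b_i\ip{x_i,v}}\le(\E_i b_i)^{1/2}(\E_i b_i\ip{x_i,v}^2)^{1/2}\le\eps^{1/2}\cdot\tilde{O}(\eps)^{1/2}=\tilde{O}(\eps)$. For the second moment, $\E_i a_i[\ip{x_i,v}^2-1]=(\E_i\ip{x_i,v}^2-1)-\E_i b_i\ip{x_i,v}^2+\E_i b_i$; the first term equals $v^{\top}(\frac1n\sum_i x_ix_i^{\top}-\Id)v$, of absolute value $\le\tilde{O}(\eps)$ by (A); the middle term lies in $[0,\tilde{O}(\eps)]$ by the subroutine; and $0\le\E_i b_i\le\eps$. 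Summing, $\bigabs{\E_i a_i[\ip{x_i,v}^2-1]}\le\tilde{O}(\eps)$.

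The main obstacle is establishing (C): this is the only place the Gaussian tails are genuinely used, and it is where the hidden $\log(1/\eps)$ in $\tilde{O}$ becomes necessary --- taking $S$ to be the $\eps n$ samples with largest $\ip{x_i,e_1}^2$ already forces the left side of (C) to be $\Omega(\eps\log(1/\eps))$. I would prove (C) by a union bound over all $\binom{n}{\lceil\eps n\rceil}\le e^{O(\eps n\log(1/\eps))}$ subsets $S$: for fixed $S$, a $\tfrac12$-net $\cN$ of the unit sphere ($\card{\cN}\le 5^d$) reduces $\lambda_{\max}(\sum_{i\in S}x_ix_i^{\top})$ to $2\max_{v_0\in\cN}\sum_{i\in S}\ip{x_i,v_0}^2$, and for fixed $v_0$ the random variable $\sum_{i\in S}\ip{x_i,v_0}^2\sim\chi^2_{\lceil\eps n\rceil}$, so $\Pr[\sum_{i\in S}\ip{x_i,v_0}^2> C\eps n\log(1/\eps)]\le e^{-\Omega(C\eps n\log(1/\eps))}$ as soon as $C\log(1/\eps)\ge1$. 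A union bound over $S$ and $\cN$ then gives failure probability $e^{O(\eps n\log(1/\eps))+O(d)-\Omega(C\eps n\log(1/\eps))}$, which is $\le\delta/3$ for a large enough absolute constant $C$ \emph{precisely because} $n\ge\Omega((d+\log(1/\delta))/\eps^2)$ forces $\eps n\log(1/\eps)$ to dominate both $d$ and $\log(1/\delta)$. The one delicate point is exactly this bookkeeping: the subset-union entropy is $\Theta(\eps n\log(1/\eps))$, which forces the $\chi^2$ tail to be invoked a factor $\Theta(\log(1/\eps))$ above its mean --- this is what produces $\tilde{O}(\eps)$ rather than $O(\eps)$, and why $n\gtrsim d/\eps^2$ is the right sample size.
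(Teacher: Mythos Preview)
The paper does not prove this lemma: it is quoted as Lemmas~4.4 and~4.3 of \cite{DiakonikolasKK016} and used as a black box throughout (the appendix in \cref{sec:concproof} only proves \cref{lem:resiliencecov} and \cref{lem:annoying-statements}). So there is no in-paper proof to compare against.

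That said, your argument is a correct self-contained proof. The decomposition $a=\1-b$ together with the fractional-knapsack observation that $\sup_{b}\E_i b_i\ip{x_i,v}^2$ is attained on the worst $\lceil\eps n\rceil$-subset is exactly the right reduction, and your event~(C) with its union bound over $\binom{n}{\lceil\eps n\rceil}$ subsets and the $\chi^2$ tail invoked at level $\Theta(\eps n\log(1/\eps))$ is precisely the computation that produces the $\tilde O(\eps)$ (as you correctly identify, the subset entropy $\Theta(\eps n\log(1/\eps))$ forces the extra $\log(1/\eps)$ factor). Two minor quibbles: for a $\tfrac12$-net and a PSD quadratic form the net-to-sup constant is $4$ rather than $2$ (argue via $\norm{M^{1/2}v}$ and the operator-norm net lemma); and ``matrix Bernstein'' is not the right citation for~(A) since the summands are unbounded---the standard route is subgaussian covariance concentration via a net and $\chi^2$ tails, which holds at the stated sample size. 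Neither point affects correctness.
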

To see the importance of \cref{lem:resiliencemean}, we note that, when combined with Proposition~2 in~\cite{SteinhardtCV18}, \cref{lem:resiliencemean} immediately yields an exponential time algorithm to robustly estimate the mean $\mu$ of a Gaussian $\cN(\mu, \Sigma)$ with \emph{known} covariance $\Sigma$, i.e., output $\hat{\mu}$ satisfying (1) in \cref{thm:main}.

The second resilience property we need is an upgrade of the resilience property of the second moment in \cref{lem:resiliencemean}, as well as the resilience of the fourth moment.
\begin{lemma}[Stronger resilience of second and resilience of fourth moments]
\label{lem:resiliencecov}
Let $x_1,\dots, x_n \sim \cN(0, \Id_{d \times d})$, and $n \ge \tilde O(d^2 \log^5(1/\delta)/\eps^2)$. Then with probability $1-\delta$, for all symmetric $P \in \R^{d \times d}$ and vectors $a \in [0, 1]^n$ such that $\E_i a_i \ge 1 - \eps$, we have
\begin{flalign*}
&\bigabs{\E_{i}a_i \ip{x_i x_i^\top - \Id, P}} \leq \tilde{O}(\eps) \cdot \norm{P}_F \enspace, \qq{and}  \\
&\bigabs{\E_{i} a_i \qty\big[\ip{x_i x_i^\top - \Id, P}^2 - 2\norm{ P }_F^2]} \leq \tilde{O}(\eps) \cdot \norm{ P }_F^2 \enspace.
\end{flalign*}
\end{lemma}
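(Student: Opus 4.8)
The plan is to reduce both inequalities, after homogenization, to a one-dimensional resilience statement about a single sub-exponential random variable, and then run this over a net of matrices $P$. By homogeneity of both inequalities in $P$ (the left sides are degree $1$, resp.\ degree $2$, in $P$, and the right sides are $\norm{P}_F$, resp.\ $\norm{P}_F^2$), I would first reduce to $\norm{P}_F = 1$. For symmetric $P$ with $\norm{P}_F = 1$, set $q_P(x) := \ip{xx^\top - \Id, P} = x^\top P x - \tr P$; for $x \sim \cN(0,\Id)$ this is a mean-zero degree-$2$ polynomial with $\E q_P(x)^2 = \norm{P}_F^2 + \ip{P, P^\top} = 2$, so by Gaussian hypercontractivity $\norm{q_P}_{\psi_1} = O(1)$ and hence $\norm{q_P^2}_{\psi_{1/2}} = O(1)$. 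The two claims then assert precisely that, for $f_P \in \set{q_P, q_P^2}$ (with $\E q_P = 0$ and $\E q_P^2 = 2$),
\[
\sup_{P = P^\top,\,\norm{P}_F = 1}\ \ \sup_{a \in [0,1]^n,\,\E_i a_i \ge 1 - \eps}\ \ \Abs{\E_i a_i\, f_P(x_i) - \E f_P}\ \le\ \tilde{O}(\eps).
\]

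\emph{Step 1 (resilience from truncated moments).} I would isolate the elementary fact that for any $f$ with $\mu := \E_{\cN} f$, any $T > 0$, and $\bar f := \E_i f(x_i)$, the identity $\E_i a_i f(x_i) - \bar f = -\E_i(1 - a_i)(f(x_i) - \bar f)$ with $1 - a_i \in [0,1]$, $\E_i(1 - a_i) \le \eps$ gives
\[
\sup_{a\,:\,\E_i a_i \ge 1 - \eps}\Abs{\E_i a_i f(x_i) - \mu}\ \le\ \abs{\bar f - \mu}\ +\ \eps T\ +\ \E_i \abs{f(x_i) - \bar f}\cdot\Ind\set{\abs{f(x_i) - \bar f} > T}.
\]
Applying this to $f = q_P$ ($\mu = 0$, $T = T_1 := C\log(1/\eps)$) and to $f = q_P^2$ ($\mu = 2$, $T = T_2 := C\log^2(1/\eps)$), it suffices to establish, simultaneously over all symmetric unit-Frobenius $P$: (i) $\abs{\overline{q_P}}, \abs{\overline{q_P^2} - 2} \le \tilde{O}(\eps)$; and (ii) $\E_i \abs{q_P(x_i)}\Ind\set{\abs{q_P(x_i)} > T_1} \le \tilde{O}(\eps)$ and $\E_i \abs{q_P(x_i)^2 - 2}\Ind\set{\abs{q_P(x_i)^2 - 2} > T_2} \le \tilde{O}(\eps)$. (Centering the indicators at $\bar f$ rather than $\mu$ is harmless once (i) holds, since then $\abs{\bar f - \mu} \le \tilde{O}(\eps) \ll T_j$.)

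\emph{Step 2 (fixed $P$, then a net).} For a single $P$, (i) is a Bernstein bound for the sub-exponential $q_P$ and the sub-Weibull $q_P^2 - 2$; (ii) holds because at $T_1, T_2$ the population truncated means are $\le \eps/2$ (by the $\psi_1$, resp.\ $\psi_{1/2}$, tails) and the empirical truncated means concentrate around them --- for the $\psi_{1/2}$ case, by truncating further at a $\polylog(n/\delta)$ level, union-bounding that no sample exceeds it, and using $\E_i q_P(x_i)^2 = O(1)$ (itself a $\psi_{1/4}$ concentration for $q_P^4$) to bound the discarded mass. Each of these fails, for a fixed $P$, with probability $\exp\bigl(-\Omega\bigl(n\eps^2 / \polylog(n, 1/\delta, 1/\eps)\bigr)\bigr)$. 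I would then union-bound over a $\gamma$-net $\mathcal P_\gamma$ of the unit-Frobenius sphere of symmetric $d \times d$ matrices, $\log\abs{\mathcal P_\gamma} = \binom{d+1}{2}\log(3/\gamma)$, and pass off the net via $\abs{q_P(x_i) - q_{P_0}(x_i)} \le (\norm{x_i}_2^2 + \sqrt d)\norm{P - P_0}_F$ together with $\max_{i \in [n]}\norm{x_i}_2^2 \le \tilde{O}(d + \log(n/\delta))$; taking $\gamma = \eps / \poly(d, \log(1/\delta))$ makes the transition error negligible while multiplying $\log\abs{\mathcal P_\gamma}$ only by $O(\log(d/\eps))$. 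Demanding $n\eps^2 / \polylog(n, 1/\delta, 1/\eps) \gtrsim \binom{d+1}{2}\log(d/\eps) + \log(1/\delta)$ then yields $n \ge \tilde{O}(d^2 \log^5(1/\delta)/\eps^2)$, and Step 1 converts this into the two stated bounds.

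\emph{Main obstacle.} The delicate part is the polylogarithmic accounting in Step 2: because $q_P^2$ is only $\psi_{1/2}$-integrable and $q_P^4$ only $\psi_{1/4}$-integrable, the plain sub-exponential Bernstein inequality does not apply and one must verify that every truncation level entering the argument is $\polylog(n, 1/\delta, 1/\eps)$ --- this is exactly what keeps the per-$P$ failure probability small enough to survive a union bound over a net of cardinality $\exp(\Theta(d^2\log(d/\eps)))$ with only $n = \tilde{O}(d^2/\eps^2)$ samples, and it is the source of the $\log^5(1/\delta)$ factor. The $d^2$ in the sample complexity is not an artifact: it is already the threshold at which the empirical covariance concentrates in Frobenius norm (the $\abs{\overline{q_P}}$ term), so a cheaper route such as a rank-$1$ net (which would lose a $\sqrt d$ and recover only \cref{lem:resiliencemean}) cannot work. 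Equivalently, Steps 1--2 can be repackaged as vector-mean resilience for the flattened outer products $\mathrm{vec}(x_i x_i^\top) \in \R^{\binom{d+1}{2}}$, whose projection on any unit direction is exactly $q_P$.
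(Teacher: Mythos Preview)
Your proposal is sound, but the paper's proof is much shorter because it outsources all of the concentration work to \cite{DiakonikolasKK016}. Specifically, the first inequality is quoted directly as Corollary~4.8 there; for the second, the paper reduces by convexity to $a\in\{0,1\}^n$ with support $S$, splits
\[
\Bigabs{\E_i a_i\bigl[\ip{x_ix_i^\top-\Id,P}^2 - 2\norm{P}_F^2\bigr]}\ \le\ \Bigabs{\E_{i\in[n]}\ip{x_ix_i^\top-\Id,P}^2 - 2\norm{P}_F^2}\ +\ \eps\,\Bigabs{\E_{i\in[n]\setminus S}\ip{x_ix_i^\top-\Id,P}^2},
\]
and bounds the two pieces by Lemma~5.17 (full-sample fourth-moment concentration) and Lemma~5.21 (tail contribution of any $\eps$-subset) of \cite{DiakonikolasKK016}. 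Your Step~1 decomposition is structurally the same idea --- ``full empirical mean concentration'' plus ``worst $\eps$-fraction contribution, bounded via truncation at level $T$'' --- but you then establish both pieces from scratch with sub-Weibull Bernstein inequalities and a net over the unit-Frobenius sphere of symmetric matrices, tracking the $\psi_{1/2}$ and $\psi_{1/4}$ polylogs by hand. Your route is self-contained and makes explicit where the $d^2$ and the $\log^5(1/\delta)$ come from; the paper's route buys brevity at the cost of depending on the internals of \cite{DiakonikolasKK016}. One small slip worth fixing: your ``identity'' $\E_i a_i f(x_i)-\bar f=-\E_i(1-a_i)(f(x_i)-\bar f)$ is off by the additive term $\bar f\,\E_i(1-a_i)$, but since this is at most $\eps(\abs{\mu}+\abs{\bar f-\mu})$ it is absorbed harmlessly into your final $\tilde O(\eps)$ bound.
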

\cref{lem:resiliencecov} follows from Corollary~4.8, Lemma~5.17 and Lemma~5.21 of~\cite{DiakonikolasKK016}; we include a short proof for completeness in \cref{sec:concproof}.

We will also need slightly different forms of the above resilience results. We now state the results in the form that we need, and postpone the proof to \cref{sec:concproof}. The proofs are a straightforward (but somewhat tedious) consequence of the above results from~\cite{DiakonikolasKK016}.

\begin{lemma}
\label{lem:annoying-statements}
\MYstore{lem:annoying-statements}{
Let $x_1, \dots, x_n \sim \cN(\mu, \Sigma)$ for $\mu \in \R^d$ and $\Sigma \in \R^{d \times d}$ positive definite. Let $n$ be as in \cref{lem:resiliencecov}, and $\mu_0 = \E_i x_i$ and  $\Sigma_0 = \E_i (x_i - \mu_0) (x_i - \mu_0)^\top$ be the sample mean and covariance respectively. Let $X_{ij} = \frac{1}{2}(x_i - x_j)(x_i - x_j)^\top$, and let $a_{ij} \in [0,1]$ for $i, j \in [n]$ and $a_i$ for $i \in [n]$ be such that
\begin{enumerate}[(1)]
\item $a_{ij} = a_{ji}$ for all $i,j$,
\item $\E_{ij} a_{ij} \geq 1 - 4 \eps$, and 
\item $ \E_j a_{ij} \ge a_i (1 - 2\eps)$ for all $i$, and $a_{ij} \le a_i$ for all $i$ and $j$.
\end{enumerate}
Then, with probability $1 - O(\delta)$, for all $v \in \R^d$ and symmetric $P \in \R^{d \times d}$, we have
\begin{enumerate}
\item $\displaystyle \abs{\ip{\mu - \mu_0, v}} \leq \tilde{O}(\eps) \sqrt{v^{\top} \Sigma v} \enspace,$
\item$\displaystyle \bigabs{\E_i a_i \ip{x_i - \mu_0, v}} \leq \tilde{O}(\eps) \cdot \sqrt{v^{\top} \Sigma_0 v} \enspace,$
\item$\displaystyle \bigabs{\E_i a_i [ \ip{x_i - \mu_0, v}^2 - v^\top \Sigma_0 v]} \leq \tilde{O}(\eps) \cdot v^{\top} \Sigma_0 v \enspace.$
\item$\displaystyle \bigabs{\ip{\Sigma_0 - \Sigma, P}} \leq \tilde{O}(\eps) \norm{\Sigma^{1/2} P \Sigma^{1/2}}_F \enspace,$
\item$\displaystyle \bigabs{\E_{i} \qty\big[\ip{(x_i - \mu_0)(x_i - \mu_0)^\top - \Sigma_0, P}^2 - 2\norm{\Sigma^{1/2} P \Sigma^{1/2}}_F^2]} \leq \tilde{O}(\eps) \cdot \norm{\Sigma^{1/2} P \Sigma^{1/2}}_F^2 \enspace, $
\item$\displaystyle \bigabs{\E_{ij}a_{ij} \ip{X_{ij} - \Sigma_0, P}} \leq \tilde{O}(\eps) \cdot \norm{\Sigma^{1/2} P \Sigma^{1/2}}_F \enspace, $ and
\item$\displaystyle \bigabs{\E_{ij} a_{ij} \qty\big[\ip{X_{ij} - \Sigma_0, P}^2 - 2\norm{\Sigma^{1/2} P \Sigma^{1/2}}_F^2]} \leq \tilde{O}(\eps) \cdot \norm{\Sigma^{1/2} P \Sigma^{1/2}}_F^2 \enspace.$
\end{enumerate}}\MYload{lem:annoying-statements}
\end{lemma}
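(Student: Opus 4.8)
The plan is to reduce all seven statements to the isotropic case and then feed them into \cref{lem:resiliencemean,lem:resiliencecov}. Apply the whitening map $x_i \mapsto \Sigma^{-1/2}(x_i - \mu)$, whose images are independent samples distributed as $\cN(0,\Id)$; since $v^\top\Sigma v = \snormt{\Sigma^{1/2}v}$ and the quantity $\normf{\Sigma^{1/2}P\Sigma^{1/2}}$ on the right-hand sides becomes an ordinary Frobenius norm after whitening, relabeling $\Sigma^{1/2}v$ as $v$ and $\Sigma^{1/2}P\Sigma^{1/2}$ as $P$ turns every target inequality into a statement about the whitened samples with right-hand side $\tilde O(\eps)\normt{v}$, $\tilde O(\eps)\snormt{v}$, $\tilde O(\eps)\normf{P}$, or $\tilde O(\eps)\normf{P}^2$ (for (2) and (3), which carry $\Sigma_0$ rather than $\Sigma$, one additionally uses the baseline estimate below to pass from $\sqrt{v^\top\Sigma_0 v}$ to $\normt{v}$). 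From now on $x_i$ denotes the whitened samples, $\mu_0 = \E_i x_i$, $\hat x_i = x_i - \mu_0$, and $\Sigma_0 = \E_i \hat x_i\hat x_i^\top$; all estimates will hold with probability $1-O(\delta)$.

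First I would establish the two ``baseline'' facts, which are exactly items (1) and (4). Taking the weight vector $a \equiv 1$ in \cref{lem:resiliencemean,lem:resiliencecov} gives $\normt{\mu_0} \le \tilde O(\eps)$ --- this is (1) --- and $\abs{\ip{\E_i x_i x_i^\top - \Id, P}} \le \tilde O(\eps)\normf{P}$; since $\E_i x_i x_i^\top = \Sigma_0 + \mu_0\mu_0^\top$ and $\normf{\mu_0\mu_0^\top} = \snormt{\mu_0} \le \tilde O(\eps^2)$, the latter yields $\abs{\ip{\Sigma_0 - \Id, P}} \le \tilde O(\eps)\normf{P}$ (this is (4)), and specializing to rank-one $P$, $(1-\tilde O(\eps))\Id \preceq \Sigma_0 \preceq (1+\tilde O(\eps))\Id$.

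For the single-index statements (2), (3), (5) I would expand the relevant polynomial to recenter it at $0$ and renormalize it to $\Id$: write $\hat x_i = x_i - \mu_0$, multiply out, and apply \cref{lem:resiliencemean,lem:resiliencecov} to the leading term (the weight $(a_i)$ satisfies $\E_i a_i \ge 1-O(\eps)$ and $a_i \le 1+O(\eps)$ under the hypotheses, so after a harmless rescaling it is admissible; for (5) the relevant average is unweighted). The leftover terms --- e.g.\ $\E_i a_i\ip{\hat x_i,v}\ip{\mu_0,v}$ for (3), or $2x_i^\top P\mu_0 - \ip{\mu_0\mu_0^\top,P} + \ip{\Sigma_0-\Id,P}$ appearing inside the square for (5) --- are each bounded using the baseline facts together with the Cauchy--Schwarz inequalities of \cref{lem:pEfacts} (applied to genuine expectations), and each carries an extra power of $\eps$, so it is absorbed into the $\tilde O(\eps)$ slack.

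The main obstacle is items (6) and (7). Using $X_{ij} = \tfrac12(\hat x_i - \hat x_j)(\hat x_i - \hat x_j)^\top$, expand $\ip{X_{ij} - \Sigma_0, P} = \tfrac12\ip{\hat x_i\hat x_i^\top - \Sigma_0, P} + \tfrac12\ip{\hat x_j\hat x_j^\top - \Sigma_0, P} - \hat x_i^\top P\hat x_j$. By symmetry (condition (1)) the $(a_{ij})$-average of the first two ``diagonal'' terms collapses to $\E_i b_i\ip{\hat x_i\hat x_i^\top - \Sigma_0, P}$ with $b_i := \E_j a_{ij}$; condition (3) gives $0 \le b_i \le a_i \le 1+O(\eps)$ and condition (2) gives $\E_i b_i \ge 1-4\eps$, so after rescaling $(b_i)$ into $[0,1]$ this is handled as in the previous two steps. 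The genuinely new object is the bilinear ``cross'' term $\E_{ij} a_{ij}\,\hat x_i^\top P\hat x_j$. For this, condition (3) lets one treat, for each fixed $i$, the vector $(a_{ij}/a_i)_j$ as an admissible resilience weight in the $j$-index --- entries in $[0,1]$, mean $\ge 1-2\eps$ --- so \cref{lem:resiliencemean} gives $\bignorm{\E_j a_{ij}\hat x_j}_2 \le \tilde O(\eps)\,b_i$ for every $i$; one then invokes the identity $\sum_i \hat x_i = 0$ to collapse the double sum so that the per-sample norm $\normt{\hat x_i} \approx \sqrt d$ never appears. I expect this collapse to be the crux of the whole lemma: a naive Cauchy--Schwarz on $\E_{ij} a_{ij}\hat x_i^\top P\hat x_j$ loses either a $\sqrt d$ factor (from $\normt{\hat x_i}$) or a $\sqrt\eps$ factor (the crude bound $\sum_{ij}(1-a_{ij})^2 \le \sum_{ij}(1-a_{ij}) = O(\eps n^2)$ only yields $O(\sqrt\eps)$), and obtaining the sharp $\tilde O(\eps)\normf{P}$ bound needs conditions (1)--(3) in full. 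Item (7) then follows by squaring the displayed expansion and averaging against $(a_{ij})$: the diagonal fourth-moment term is handled by the second bound of \cref{lem:resiliencecov} with weight $(b_i)$ (contributing $\normf{P}^2$ up to $\tilde O(\eps)\normf{P}^2$), the term $\E_{ij} a_{ij}(\hat x_i^\top P\hat x_j)^2$ has intended value $\normf{P}^2$ (so the two sum to $2\normf{P}^2$) and is controlled by the same collapse mechanism one level up, and the mixed terms are bounded via Cauchy--Schwarz against the pieces already controlled.
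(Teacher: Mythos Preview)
Your plan is essentially the paper's: whiten to the isotropic case, derive (1) and (4) with the all-ones weight as baselines, handle (2), (3), (5) by expanding and feeding the leading term into \cref{lem:resiliencemean,lem:resiliencecov}, and attack (6), (7) by splitting $\ip{X_{ij}-\Sigma_0,P}$ into diagonal and bilinear pieces and iterating resilience through the row structure of $(a_{ij})$. The only cosmetic difference is that you center at $\mu_0$ throughout (so $\E_i\hat x_i=0$ exactly), whereas the paper centers at the true mean and mops up the resulting $\mu_0$-corrections via a separate cross-term lemma; your choice is arguably cleaner.

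One small correction on the mechanism for the bilinear term $\E_{ij}a_{ij}\,\hat x_i^\top P\hat x_j$ in (6): the identity $\sum_i\hat x_i=0$ is not what dodges the $\sqrt d$ loss. After the inner resilience gives $g_i:=\E_j a_{ij}\hat x_j$ with $\norm{g_i}_2\le\tilde O(\eps)$, you still face $\E_i\hat x_i^\top Pg_i$ with $g_i$ depending on $i$, so zeroing $\E_i\hat x_i$ does nothing. What actually works is the direct bound
\[
\Bigl|\E_i\hat x_i^\top Pg_i\Bigr|\le\tilde O(\eps)\,\E_i\norm{P\hat x_i}_2\le\tilde O(\eps)\sqrt{\E_i\norm{P\hat x_i}_2^2}=\tilde O(\eps)\sqrt{\ip{\Sigma_0,P^2}}\le\tilde O(\eps)\norm P_F,
\]
using second-moment resilience on $\ip{\hat x_i\hat x_i^\top,P^2}$. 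The place where a ``vanishing-mean'' identity \emph{is} the crux is in the iterated-resilience steps for (7): after one inner application you are left with $\E_i b_i z_i$ for an $i$-dependent coefficient $|b_i|\le\tilde O(\eps)$, and the paper's ``generic technique'' absorbs $b_i$ into a perturbed weight $a'_i:=1-b_i+\E_ib_i$, which is valid precisely because $\abs{\E_i z_i}$ is already $\tilde O(\eps)$-small (e.g.\ $\E_i\ip{\hat x_i\hat x_i^\top-\Sigma_0,P}=0$). This is the second-moment analogue of the identity you invoke.
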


The final property we will need of Gaussians is \emph{certifiable subgaussianity}, which says that certain moment inequalities have low-degree SoS proofs.
\begin{lemma}[Certifiable fourth moments of Gaussian samples, Section~5 in~\cite{KS17}]
\label{lem:certifiable-hypercontractivity}
\MYstore{lem:certifiable-hypercontractivity}{
Let $\eps, \delta > 0$, and $n \ge \tilde O((d \log(1/\delta)/\eps)^2)$. Let $x_1, \dots, x_n \sim \cN(0, \Sigma)$ be samples from a $d$-dimensional Gaussian. Then with probability $1 - \delta$,
\begin{align}
   h(x,v) := (3 + \eps) \ip{v, \Sigma v}^2 - \E_{i \gets [n]} \ip{x_i, v}^4
\end{align}
has a degree-$4$ SoS proof of nonnegativity in $v$ (\cref{def:sosproof}).
}
\MYload{lem:certifiable-hypercontractivity}
\end{lemma}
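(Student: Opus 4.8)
The plan is to reduce to the isotropic case $\Sigma = \Id$ and then split the empirical fourth moment, via an exact algebraic identity, into a ``second-moment'' part and a ``covariance of $z^{\otimes 2}$'' part, each of which is dominated in sum-of-squares using only a spectral-norm concentration bound. Since the $x_i$ are mean-zero and $\Sigma$ is positive definite, substituting $v = \Sigma^{-1/2}u$ gives $\ip{x_i,v} = \ip{z_i,u}$ with $z_i := \Sigma^{-1/2}x_i \sim \cN(0,\Id)$ and $\ip{v,\Sigma v} = \norm{u}_2^2$; since SoS proofs of nonnegativity are preserved under the invertible substitution $u \mapsto \Sigma^{1/2}v$, it suffices to give a degree-$4$ SoS proof of nonnegativity (in $u$) of $(3+\eps)\norm{u}_2^4 - \frac1n\sum_i \ip{z_i,u}^4$. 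Let $\Sigma_0 := \frac1n\sum_i z_iz_i^\top$; applying the elementary identity $\frac1n\sum_i t_i^2 = (\frac1n\sum_i t_i)^2 + \frac1n\sum_i(t_i - \frac1n\sum_j t_j)^2$ to $t_i = \ip{z_i,u}^2$, whose empirical mean is $u^\top\Sigma_0 u$, yields the polynomial identity
\[
\tfrac1n\sum_i \ip{z_i,u}^4 \;=\; (u^\top\Sigma_0 u)^2 \;+\; \tfrac1n\sum_i\big(\ip{z_i,u}^2 - u^\top\Sigma_0 u\big)^2 .
\]
Writing $3+\eps = (1+\tfrac\eps2) + (2+\tfrac\eps2)$, it is then enough to exhibit degree-$4$ SoS proofs of $(u^\top\Sigma_0 u)^2 \le (1+\tfrac\eps2)\norm{u}_2^4$ and of $\frac1n\sum_i(\ip{z_i,u}^2 - u^\top\Sigma_0 u)^2 \le (2+\tfrac\eps2)\norm{u}_2^4$, and add them.

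For the first inequality, standard concentration of an empirical Gaussian covariance gives $\opnorm{\Sigma_0 - \Id} \le c\eps$, hence $\opnorm{\Sigma_0}^2 \le 1+\tfrac\eps2$, once $n \ge \tilde{O}(d\log(1/\delta)/\eps^2)$. Then $u^\top(\opnorm{\Sigma_0}\Id - \Sigma_0)u$ and $u^\top(\opnorm{\Sigma_0}\Id + \Sigma_0)u$ are SoS quadratics (being $u^\top Pu$ with $P \succeq 0$), so their product $\opnorm{\Sigma_0}^2\norm{u}_2^4 - (u^\top\Sigma_0 u)^2$ is an SoS quartic, and adding the SoS quartic $(1+\tfrac\eps2 - \opnorm{\Sigma_0}^2)\norm{u}_2^4$ gives $(1+\tfrac\eps2)\norm{u}_2^4 - (u^\top\Sigma_0 u)^2$.

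For the second inequality, put $y_i := \mathrm{vec}(z_iz_i^\top) \in \R^{d^2}$, so that $\ip{z_i,u}^2 - u^\top\Sigma_0 u = (u^{\otimes 2})^\top(y_i - \bar y)$ with $\bar y := \frac1n\sum_i y_i = \mathrm{vec}(\Sigma_0)$, and
\[
\tfrac1n\sum_i\big(\ip{z_i,u}^2 - u^\top\Sigma_0 u\big)^2 \;=\; (u^{\otimes 2})^\top \widehat C\, (u^{\otimes 2}), \qquad \widehat C := \tfrac1n\sum_i (y_i - \bar y)(y_i - \bar y)^\top \succeq 0 ,
\]
the empirical covariance of $\{y_i\}$. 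Its population counterpart is $\mathrm{Cov}(z^{\otimes 2}) = \E[z^{\otimes 2}(z^{\otimes 2})^\top] - \mathrm{vec}(\Id)\mathrm{vec}(\Id)^\top = 2\Pi$, twice the orthogonal projector $\Pi$ onto the symmetric subspace $\mathrm{Sym} \subseteq \R^d \otimes \R^d$ --- an operator of norm $2$, isotropic on $\mathrm{Sym}$. Granting the spectral bound $\opnorm{\widehat C - 2\Pi} \le \tfrac\eps2$ (the crux, below), and using $\norm{u}_2^4 = (u^{\otimes 2})^\top\Pi(u^{\otimes 2})$ together with the fact that $\widehat C$ and $\Pi$ are supported on $\mathrm{Sym}$, the matrix $(2+\tfrac\eps2)\Pi - \widehat C$ is PSD and equals $\sum_j w_jw_j^\top$ with each $w_j = \mathrm{vec}(W_j)$, $W_j$ symmetric; hence $(2+\tfrac\eps2)\norm{u}_2^4 - \frac1n\sum_i(\ip{z_i,u}^2 - u^\top\Sigma_0 u)^2 = \sum_j(u^\top W_j u)^2$, a degree-$4$ SoS proof of nonnegativity.

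The main obstacle is thus the concentration $\opnorm{\widehat C - 2\Pi} \le \tfrac\eps2$ with probability $1-\delta$ when $n \ge \tilde{O}((d\log(1/\delta)/\eps)^2)$: covariance concentration for the $\binom{d+1}{2}$-dimensional random vector $z^{\otimes 2}$, whose coordinates are degree-$2$ polynomials in a Gaussian and hence only sub-exponential. I would prove it by truncation: discarding the (with probability $1-\delta$, empty) set of samples with $\norm{z_i}_2^2 > Cd\log(n/\delta)$ reduces the problem to the truth-centered i.i.d.\ sum $\widetilde C := \frac1n\sum_i(y_i - \mathrm{vec}(\Id))(y_i - \mathrm{vec}(\Id))^\top$ --- the discrepancy $\opnorm{\widehat C - \widetilde C} \le \normf{\Sigma_0 - \Id}^2 = O(d^2/n)$ is lower order --- to which matrix Bernstein applies, since the summands are now bounded by $O((d\log(n/\delta))^2)$ in operator norm. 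The key point, and the reason the centering in the algebraic identity is essential, is that after centering $y_i$ at $\mathrm{vec}(\Id)$ the Bernstein variance proxy $\opnorm{\,\E[\,\norm{y - \mathrm{vec}(\Id)}_2^2\,(y-\mathrm{vec}(\Id))(y-\mathrm{vec}(\Id))^\top\,]\,}$ is $O(d^2)$, whereas for the uncentered fourth-moment matrix $\frac1n\sum_i y_iy_i^\top$ it is of order $d^3$ (along the eigenvector $\mathrm{vec}(\Id)/\sqrt d$), which would force $n \ge \Omega(d^3/\eps^2)$. With the $O(d^2)$ proxy one obtains $\opnorm{\widetilde C - 2\Pi} \le \tilde{O}\big(\sqrt{d^2/n} + d^2/n\big) \le \tfrac\eps2$ for the stated $n$. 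This concentration is essentially the content of \cite{KS17}, Section~5; alternatively it follows from \cref{lem:resiliencecov} with all $a_i \equiv 1$, up to replacing its centering $\Id$ by $\Sigma_0$.
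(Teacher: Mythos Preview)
The paper does not prove this lemma itself; it is stated in the preliminaries and attributed to Section~5 of \cite{KS17}, so there is no in-paper argument to compare against. Your proposal is a correct and essentially standard proof: the reduction to $\Sigma=\Id$, the variance decomposition $\tfrac1n\sum_i\ip{z_i,u}^4=(u^\top\Sigma_0u)^2+\tfrac1n\sum_i(\ip{z_i,u}^2-u^\top\Sigma_0u)^2$, and the two spectral-norm bounds (on $\Sigma_0$ and on the empirical covariance $\widehat C$ of $z^{\otimes2}$) are exactly the ingredients one needs to get a degree-$4$ SoS certificate. Your observation that the needed concentration $\opnorm{\widehat C-2\Pi}\le\eps/2$ is the $a_i\equiv1$ case of \cref{lem:resiliencecov} (modulo the harmless recentering $\Id\leftrightarrow\Sigma_0$, which you correctly bound by $\norm{\Sigma_0-\Id}_F^2$) is a nice shortcut within this paper's framework. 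One minor caveat: your reduction assumes $\Sigma$ is invertible; if $\Sigma$ is singular, a.s.\ all $x_i$ lie in $\mathrm{range}(\Sigma)$ and both $\ip{v,\Sigma v}$ and $\ip{x_i,v}$ depend only on the projection of $v$ onto that subspace, so the argument goes through after restricting there.
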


%%%%%%%%%%%%%%%%%%%%%%%%%%%%%%%%%%%%%%%%%%%%%%%%%%%%%%%%%%%%%%%%%%%%%%%%%%%%%%%%
%%%%%%%%%%%%%%%%%%%%%%%%%%%%%%%%%%%%%%%%%%%%%%%%%%%%%%%%%%%%%%%%%%%%%%%%%%%%%%%%
%%%%%%%%%%%%%%%%%%%%%%%%%%%%%%%%%%%%%%%%%%%%%%%%%%%%%%%%%%%%%%%%%%%%%%%%%%%%%%%%
\section{Mean and Covariance Estimation of Gaussians via SoS}
In this section, we prove \cref{thm:main,thm:frobnorm}. First, we prove \cref{thm:main}. Then, we prove \cref{thm:frobnorm} in \cref{sec:frobnorm}. We combine \cref{thm:main,thm:frobnorm} to prove \cref{cor:finalalg} in \cref{sec:finalalg}.

%%%%%%%%%%%%%%%%%%%%%%%%%%%%%%%%%%%%%%%%%%%%%%%%%%%%%%%%%%%%%%%%%%%%%%%%%%%%%%%%
%%%%%%%%%%%%%%%%%%%%%%%%%%%%%%%%%%%%%%%%%%%%%%%%%%%%%%%%%%%%%%%%%%%%%%%%%%%%%%%%
\subsection{Analyzing the canonical SoS program: proof of \cref{thm:main}}
\label{sec:thmmain}

We now prove \cref{thm:main}, restated below.
\restatetheorem{thm:main}
For convenience, we shall assume without loss of generality that $\eps n$ is an integer; this can be done by changing $\eps$ by at most a constant factor.

The canonical degree-$\degfin$ SoS relaxation of \cref{program:canonical} outputs a degree-$\degfin$ pseudo-expectation $\pE$ in the variables $x'_1, \dots, x'_n \in \R^d$, $w_1, \dots, w_n \in \R$, satisfying the constraints of \cref{program:canonical}, if such a $\pE$ exists. The estimates produced by the algorithm are $\hat{\mu} := \pE[\mu']$ and $\hat{\Sigma} := \pE[\Sigma']$.

Let $x_1, \dots, x_n \sim \cN(\mu, \Sigma)$. Let $\mu_0 = \E_i x_i$ be the sample mean, and let $\Sigma_0 = \E_i (x_i - \mu_0)(x_i - \mu_0)^{\top}$ be the sample covariance. Fix $\eps \in (0,1)$, and let $y_1, \dots, y_n$ be an $\eps$-corruption of $x_1, \dots, x_n$.

By \cref{lem:annoying-statements}, with probability $1 - \delta$, the following inequalities hold for any $a_1, \dots, a_n \in [0,1]$ with $\sum_i a_i \geq (1 - 2 \eps)n$ and $v \in \R^d$:
\begin{flalign}
&\abs{\ip{\mu - \mu_0, v}} \leq \tilde{O}(\eps) \sqrt{v^{\top} \Sigma v} \enspace, \label{eq:mean1} \\
&\bigabs{\E_i a_i \ip{x_i - \mu_0, v}} \leq \tilde{O}(\eps) \cdot \sqrt{v^{\top} \Sigma_0 v} \enspace, \label{eq:mean2} \\
 &\bigabs{\E_i a_i [ \ip{x_i - \mu_0, v}^2 - v^\top \Sigma_0 v]} \leq \tilde{O}(\eps) \cdot v^{\top} \Sigma_0 v \enspace. \label{eq:mean3}
\end{flalign}

Next, we let $X_{ij} := \frac{1}{2}(x_i - x_j)(x_i - x_j)^{\top}$, for any $i,j \in [n]$.
Let $T \subseteq [0,1]^{n^2}$ denote the set of $(a_{ij})_{i,j \in [n]}$ such that:
\begin{enumerate}[(1)]
\item $a_{ij} = a_{ji}$ for all $i,j$,
\item $\sum_{i, j = 1}^n a_{ij} \geq (1 - 4 \eps)n$, and 
\item there exist $a_1, \dots, a_n \in [0,1]$ such that $ a_i \ge \E_j a_{ij} \ge a_i (1 - 2\eps)$ for all $i \in [n]$.
\end{enumerate}
By \cref{lem:annoying-statements} (setting $P = vv^{\top}$), with probability $1 - \delta$, the following inequalities hold for any $(a_{ij})_{i,j \in [n]} \in T$:
\begin{flalign}
&\bigabs{v^{\top}(\Sigma_0 - \Sigma)v} \leq \tilde{O}(\eps) v^{\top} \Sigma_0 v \label{eq:spec1} \enspace, \\
&\bigabs{\E_{i} [(\ip{x_i - \mu_0, v}^2 - v^{\top}\Sigma_0 v)^2 - 2(v^{\top} \Sigma_0 v)^2]} \leq \tilde{O}(\eps) \cdot (v^{\top} \Sigma_0 v)^2 \enspace, \label{eq:spec2}\\
&\bigabs{\E_{ij}a_{ij} [v^{\top}(X_{ij} - \Sigma_0) v]} \leq \tilde{O}(\eps) v^{\top} \Sigma_0 v \enspace, \label{eq:spec3} \\
&\bigabs{\E_{ij} a_{ij} [(v^{\top} (X_{ij} - \Sigma_0) v)^2 - 2(v^{\top} \Sigma_0 v)^2]} \leq \tilde{O}(\eps) (v^{\top} \Sigma_0 v)^2 \label{eq:spec4} \enspace.
\end{flalign}
We proceed with the rest of the proof, assuming that the above resilience conditions hold. From this point on, we will no longer need to use the randomness of the $x_i$'s.

%%%%%%%%%%%%%%%%%%%%%%%%%%%%%%%%%%%%%%%%%%%%%%%%%%%%%%%%%%%%%%%%%%%%%%%%%%%%%%%%
\subsubsection{Feasibility} Let us now argue that the constraints in \cref{program:canonical} are feasible. Set $x'_i = x_i$ for each $i \in [n]$, and let $w_i = 1$ if $y_i = x_i$ and $0$ otherwise. Constraints (1), (2), (3) of \cref{program:canonical} are clearly satisfied, so it remains to argue that constraint (4) is satisfied. By \cref{eq:spec2} (with $a_i = 1$ for all $i$) constraint (4) is satisfied. Hence, the constraints in \cref{program:canonical} are feasible. In particular, \cref{program:canonical} will output in $\poly(n)$ time a degree-$\degfin$ pseudo-expectation $\pE$ in the variables $x'_1, \dots, x'_n$, $w_1, \dots, w_n$, satisfying the constraints of \cref{program:canonical}. From this point on, we shall think of the pseudo-expectation $\pE$ as fixed.

In light of the above, we summarize our notation in the box below.
  \begin{mdframed}
  \textbf{Notation:}
\begin{itemize}
\item $\mu, \Sigma$, the true mean/covariance of the Gaussian $\cN(\mu, \Sigma)$
\item $\mu_0, \Sigma_0$, the sample mean/covariance of the true samples $x_1, \dots, x_n$
\item $\mu', \Sigma'$, the SoS variables for the mean/covariance
\item $\hat{\mu} = \pE[\mu'], \hat{\Sigma} = \pE[\Sigma']$, the estimates for the mean/covariance
\item $y_1, \dots, y_n$, the $\eps$-corruption of the true samples $x_1, \dots, x_n$
\item $x'_1, \dots, x'_n$, the SoS variables for the samples
\item $w_1, \dots, w_n$, the SoS variables for the indicators $\1(x'_i = y_i)$
%\item $w'_1, \dots, w'_n$, the SoS variables for the indicators $\1(x'_i = x_i)$
\end{itemize}
  \end{mdframed}
  
%%%%%%%%%%%%%%%%%%%%%%%%%%%%%%%%%%%%%%%%%%%%%%%%%%%%%%%%%%%%%%%%%%%%%%%%%%%%%%%%
\subsubsection{Guarantees for the mean}
We now analyze the estimate $\hat{\mu} := \pE[\mu'] = \pE[\E_i x'_i]$ for the mean $\mu$, where $\pE$ is a degree-$\degfin$ pseudo-expectation satisfying the constraints in \cref{program:canonical}. 
We need to show that $\hat{\mu}$ satisfies $\abs{\ip{\hat{\mu} - \sampmean, v}} \leq \tilde{O}(\eps) \sqrt{v^\top \Sigma v}$.

The key ingredient in the proof is the following lemma, which we prove in \cref{sec:technicallem}.
\begin{lemma}\label{lem:technicallem-mean}
Let $x_1, \dots, x_n \in \R^d$.
Suppose that there is some $\Sigma \in \R^{d \times d}$ such that for all $v \in \R^d$ and $a \in [0,1]^d$ with $\sum_{i = 1}^n a_i \geq (1 - 2\eps)n$, we have
\begin{align}
&\abs{\E_i a_i \ip{x_i - \sampmean, v}} \leq \tilde{O}(\eps) \cdot \sqrt{v^{\top} \Sigma_0 v}\qq{and}
\abs{\E_i a_i [ \ip{x_i - \sampmean, v}^2 - v^{\top} \Sigma_0 v ]} \leq \tilde{O}(\eps) v^{\top} \Sigma_0 v \enspace.
\end{align}
Let $y_1, \dots, y_n$ be any $\eps$-corruption of $x_1, \dots, x_n$, let $\pE$ be a degree-$\deghalf$ pseudo-expectation in the variables $x'_1, \dots, x'_n \in \R^d$ and $w_1, \dots, w_n \in \R$. Let $\mu' = \E_i x'_i$. Suppose that
\begin{enumerate}[(1)]
\item $\pE$ satisfies $w_i^2 = w_i$ for every $i \in [n]$,
\item $\pE$ satisfies $\sum_{i = 1}^n w_i = (1 - \eps) n$,
\item $\pE$ satisfies $w_i x'_i = w_i y_i$ for every $i \in [n]$,
\item $\pE[\E_i \ip{x'_i - \mu', v}^2] \leq v^{\top} \hat{\Sigma} v$ for every $v \in \R^d$
\end{enumerate}
Then, for every $v \in \R^d$, it holds that:
\begin{flalign*}
&\abs{\ip{\hat{\mu} - \sampmean, v}} \leq \tilde{O}(\eps)\sqrt{v^{\top} \Sigma_0 v} + \sqrt{ O(\eps) \cdot v^{\top}(\hat{\Sigma} - \Sigma_0) v + \tilde{O}(\eps^2) v^{\top} (\hat{\Sigma} + \Sigma_0)v} \enspace.
\end{flalign*}
\end{lemma}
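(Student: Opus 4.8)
The plan is to reduce everything to one algebraic identity for $\ip{\mu'-\sampmean, v}$ and then estimate its two pieces separately: one using the resilience hypotheses \emph{only after} applying $\pE$ (where no sum-of-squares certificate of resilience is needed), and the other by ordinary SoS Cauchy--Schwarz followed by a one-line bootstrap. First, since the program and every quantity appearing in the statement is translation-equivariant, I would assume without loss of generality that $\sampmean = 0$, so the hypotheses become $|\E_i a_i\ip{x_i, v}| \le \tilde O(\eps)\sqrt{v^\top \Sigma_0 v}$ and $|\E_i a_i[\ip{x_i,v}^2 - v^\top\Sigma_0 v]| \le \tilde O(\eps)\,v^\top\Sigma_0 v$ for all scalar weights $a\in[0,1]^n$ with $\sum_i a_i \ge (1-2\eps)n$, and the target is a bound on $|\ip{\hat\mu, v}|$; note $v^\top\Sigma_0 v = \E_i\ip{x_i,v}^2$ and $v^\top\hat\Sigma v = \pE[\E_i\ip{x'_i-\mu',v}^2]$. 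Let $S = \{i : y_i = x_i\}$ be the (fixed, unknown) set of uncorrupted indices, so $|[n]\setminus S|\le \eps n$, and set $a_i := 1 - \1(i\in S)w_i$. Using $w_i^2 = w_i$ one checks $0\le a_i\le 1$ and $\bar a := \E_i a_i = 1 - \tfrac1n\sum_{i\in S}w_i \le 2\eps$ hold as low-degree SoS consequences of constraints (1)--(2). Using $w_ix'_i = w_iy_i$ and $y_i = x_i$ on $S$, one obtains the identity (modulo the constraints) $\ip{\mu',v} = \E_i\ip{x'_i - x_i, v} = P - Q$ where $P := \E_i a_i\ip{x'_i, v}$ and $Q := \E_i a_i\ip{x_i,v}$; hence $\ip{\hat\mu,v} = \pE[P] - \pE[Q]$.

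For the term $\pE[Q]$: since $a_i$ is affine in $w$ and the $\ip{x_i,v}$ are scalars, $\pE[Q] = \E_i \bar a_i\ip{x_i,v}$ with $\bar a_i := \pE[a_i] = 1 - \1(i\in S)\pE[w_i]$. From $w_i^2 = w_i$ and PSDness, $\pE[w_i] = \pE[w_i^2]\in[0,1]$, so $\bar a_i\in[0,1]$; and $\sum_i\bar a_i = n - \sum_{i\in S}\pE[w_i] \le \eps n + |[n]\setminus S| \le 2\eps n$, so $(1-\bar a_i)_i$ is an honest scalar weight vector of mass $\ge (1-2\eps)n$. Applying the first-moment resilience hypothesis to it (together with $\E_i\ip{x_i,v}=0$) gives $|\pE[Q]| = |\E_i \bar a_i\ip{x_i,v}| \le \tilde O(\eps)\sqrt{v^\top\Sigma_0 v}$. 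This is the crux of the ``prove only in pseudo-expectation'' idea: we never need an SoS certificate that the polynomial $Q$ is small in $w$ --- only that the averaged weight vector $\bar a = \pE[a]$ obeys resilience.

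For the term $\pE[P]$: by Cauchy--Schwarz for pseudo-expectations (writing $a_i$ as a square modulo $w_i^2 = w_i$), $P^2 \le \bar a\cdot \E_i a_i\ip{x'_i,v}^2 \le 2\eps\,\E_i a_i\ip{x'_i,v}^2$ as SoS facts. Using $w_ix'_i = w_ix_i$ on $S$ once more, $\E_i a_i\ip{x'_i,v}^2 = \E_i\ip{x'_i,v}^2 - \E_i\1(i\in S)w_i\ip{x_i,v}^2 = v^\top\Sigma'v + \ip{\mu',v}^2 - U$ where $U := \E_i\1(i\in S)w_i\ip{x_i,v}^2$; taking $\pE$ and bounding $\pE[U] \ge (1-\tilde O(\eps))v^\top\Sigma_0 v$ via the \emph{second}-moment resilience hypothesis applied to the scalar weights $\1(i\in S)\pE[w_i]$ (mass $\ge (1-2\eps)n$) yields $\pE[P^2] \le 2\eps\big(v^\top(\hat\Sigma - \Sigma_0)v + M_2 + \tilde O(\eps)v^\top\Sigma_0 v\big)$ with $M_2 := \pE[\ip{\mu',v}^2]$. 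Symmetrically $\pE[Q^2] \le 2\eps\,v^\top\Sigma_0 v$ (same Cauchy--Schwarz, using $\E_i\ip{x_i,v}^2 = v^\top\Sigma_0 v$), so $M_2 = \pE[(P-Q)^2] \le 2\pE[P^2] + 2\pE[Q^2]$ is a self-bounding inequality whose solution is $M_2 \le O(\eps)\,v^\top(\hat\Sigma + \Sigma_0)v$. Re-substituting gives $\pE[P^2] \le O(\eps)v^\top(\hat\Sigma-\Sigma_0)v + \tilde O(\eps^2)v^\top(\hat\Sigma+\Sigma_0)v$, and combining $|\pE[P]|\le\sqrt{\pE[P^2]}$ with the previous paragraph yields the claimed bound. (Throughout, one must check that $M_2 - \pE[U] + v^\top\hat\Sigma v\ge 0$ and the analogous nonnegativities hold, which they do because each is a $\pE$ of an SoS polynomial.)

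The main obstacle is the step in the third paragraph: a naive SoS Cauchy--Schwarz on the ``deleted mass times $x'$'' piece only gives a bound $\sqrt{\eps\cdot v^\top\hat\Sigma v}$, i.e.\ an $\eps^{1/2}$ error --- the \cite{KS17} bottleneck --- and one must instead see that this piece is governed by $\sqrt{\eps\cdot v^\top(\hat\Sigma-\Sigma_0)v}$, because the kept true samples already account for $\approx v^\top\Sigma_0 v$ of the second moment (this is the only use of second-moment resilience of the $x_i$, and it is used purely in pseudo-expectation via $\pE[w_i]$). Pushing the error further down to $\eps$ rather than $\eps^{3/4}$ additionally requires that $M_2 = \pE[\ip{\mu',v}^2]$ is itself $O(\eps)$ and not merely $O(1)$, which is exactly the bootstrap afforded by the identity $\ip{\mu',v} = P-Q$. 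Finally, care is needed that every ``pointwise'' manipulation ($w_i^2 = w_i$, $w_i x'_i = w_i x_i$, and each Cauchy--Schwarz) is a genuine degree-$\le \deghalf$ SoS consequence of the constraints, so that it composes legitimately with the resilience bounds that are available only after taking $\pE$.
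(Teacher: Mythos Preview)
Your proof is correct and follows essentially the same architecture as the paper's: your variable $a_i = 1 - \1(i\in S)w_i$ is exactly the paper's $1 - w'_i$, your decomposition $\ip{\mu',v} = P - Q$ is the paper's split of $\pE\E_i(1-w'_i)\ip{x'_i - x_i,v}$ into $\ip{x'_i - \sampmean,v}$ and $\ip{x_i - \sampmean,v}$ pieces, and you bound $Q$ via first-moment resilience applied to the scalar weights $\pE[a_i]$ and $P$ via SoS Cauchy--Schwarz followed by second-moment resilience on $\pE[U]$, just as the paper does. The only noteworthy difference is in the bootstrap for $M_2 = \pE\ip{\mu',v}^2$: the paper expands $x'_i - x_i = (x'_i - \mu') + (\mu' - \sampmean) + (\sampmean - x_i)$ and uses $(a+b+c)^2 \le 3(a^2+b^2+c^2)$ to get a self-bounding inequality directly in terms of $v^\top\hat\Sigma v$ and $v^\top\Sigma_0 v$, whereas you route the same bootstrap through $M_2 \le 2\pE[P^2] + 2\pE[Q^2]$, which already carries the sharper $v^\top(\hat\Sigma - \Sigma_0)v$ term. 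Both arrive at $M_2 \le O(\eps)\,v^\top(\hat\Sigma + \Sigma_0)v$ and then re-substitute. Your nonnegativity check and the degree count (the Cauchy--Schwarz step needs degree~$\deghalf$) are also handled correctly; the paper makes the identical remark about the degree.
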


We now finish the proof, assuming \cref{lem:technicallem-mean}. We first observe that the hypotheses of \cref{lem:technicallem-mean} hold. Indeed, the two resilience conditions of \cref{lem:technicallem-mean} follow by \cref{eq:mean2,eq:mean3}. Second, $\pE$ is a degree-$\degfin$ pseudo-expectation (and so is also degree-$\deghalf$) with the required properties: (1) -- (3) clearly hold, and (4) follows from the definition of $\Sigma'$, as $\hat{\Sigma} = \pE[\Sigma']$.
As the hypotheses of \cref{lem:technicallem-mean} are satisfied, we thus conclude that
\begin{equation}
\abs{\ip{\hat{\mu} - \sampmean, v}} \leq \tilde{O}(\eps)\sqrt{v^{\top} \Sigma_0 v} + \sqrt{ O(\eps) \cdot v^{\top}(\hat{\Sigma} - \Sigma_0) v + \tilde{O}(\eps^2) v^{\top} (\hat{\Sigma} + \Sigma_0)v} \enspace. \label{eq:meancalc}
\end{equation}
Suppose that the estimate for the covariance $\hat{\Sigma}$ satisfies the desired conclusion, i.e., that $\abs{v^{\top}(\hat{\Sigma} - \Sigma) v} \leq \tilde{O}(\eps) v^{\top} \Sigma v$ for all $v \in \R^d$ (we will prove this next). Then, \cref{eq:meancalc,eq:spec1} imply that 
\begin{equation*}
\abs{\ip{\hat{\mu} - \sampmean, v}} \leq \tilde{O}(\eps) \sqrt{v^{\top} \Sigma v} \enspace.
\end{equation*}
Finally, by \cref{eq:mean1}, we conclude that
\begin{equation*}
\abs{\ip{\hat{\mu} - \mu, v}} \leq \tilde{O}(\eps) \sqrt{v^{\top} \Sigma v} \enspace,
\end{equation*}
assuming that $\hat{\Sigma}$ satisfies its desired property. By choosing $v$ appropriately, this implies (1) in \cref{thm:main}.

%%%%%%%%%%%%%%%%%%%%%%%%%%%%%%%%%%%%%%%%%%%%%%%%%%%%%%%%%%%%%%%%%%%%%%%%%%%%%%%%
\subsubsection{Spectral guarantees on the covariance} We now analyze the estimate $\hat{\Sigma} := \pE[\Sigma']$ for the covariance, where $\pE$ is a degree-$\degfin$ pseudo-expectation satisfying \cref{program:canonical}. First, we observe that the polynomial $\Sigma' := \E_i (x'_i - \mu')(x'_i - \mu')^{\top}$ is equal to $\E_{ij} X'_{ij}$ where $X'_{ij} := \frac{1}{2}(x'_i - x'_j)(x'_i - x'_j)^{\top}$, and similarly we also have $\Sigma_0 = \E_{ij} X_{ij}$, where $X_{ij} := \frac{1}{2}(x_i - x_j)(x_i - x_j)^{\top}$.

Let $T \subseteq [0,1]^{n^2}$ denote the set of $(a_{ij})_{i,j \in [n]}$ such that:
\begin{enumerate}[(1)]
\item $a_{ij} = a_{ji}$ for all $i,j$,
\item $\sum_{i, j = 1}^n a_{ij} \geq (1 - 4 \eps)n$, and 
\item there exist $a_1, \dots, a_n \in [0,1]$ such that $\E_j a_{ij} \ge a_i (1 - 2\eps)$ for all $i$, and $a_{ij} \le a_i$ for all $i$ and $j$.
\end{enumerate}
The key ingredient here is the following lemma, which is very similar to \cref{lem:technicallem-mean} that appeared in the case of mean estimation.
\begin{lemma}
\label{lem:technicallem-spec}
Let $X_1, \dots, X_{n^2} \in \R^{d \times d}$, and let $\Sigma_0 := \E_{ij} X_{ij}$.
Let $T \subseteq [0,1]^{n^2}$. Suppose that, for all $v \in \R^{d}$ and $a \in T$ such that $\sum_{i,j} a_{ij} \geq (1-4\eps) n^2$, we have
\begin{align}
&\abs{\E_{ij} a_{ij} v^{\top}(X_{ij} - \Sigma_0)v} \leq \tilde{O}(\eps) \cdot v^{\top} \Sigma_0 v \qq{and} \abs{\E_{ij} a_{ij} [ (v^{\top}(X_{i,j} - \Sigma_0)v)^2 - 2(v^{\top} \Sigma_0 v)^2 ]} \leq \tilde{O}(\eps) (v^{\top} \Sigma_0 v)^2 \enspace.
\end{align}
Let $Y_1, \dots, Y_{n^2}$ be any $(2\eps - \eps^2)$-corruption of $X_1, \dots, X_{n^2}$, let $\pE$ be a degree-$\deghalf$ pseudo-expectation in the variables $X'_1, \dots, X'_{n^2} \in \R^{d \times d}$ and $w_1, \dots, w_{n^2} \in \R$. Let $\Sigma' = \E_{ij} X'_{ij}$. Suppose that
\begin{enumerate}[(1)]
\item $\pE$ satisfies $w_{ij}^2 = w_{ij}$ for every $i,j \in [n]$,
\item $\pE$ satisfies $\sum_{i,j = 1}^N w_{ij} = (1 - \eps)^2 n^2$,
\item $\pE$ satisfies $w_{ij} X'_{ij} = w_{ij} Y_{ij}$ for every $i,j \in [n]$,
\item $\pE[\E_{ij} (v^{\top}(X'_{ij} - \Sigma')v)^2] \leq (2 + \tilde{O}(\eps)) \pE[(v^{\top} \Sigma' v)^2]$ for every $ v \in \R^d$, and
\item $a \in T$, where $a$ is the vector with $a_{ij} := \pE[w_{ij}] \1(X_{ij} = Y_{ij})$ for each $i,j \in [n]$.
\end{enumerate}

Then, for every $v \in \R^d$, the following hold:
\begin{flalign*}
&\pE (v^{\top}(\Sigma' - \Sigma_0) v)^2 \leq O(\eps) (\pE (v^{\top} \Sigma' v)^2 + (v^{\top} \Sigma_0 v)^2 ) \enspace,\\
&\abs{\ip{\hat{\Sigma} - \Sigma_0, v}} \leq \tilde{O}(\eps)v^{\top} \Sigma_0 v + \sqrt{ \pE [\E_{ij}[ (1 - w'_{ij}) \cdot v^{\top}(X'_{ij}  - \Sigma_0) v]^2]} \enspace,
\end{flalign*}
where $w'_{ij} := w_{ij} \1(X_{ij} = Y_{ij})$, $\hat{\Sigma} := \pE[\Sigma']$, and
\begin{flalign*}
 &\pE [\E_{ij}[ (1 - w'_{ij}) \cdot v^{\top}(X'_{ij}  - \Sigma_0) v]^2] \leq O(\eps) \cdot (  \pE (v^{\top} \Sigma' v)^2 - (v^{\top} \Sigma_0 v)^2) + \tilde{O}(\eps^2) \cdot (\pE (v^{\top} \Sigma' v)^2 + (v^{\top} \Sigma_0 v)^2) \enspace.
\end{flalign*}
\end{lemma}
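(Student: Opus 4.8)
The plan is to follow the template of the proof of \cref{lem:technicallem-mean}, but to carry out the argument \emph{entirely at the level of pseudo-expectations}, so that the two resilience hypotheses of the lemma (for which there is no low-degree SoS certificate) can still be invoked. Introduce the auxiliary expression $w'_{ij} := w_{ij}\1(X_{ij}=Y_{ij})$; since $\1(X_{ij}=Y_{ij})\in\{0,1\}$ is a constant, hypotheses (1) and (3) imply that $\pE$ satisfies $(w'_{ij})^2 = w'_{ij}$ and $w'_{ij}X'_{ij}=w'_{ij}X_{ij}$, and $\pE[w'_{ij}]$ equals the vector $a_{ij}$ from hypothesis (5). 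Since $Y$ is a $(2\eps-\eps^2)$-corruption and $\sum_{ij}w_{ij}=(1-\eps)^2n^2$, there is a low-degree SoS proof (modulo the constraints) that $\E_{ij}(1-w'_{ij})\le 4\eps$, hence $\sum_{ij}a_{ij}\ge(1-4\eps)n^2$, so the resilience hypotheses apply to $a$. The key step is the decomposition $v^\top(\Sigma'-\Sigma_0)v = A + B$, with $A := \E_{ij}w'_{ij}\,v^\top(X_{ij}-\Sigma_0)v$ and $B := \E_{ij}(1-w'_{ij})\,v^\top(X'_{ij}-\Sigma_0)v$; here the first piece uses $w'_{ij}X'_{ij}=w'_{ij}X_{ij}$ to rewrite the ``good'' part of $\Sigma'-\Sigma_0$ in terms of the true samples, and since $\E_{ij}v^\top(X_{ij}-\Sigma_0)v = 0$ we also have $A = -\E_{ij}(1-w'_{ij})\,v^\top(X_{ij}-\Sigma_0)v$.

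For the first conclusion, $\pE[(A+B)^2]\le 2\pE[A^2]+2\pE[B^2]$. Pseudo-Cauchy--Schwarz (\cref{lem:pEfacts}, applied with the idempotent $1-w'_{ij}$) gives $\pE[A^2]\le \E_{ij}(1-a_{ij})\cdot\E_{ij}(v^\top(X_{ij}-\Sigma_0)v)^2\le 4\eps\cdot(2+\tilde O(\eps))(v^\top\Sigma_0v)^2$, where the second factor is controlled by the fourth-moment resilience hypothesis with weights $\equiv 1$. Similarly $\pE[B^2]\le 4\eps\,\pE[\E_{ij}(v^\top(X'_{ij}-\Sigma_0)v)^2]$, and expanding $X'_{ij}-\Sigma_0=(X'_{ij}-\Sigma')+(\Sigma'-\Sigma_0)$ and using hypothesis (4) gives $\pE[B^2]\le O(\eps)\pE[(v^\top\Sigma'v)^2] + O(\eps)\pE[(v^\top(\Sigma'-\Sigma_0)v)^2]$; moving the last term to the left (valid for small $\eps$) yields the first conclusion. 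For the second conclusion, write $v^\top(\hat\Sigma-\Sigma_0)v=\pE[A]+\pE[B]$: the second-moment resilience hypothesis gives $|\pE[A]|=|\E_{ij}a_{ij}v^\top(X_{ij}-\Sigma_0)v|\le\tilde O(\eps)v^\top\Sigma_0v$, while pseudo-Cauchy--Schwarz and convexity give $(\pE[B])^2\le\pE[B^2]\le\pE[\E_{ij}((1-w'_{ij})v^\top(X'_{ij}-\Sigma_0)v)^2]$, which is exactly the squared term in the statement.

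The third conclusion carries the main difficulty. Using $w'_{ij}X'_{ij}=w'_{ij}X_{ij}$ to convert the $w'$-weighted squares to true quantities, one obtains $\pE[\E_{ij}(1-w'_{ij})(v^\top(X'_{ij}-\Sigma_0)v)^2] = \pE[\E_{ij}(v^\top X'_{ij}v)^2] - 2(v^\top\Sigma_0v)\pE[v^\top\Sigma'v] + (v^\top\Sigma_0v)^2 - \E_{ij}a_{ij}(v^\top(X_{ij}-\Sigma_0)v)^2$. Hypothesis (4) gives $\pE[\E_{ij}(v^\top X'_{ij}v)^2]\le(3+\tilde O(\eps))\pE[(v^\top\Sigma'v)^2]$, the fourth-moment resilience hypothesis gives $\E_{ij}a_{ij}(v^\top(X_{ij}-\Sigma_0)v)^2\ge(2-\tilde O(\eps))(v^\top\Sigma_0v)^2$, and the first conclusion controls $\pE[(v^\top(\Sigma'-\Sigma_0)v)^2]$; after rearranging, the only stray term is a residual multiple of $(v^\top\Sigma_0v)\cdot v^\top(\hat\Sigma-\Sigma_0)v$, which we absorb by substituting the second conclusion ($|v^\top(\hat\Sigma-\Sigma_0)v|\le\tilde O(\eps)v^\top\Sigma_0v+\sqrt{\pE[B^2]}$) and applying AM--GM in the form $\eps\,(v^\top\Sigma_0v)\sqrt{\pE[B^2]}\le\frac14\pE[B^2]+O(\eps^2)(v^\top\Sigma_0v)^2$, so that a constant fraction of $\pE[B^2]$ moves to the left and we are left with a bound of the claimed shape (after, if needed, a short further fixed-point step showing $\pE[(v^\top\Sigma'v)^2]=(1\pm\tilde O(\eps))(v^\top\Sigma_0v)^2$). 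I expect this last self-improving step, together with the bookkeeping that forces the $\tilde O(\eps)$ slack produced by resilience to collapse into $\tilde O(\eps^2)\big(\pE[(v^\top\Sigma'v)^2]+(v^\top\Sigma_0v)^2\big)$ rather than $\tilde O(\eps)$, to be the crux: because only \emph{second-moment} control on the $X'_{ij}$ is available (hypothesis (4)), the naive fourth-moment Cauchy--Schwarz that would directly extract an extra factor $\eps$ from the weight $1-w'_{ij}$ is unavailable, and the missing power of $\eps$ has to come instead from the cancellation between the bulk term $3\,\pE[(v^\top\Sigma'v)^2]$ and the resilience estimate for $\E_{ij}a_{ij}(v^\top(X_{ij}-\Sigma_0)v)^2$.
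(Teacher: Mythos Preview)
Your treatment of the first two conclusions is correct and matches the paper: the paper proves all three lemmas (\cref{lem:technicallem-mean}, \cref{lem:technicallem-spec}, \cref{lem:technicallem-frob}) as instances of a single generic lemma, and the argument there for the first two conclusions is exactly your $A+B$ decomposition, resilience on $\pE[A]$, and SoS Cauchy--Schwarz on $B$.

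For the third conclusion you make it harder than necessary. After the Cauchy--Schwarz step $\pE[B^2]\le 4\eps\,\pE\E_{ij}(1-w'_{ij})(v^\top(X'_{ij}-\Sigma_0)v)^2$, the paper does \emph{not} expand around $\Sigma_0$ as you do; it centers at $\Sigma'$, writing $X'_{ij}-\Sigma_0=(X'_{ij}-\Sigma')+(\Sigma'-\Sigma_0)$. The cross term averages to zero, hypothesis~(4) controls the first square, and resilience lower-bounds the subtracted $w'$-weighted piece, giving directly
\[
\pE\E_{ij}(1-w'_{ij})(v^\top(X'_{ij}-\Sigma_0)v)^2 \;\le\; 2\bigl(\pE[(v^\top\Sigma'v)^2]-(v^\top\Sigma_0v)^2\bigr)+\pE[(v^\top(\Sigma'-\Sigma_0)v)^2]+\tilde O(\eps)(\cdots),
\]
and the last term is $O(\eps)(\cdots)$ by the first conclusion. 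Multiplying by $4\eps$ finishes. Your expansion around $\Sigma_0$ in fact yields the \emph{same} expression, since (with $\alpha=v^\top\Sigma_0v$, $\beta=v^\top\Sigma'v$) one has the identity
\[
3\pE[\beta^2]-2\alpha\pE[\beta]-\alpha^2 \;=\; 2\bigl(\pE[\beta^2]-\alpha^2\bigr)+\pE[(\beta-\alpha)^2],
\]
so you only needed to recognize this and invoke the first conclusion. The detour through the second conclusion plus AM--GM happens to close, but the ``fixed-point step showing $\pE[(v^\top\Sigma'v)^2]=(1\pm\tilde O(\eps))(v^\top\Sigma_0v)^2$'' is both unnecessary and \emph{not available} at the lemma's level of generality: that estimate is precisely what the main theorem derives \emph{after} the lemma, using additional structure (positivity of $v^\top X'_{ij}v$ via the $A_v,B_v$ decomposition) that the lemma does not assume.
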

As before, we postpone the proof of \cref{lem:technicallem-spec} to \cref{sec:technicallem}, and use it to finish the proof.

We apply \cref{lem:technicallem-spec} as follows. First, we note that $\Sigma_0$ defined in \cref{lem:technicallem-spec} is the same as the sample mean $\Sigma_0$. Next, let $T$ be the subset of vectors $(a_{ij})_{i,j \in [n]}$ defined earlier. We see that \cref{eq:spec3,eq:spec4} imply that the $X_{ij}$'s defined satisfy the resilience conditions in \cref{lem:technicallem-spec}.

Now, we let $Y_{ij} = \frac{1}{2}(y_i - y_j)(y_i - y_j)^{\top}$, and let $X'_{ij} = \frac{1}{2}(x'_i - x'_j)(x'_i - x'_j)^{\top}$. We note that $\Sigma'$ as defined in \cref{lem:technicallem-spec} is the same polynomial as $\Sigma'$ defined earlier.
We observe that the $Y_{ij}$'s must be a $(2\eps - \eps^2)$-corruption of the $X_{ij}$'s. Moreover, if we let $w_{ij} := w_i w_j$, then the pseudo-expectation defined by $\pE$ on the polynomials $X'_{ij}$ and $w_{ij}$ is a degree-$\deghalf$ pseudo-expectation, and additionally satisfies properties (1) -- (3). To see that (4) holds, we observe the following polynomial equality:
\begin{flalign*}
\E_{ij} (v^{\top}(X'_{ij} - \Sigma')v)^2] = \frac{1}{2} (\E_i \ip{x'_i - \mu', v}^4 + (v^{\top} \Sigma' v)^2) \enspace.
\end{flalign*}
Combining with constraint (4) in \cref{program:canonical} and taking pseudo-expectations shows that property (4) holds.

Finally, property (5) in \cref{lem:technicallem-spec} holds, as $(a_{ij})_{i,j \in [n]} \in T$ because it satisfies the required properties with respect to the vector $a_i = \pE[w_i] \1(x_i = y_i)$ for each $i$.

Thus, by \cref{lem:technicallem-spec}, we have
\begin{flalign*}
\abs{\ip{\hat{\Sigma} - \Sigma_0, v v^{\top}}} \leq \tilde{O}(\eps) \cdot v^{\top} \Sigma_0 v + \sqrt{R} \enspace,
\end{flalign*}
where
\begin{flalign*}
R := \pE[\E_i[(1 - w'_{ij}) \cdot v^{\top}(X_{ij}  - \Sigma_0) v]^2] \leq O(\eps) \cdot (\pE[(v^{\top} \Sigma' v)^2] - (v^{\top} \Sigma_0 v)^2 ) + \tilde{O}(\eps^2) \cdot (\pE[(v^{\top} \Sigma' v)^2] + (v^{\top} \Sigma_0 v)^2) \enspace.
\end{flalign*}
Write $\Sigma' = A + B$, where $B = \E_{ij} (1 - w'_{ij}) X'_{ij}$ and $A = \E_{ij} w'_{ij} X'_{ij} = \E_{ij} w'_i X_{ij}$; the latter equality holds because the following polynomial equalities are satisfied by $\pE$:
\begin{flalign*}
&w'_{ij} X'_{ij} = w_{i}w_j  \1(x_i = y_i) \1(x_j = y_j) \cdot \frac{1}{2}(x'_i - x'_j)(x'_i - x'_j)^{\top} \\
&=  w_{i}w_j  \1(x_i = y_i) \1(x_j = y_j) \cdot \frac{1}{2}(y_i - y_j)(y_i - y_j)^{\top} = w_{i}w_j  \1(x_i = y_i) \1(x_j = y_j) \cdot \frac{1}{2}(x_i - x_j)(x_i - x_j)^{\top} \enspace.
\end{flalign*}
Additionally, let $A_v := v^{\top} A v$ and $B_v := v^{\top} B v$, and $\Sigma_v = v^{\top} \Sigma_0 v$. We have that
\begin{flalign*}
&\pE[A_v^2] = \pE[(\E_{ij} w'_{ij} v^{\top} X_{ij} v)^2] = \E_{i_1, j_1} \E_{i_2, j_2} \pE[w'_{i_1 j_1} w'_{i_2 j_2}] \cdot v^{\top} X_{i_1 j_1} v \cdot  v^{\top} X_{i_2 j_2} v \\
&\leq \E_{i_1, j_1} \E_{i_2, j_2} \sqrt{\pE[w'_{i_1,j_1}] \pE[w'_{i_2,j_2}]} \cdot v^{\top} X_{i_1 j_1} v \cdot  v^{\top} X_{i_2 j_2} v \ \text{(as $\pE[{w'_{ij}}^2] = \pE[w'_{ij}]$)} \\
&= \pE[\E_{i,j} \sqrt{\pE[w'_{ij}]}  v^{\top} X_{i j} v]^2\leq (1 + \tilde{O}(\eps)) \Sigma_v^2 \ \text{(by \cref{eq:spec3} applied to $a_{ij} = \sqrt{\pE[w'_{ij}]}$)}
\end{flalign*}

We now bound $R$. In this notation, we have
\begin{flalign}
\label{eq:eq1}
 R = \pE[(B_v - \E_{ij}[1 - w'_{ij}] \cdot \Sigma_v)^2] \leq O(\eps) \cdot (\pE[(A_v + B_v)^2] - \Sigma_v^2 ) + \tilde{O}(\eps^2) \cdot (\pE[(A_v + B_v)^2] +  \Sigma_v^2) \enspace.
 \end{flalign}
First, we have
 \begin{flalign}
 \label{eq:eq2}
 \pE[(B_v - \E_{ij}[1 - w'_{ij}] \cdot \Sigma_v)^2] = \pE[B_v^2 + \E_{ij}[1 - w'_{ij}]^2 \cdot \Sigma^2_v - 2 B_v \E_{ij}[1 - w'_{ij}] \cdot \Sigma_v] \geq  \pE[B_v^2] - 4 \eps \Sigma_v \pE[ B_v] \enspace,
 \end{flalign}
 as $\Sigma_v \geq 0$ and $\pE$ satisfies $B_v \geq 0$ because $B_v$ is a sum-of-squares polynomial. As $\pE[A_v^2] \leq (1 + \tilde{O}(\eps)) \Sigma_v^2$ and $\pE[A_v B_v]^2 \leq \pE[A_v^2] \pE[B_v^2]$ (by \cref{lem:pEfacts}), it follows that
 \begin{flalign}
 \label{eq:eq3}
\pE[(A_v + B_v)^2] \leq \pE[B_v^2] + 2\sqrt{\pE[A_v^2] \pE[B_v^2]} + (1 + \tilde{O}(\eps)) \Sigma_v^2 \leq \pE[B_v^2] + 2\Sigma_v \sqrt{\pE[B_v^2]} + (1 + \tilde{O}(\eps)) \Sigma_v^2 \enspace.
 \end{flalign}
Combining \cref{eq:eq1,eq:eq2,eq:eq3} thus yields
 \begin{flalign*}
 &\pE[B_v^2] - 4 \eps \Sigma_v \pE[ B_v] \leq R \leq O(\eps) (\pE[B_v^2] + 2\Sigma_v \sqrt{\pE[B_v^2]} + \tilde{O}(\eps) \Sigma_v^2) +  \tilde{O}(\eps^2) \cdot \Sigma_v^2\enspace.
 \end{flalign*}
 Rearranging, applying $\pE[B_v] \le \sqrt{\pE[B_v^2]}$, and solving for $\pE[B_v^2]$ yields
 \begin{flalign*}
  \pE[B_v^2] &\leq  \tilde{O}(\eps^2) \cdot \Sigma_v^2 \\
\implies R &\leq O(\eps) (\pE[B_v^2] + 2 \Sigma_v  \sqrt{\tilde O(\eps^2) \Sigma_v^2} + \tilde O(\eps) \Sigma_v^2) = \tilde{O}(\eps^2) \Sigma_v^2 \\
 \implies \abs{v^{\top}(\hat{\Sigma} - \Sigma_0)v} &\leq \tilde{O}(\eps) \cdot v^{\top} \Sigma_0 v + \sqrt{R} = \tilde{O}(\eps) \cdot v^{\top} \Sigma_0 v \enspace.
 \end{flalign*}
 This is the desired spectral norm guarantee, only with $\Sigma_0$ in place of $\Sigma$. Using \cref{eq:spec1} and the triangle inequality, we have $\abs{v^{\top}(\hat{\Sigma} - \Sigma) v} \leq \tilde{O}(\eps) v^{\top} \Sigma v$, and so we thus have the desired spectral norm guarantee. This finishes the proof, as we have shown that $\hat{\mu}$ satisfies its desired property, assuming that $\hat{\Sigma}$ has this property.
 
 %%%%%%%%%%%%%%%%%%%%%%%%%%%%%%%%%%%%%%%%%%%%%%%%%%%%%%%%%%%%%%%%%%%%%%%%%%%%%%%%
%%%%%%%%%%%%%%%%%%%%%%%%%%%%%%%%%%%%%%%%%%%%%%%%%%%%%%%%%%%%%%%%%%%%%%%%%%%%%%%%
\subsection{Relative Frobenius guarantees on the covariance: proof of \cref{thm:frobnorm}}
\label{sec:frobnorm}
We now prove \cref{thm:frobnorm}, restated below.
\restatetheorem{thm:frobnorm}
Let $x_1, \dots, x_n \sim \cN(\mu, \Sigma)$, where $(1 - \tilde{O}(\eps)) \Id \preceq \Sigma \preceq (1 + \tilde{O}(\eps)) \Id$. Fix $\eps \in (0,1)$, and let $y_1, \dots, y_n$ be an $\eps$-corruption of $x_1, \dots, x_n$. Let $\mu_0 = \E_i x_i$ be the sample mean, and let $\Sigma_0 = \E_i (x_i - \mu_0)(x_i - \mu_0)^{\top}$ be the sample covariance.

We observe that for every symmetric $P \in \R^{d \times d}$, it holds that
\begin{equation}
\abs{\norm{P}_F - \norm{\Sigma^{1/2} P \Sigma^{1/2}}_F} \leq \tilde{O}(\eps) \min(\norm{P}_F, \norm{\Sigma^{1/2} P \Sigma^{1/2}}_F) \enspace, \label{eq:frob0}
\end{equation}
as $(1 - \tilde{O}(\eps))\Id \preceq \Sigma \preceq (1 + \tilde{O}(\eps))\Id$, using the standard inequality $\norm{A B}_F \leq \norm{A}_2 \norm{B}_F$.

For each $i,j \in [n]$, let $X_{ij} := \frac{1}{2}(x_i - x_j)(x_i - x_j)^{\top}$.
Let $T \subseteq [0,1]^{n^2}$ denote the set of $(a_{ij})_{i,j \in [n]}$ such that:
\begin{enumerate}[(1)]
\item $a_{ij} = a_{ji}$ for all $i,j$,
\item $\sum_{i, j = 1}^n a_{ij} \geq (1 - 4 \eps)n$, and 
\item there exist $a_1, \dots, a_n \in [0,1]$ such that $\E_j a_{ij} \ge a_i (1 - 2\eps)$ for all $i$, and $a_{ij} \le a_i$ for all $i$ and $j$.
\end{enumerate}
By \cref{lem:annoying-statements}, with probability $1 - \delta$ the following hold for any $(a_{ij}) \in T$ and symmetric $P \in \R^{d \times d}$:
\begin{flalign}
&\bigabs{\ip{\Sigma_0 - \Sigma, P}} \leq \tilde{O}(\eps) \norm{\Sigma^{1/2} P \Sigma^{1/2}}_F \label{eq:frob1} \enspace, \\
&\bigabs{\E_{i} \qty\big[\ip{(x_i - \mu_0)(x_i - \mu_0)^\top - \Sigma_0, P}^2 - 2\norm{\Sigma^{1/2} P \Sigma^{1/2}}_F^2]} \leq \tilde{O}(\eps) \cdot \norm{\Sigma^{1/2} P \Sigma^{1/2}}_F^2 \enspace, \label{eq:frob2}\\
&\bigabs{\E_{ij}a_{ij} [ \ip{X_{ij}, P} - \ip{\Sigma_0, P}]} \leq \tilde{O}(\eps) \cdot \norm{\Sigma^{1/2} P \Sigma^{1/2}}_F \enspace, \label{eq:frob3} \\
&\bigabs{\E_{ij} a_{ij} \qty\big[\ip{X_{ij} - \Sigma_0, P}^2 - 2\norm{\Sigma^{1/2} P \Sigma^{1/2}}_F^2]} \leq \tilde{O}(\eps) \cdot \norm{\Sigma^{1/2} P \Sigma^{1/2}}_F^2 \label{eq:frob4} \enspace.
\end{flalign}
From the above, feasibility is simple: set $x'_i = x_i$ for all $i$, $w_i = \1(x_i = y_i)$, and observe that now $\mu' = \mu_0$, $\Sigma' = \Sigma_0$, and constraint (4) in \cref{program:frobnorm} is satisfied by \cref{eq:frob2} as $\norm{\Sigma^{1/2} P \Sigma^{1/2}}_F \leq (1 + \tilde{O}(\eps))\norm{P}_F$. Thus, \cref{program:frobnorm} will output in $\poly(n)$ time a degree-$\degfin$ pseudo-expectation $\pE$ satisfying the constraints in \cref{program:frobnorm}.

\parhead{Covariance estimation in Frobenius norm.} We now analyze the output $\hat{\Sigma} := \pE[\Sigma']$ of the algorithm. We observe that $\E_{ij} X_{ij}$ is equal to the sample covariance $\Sigma_0$. Let $Y_{ij} := \frac{1}{2}(y_{i} - y_j)(y_i - y_j)^{\top}$, and let $X'_{ij} := \frac{1}{2}(x'_i - x'_j)(x'_i - x'_j)^{\top}$. Similarly, we have that the SoS variable $\Sigma' := \E_i (x'_i - \mu')(x'_i - \mu')^{\top}$ is equal to $\E_{ij} X'_{ij}$.

The key ingredient in the proof is the following technical lemma, which we prove in \cref{sec:technicallem}. This lemma is similar to \cref{lem:technicallem-mean,lem:technicallem-spec}.
\begin{lemma}
\label{lem:technicallem-frob}
Let $X_1, \dots, X_{n^2} \in \R^{d \times d}$, and let $\Sigma_0 := \E_{ij} X_{ij}$.
Let $T \subseteq [0,1]^{n^2}$. Suppose that, for all symmetric $P \in \R^{d \times d}$ and $a \in T$, we have
\begin{align}
&\abs{\E_{ij} a_{ij} \ip{X_{ij} - \Sigma_0, P}} \leq \tilde{O}(\eps) \cdot \norm{P}_F \qq{and}
\abs{\E_{ij} a_{ij} [ \ip{X_{ij} - \Sigma_0, P}^2 - 2\norm{P}_F^2]} \leq \tilde{O}(\eps) \norm{P}_F^2 \enspace.
\end{align}

Let $Y_1, \dots, Y_{n^2}$ be any $(2\eps - \eps^2)$-corruption of $X_1, \dots, X_{n^2}$, and let $\pE$ be a degree-$\deghalf$ pseudo-expectation in the variables $X'_1, \dots, X'_{n^2} \in \R^{d \times d}$ and $w_1, \dots, w_{n^2} \in \R$. Let $\Sigma' = \E_{ij} X'_{i,j}$. Suppose that
\begin{enumerate}[(1)]
\item $\pE$ satisfies $w_{ij}^2 = w_{ij}$ for every $i,j \in [n]$,
\item $\pE$ satisfies $\sum_{i,j = 1}^n w_{ij} = (1 - \eps)^2 n^2$,
\item $\pE$ satisfies $w_{ij} X'_{ij} = w_{ij} Y_{ij}$ for every $i,j \in [n]$,
\item $\pE[\E_{ij} \ip{X'_{ij} - \Sigma', v}^2] \leq (2 + \tilde{O}(\eps)) \norm{P}_F^2$ for every symmetric $P \in \R^{d \times d}$, and
\item $a \in T$, where $a$ is the vector with $a_{ij} := \pE[w_{ij}] \1(X_{ij} = Y_{ij})$ for each $i,j \in [n]$.
\end{enumerate}

Then, for every symmetric $P \in \R^{d \times d}$, it holds that
\begin{flalign*}
&\abs{\ip{\hat{\Sigma} - \Sigma_0, P}} \leq \tilde{O}(\eps)\norm{P}_F \enspace,
\end{flalign*}
where $\hat{\Sigma} = \pE[\Sigma']$.
\end{lemma}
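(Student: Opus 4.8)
The analysis parallels the proof of \cref{lem:technicallem-spec}, but is shorter because the right-hand side $\norm{P}_F^2$ of constraint (4) here is a \emph{constant} rather than the SoS variable $(v^\top \Sigma' v)^2$. Write $c_{ij} := \1(X_{ij} = Y_{ij}) \in \{0,1\}$, $w'_{ij} := c_{ij} w_{ij}$, $q_{ij} := \ip{X_{ij} - \Sigma_0, P}$ (a constant), and $Q_{ij} := \ip{X'_{ij} - \Sigma_0, P}$ (degree $1$ in the SoS variables); note $a_{ij} = \pE[w'_{ij}]$ is the weight vector of hypothesis (5). First I would record the identities forced by constraints (1)--(3): from $w_{ij} X'_{ij} = w_{ij} Y_{ij}$ and $c_{ij} Y_{ij} = c_{ij} X_{ij}$ one gets $w'_{ij} X'_{ij} = w'_{ij} X_{ij}$, hence $\Sigma' - \Sigma_0 = \E_{ij}(1 - w'_{ij})(X'_{ij} - X_{ij})$, and taking $\ip{\cdot, P}$, then $\pE$, and using $\E_{ij}\ip{X_{ij} - \Sigma_0, P} = 0$,
\[
\ip{\hat\Sigma - \Sigma_0, P} = \E_{ij} a_{ij} q_{ij} + N, \qquad N := \pE\bigl[\E_{ij}(1 - w'_{ij}) Q_{ij}\bigr].
\]
The term $\E_{ij} a_{ij} q_{ij}$ is $\tilde O(\eps) \norm{P}_F$ by the first resilience hypothesis (since $a \in T$), so it remains to bound $|N|$.

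The linchpin (paralleling the treatment of the corrupted mass in \cref{lem:technicallem-spec}) is that $U := \E_{ij}(1 - w'_{ij})$ satisfies $U \le 4\eps$ \emph{as a polynomial inequality satisfied by $\pE$}, not merely in pseudo-expectation: combining the size constraint $\sum_{ij} w_{ij} = (1-\eps)^2 n^2$ with $\sum_{ij}(1 - c_{ij}) \le (2\eps - \eps^2) n^2$ gives $U = (2\eps - \eps^2) + \tfrac1{n^2}\sum_{ij:\, c_{ij}=0} w_{ij}$, and each $1 - w_{ij} \ge 0$ under $\pE$. Next I would prove the budget inequality $\pE[\E_{ij}(1-w'_{ij})Q_{ij}^2] \le \tilde O(\eps)\norm{P}_F^2 + S$, where $S := \pE[\ip{\Sigma' - \Sigma_0, P}^2]$: split $\E_{ij} Q_{ij}^2 = \E_{ij} w'_{ij} Q_{ij}^2 + \E_{ij}(1-w'_{ij})Q_{ij}^2$; the identity above plus Booleanity gives $w'_{ij} Q_{ij}^2 = w'_{ij} q_{ij}^2$, so $\pE[\E_{ij} w'_{ij} Q_{ij}^2] = \E_{ij} a_{ij} q_{ij}^2 \ge (2 - \tilde O(\eps))\norm{P}_F^2$ by the second resilience hypothesis, while $\pE[\E_{ij} Q_{ij}^2] = \pE[\E_{ij}\ip{X'_{ij} - \Sigma', P}^2] + S \le (2 + \tilde O(\eps))\norm{P}_F^2 + S$ by constraint (4). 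One also records $\pE[\E_{ij}(1-w'_{ij}) q_{ij}^2] = \E_{ij}(1 - a_{ij}) q_{ij}^2 \le \tilde O(\eps)\norm{P}_F^2$ (second resilience for the all-ones vector and for $a$). Now Cauchy--Schwarz in the ``measure'' $1 - w'$ — i.e. $(\E_{ij}(1-w'_{ij})Q_{ij})^2 \le (\E_{ij}(1-w'_{ij})^2)(\E_{ij}(1-w'_{ij})^2 Q_{ij}^2)$, then Booleanity to drop the squares — yields $N^2 \le \pE[U \cdot \E_{ij}(1-w'_{ij})Q_{ij}^2] \le 4\eps\, \pE[\E_{ij}(1-w'_{ij})Q_{ij}^2] \le 4\eps(\tilde O(\eps)\norm{P}_F^2 + S)$, the middle step using $U \le 4\eps$ under $\pE$ and $\E_{ij}(1-w'_{ij})Q_{ij}^2 \ge 0$ under $\pE$.

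It remains to bound $S$. Writing $\ip{\Sigma' - \Sigma_0, P} = \E_{ij}(1-w'_{ij})(Q_{ij} - q_{ij})$ and applying the same weighted Cauchy--Schwarz and $U \le 4\eps$ gives $S \le 4\eps\, \pE[\E_{ij}(1-w'_{ij})(Q_{ij} - q_{ij})^2]$; expanding the square, controlling the $Q^2$-part by the budget inequality, the $q^2$-part by the bound just mentioned, and the cross term by pseudo-Cauchy--Schwarz together with $\E_{ij}(1-w'_{ij}) q_{ij}^2 \le \E_{ij} q_{ij}^2 \le (2+\tilde O(\eps))\norm{P}_F^2$ under $\pE$, one arrives at an inequality of the form $S \le 4\eps\bigl(\sqrt{S + \tilde O(\eps)\norm{P}_F^2} + \tilde O(\sqrt\eps)\norm{P}_F\bigr)^2$. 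This is quadratic in $\sqrt{S + \tilde O(\eps)\norm{P}_F^2}$ and forces $S \le \tilde O(\eps)\norm{P}_F^2$; plugging back gives $N^2 \le \tilde O(\eps^2)\norm{P}_F^2$, hence $|\ip{\hat\Sigma - \Sigma_0, P}| \le \tilde O(\eps)\norm{P}_F$.

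The step I expect to be the main obstacle is the self-improving estimate for $S$, and more broadly making rigorous each ``inequality under $\pE$'' (for instance ``$\pE[U \cdot g] \le 4\eps\, \pE[g]$ whenever $g \ge 0$ under $\pE$'') by certifying the relevant polynomials as combinations of products of SoS polynomials with the Booleanity axioms $w_{ij}^2 = w_{ij}$ while keeping every degree within $\deghalf$. The bookkeeping must be organized so that the $4\eps$ factor beats the Cauchy--Schwarz losses and the final error is $\tilde O(\eps)$ rather than a weaker power such as $O(\sqrt\eps)$; the decisive enabler is precisely the \emph{pointwise} bound $U \le 4\eps$ forced by the size constraint. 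Verifying that $a$ and the all-ones vector lie in $T$ is routine.
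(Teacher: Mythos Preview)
Your proposal is correct and follows essentially the same route as the paper: the paper proves \cref{lem:technicallem-frob} by specializing the generic estimation lemma (\cref{lem:technicallem}) with the constant ``variance'' $V(\cdot,P)=2\norm{P}_F^2$, and the key point---exactly as you note---is that because $V$ is constant the difference $\pE V(\Sigma',P)-V(\Sigma_0,P)$ vanishes, collapsing the error to $\tilde O(\eps^2)\norm{P}_F^2$. Your decomposition, the budget inequality, the use of $U\le 4\eps$ under $\pE$, and the weighted Cauchy--Schwarz $(\E_{ij}(1-w'_{ij})Q_{ij})^2\le U\cdot\E_{ij}(1-w'_{ij})Q_{ij}^2$ all match the paper line by line; the only stylistic difference is that the paper bounds $S=\pE\ip{\Sigma'-\Sigma_0,P}^2$ directly via the three-term split $X'_{ij}-X_{ij}=(X'_{ij}-\Sigma')+(\Sigma'-\Sigma_0)+(\Sigma_0-X_{ij})$ and triangle inequality (giving $S\le O(\eps)\norm{P}_F^2$ in one step), whereas you feed the budget inequality back into $S$ and self-improve---both work and give the same final bound.
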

We now apply \cref{lem:technicallem-frob}. We observe that by \cref{eq:frob3,eq:frob4}, and using \cref{eq:frob0}, the resilience condition in \cref{lem:technicallem-frob} is satisfied by the $X_{ij}$'s. We also observe that the pseudo-expectation $\pE$, in the variables $X'_{ij}$ and $w_{ij}$ with $w_{ij} := w_i w_j$, is a degree-$\deghalf$ pseudo-expectation, and trivially satisfies properties (1) -- (3). Property (4) follows as $\pE$ satisfies constraint (4) in \cref{program:frobnorm} and the following polynomial equality holds:
\begin{flalign*}
\E_{ij} \ip{X'_{ij} - \Sigma', v}^2 = \frac{1}{2} \E_{i} \ip{(x'_i - \mu')(x'_i - \mu')^{\top}, P}^2 + \frac{1}{2} \ip{\Sigma', v}^2 \enspace.
\end{flalign*}
Property (5) follows by using the vector $a$ with $a_i = \pE[w_i] \1(x_i = y_i)$ to show membership of $(a_{ij})_{i,j \in [n]}$ in $T$.

We thus have by \cref{lem:technicallem-frob} that $\abs{\ip{\hat{\Sigma} - \Sigma_0,P}} \leq \tilde{O}(\eps) \norm{P}_F$ for all symmetric $P \in \R^{d \times d}$. Using \cref{eq:frob1}, it follows that $\abs{\ip{\hat{\Sigma} - \Sigma,P}} \leq \tilde{O}(\eps) \norm{P}_F$.
Hence,
\begin{flalign*}
&\abs{\ip{\Sigma^{-1/2}\hat{\Sigma}\Sigma^{-1/2} - \Id, P}} = \abs{\ip{\hat{\Sigma} - \Sigma, \Sigma^{-1/2}P\Sigma^{-1/2}}} \\
&\leq \tilde{O}(\eps)\norm{\Sigma^{-1/2} P \Sigma^{-1/2}}_F \leq \tilde{O}(\eps) \norm{P}_F \enspace.
\end{flalign*}
Setting $P = \frac{\Sigma^{-1/2} \hat{\Sigma} \Sigma^{-1/2} - \Id}{\norm{\Sigma^{-1/2} \hat{\Sigma} \Sigma^{-1/2} - \Id}_F}$, we conclude that 
\begin{flalign*}
\norm{\Sigma^{-1/2} \hat{\Sigma} \Sigma^{-1/2} - \Id}_F \leq \tilde{O}(\eps) \enspace,
\end{flalign*}
as required.
%Next, we have
%\begin{flalign*}
%\norm{\Sigma^{-1/2} P \Sigma^{-1/2}}_F^2  \leq \norm{\Sigma^{-1/2}}_2^2 \norm{P}_F^2 \leq (1 + \tilde{O}(\eps)) \cdot \norm{P}_F^2 \enspace.
%\end{flalign*}

%%%%%%%%%%%%%%%%%%%%%%%%%%%%%%%%%%%%%%%%%%%%%%%%%%%%%%%%%%%%%%%%%%%%%%%%%%%%%%%%
%%%%%%%%%%%%%%%%%%%%%%%%%%%%%%%%%%%%%%%%%%%%%%%%%%%%%%%%%%%%%%%%%%%%%%%%%%%%%%%%
%%%%%%%%%%%%%%%%%%%%%%%%%%%%%%%%%%%%%%%%%%%%%%%%%%%%%%%%%%%%%%%%%%%%%%%%%%%%%%%%
\section{A Generic Estimation Lemma}
\label{sec:technicallem}
\cref{lem:technicallem-mean,lem:technicallem-spec,lem:technicallem-frob} are special cases of a generic technical result, which we now state and prove.
\begin{lemma}
\label{lem:technicallem}
Let $x_1, \dots, x_n \in \R^d$, and let $\sampmean := \E_i x_i$. Let $V(\sampmean,v)$ for $v \in \R^d$ be a degree-$2$ polynomial in $\sampmean$, and let $S \subseteq \R^d$ be a set such that $V(\sampmean, v) \geq 0$ for all $v \in S$ and $\sampmean \in \R^d$.

Let $T \subseteq [0,1]^n$. Suppose that, for all $v \in \R^d$ and $a \in T$ such that $\sum_i a_i \ge (1-\eps) n$, we have
\begin{align}
&\abs{\E_i a_i \ip{x_i - \sampmean, v}} \leq \tilde{O}(\eps) \cdot \sqrt{V(\sampmean, v)}\qq{and}
\abs{\E_i a_i [ \ip{x_i - \sampmean, v}^2 - V(\sampmean, v) ]} \leq \tilde{O}(\eps) V(\sampmean, v) \enspace. \label{eq:robustness-assumption}
\end{align}

Let $y_1, \dots, y_n$ be any $\eps$-corruption of $x_1, \dots, x_n$, let $\pE$ be a degree-$\deghalf$ pseudo-expectation in the variables $x'_1, \dots, x'_n \in \R^d$ and $w_1, \dots, w_n \in \R$. Let $\mu' = \E_i x'_i$. Suppose that
\begin{enumerate}[(1)]
\item $\pE$ satisfies $w_i^2 = w_i$ for every $i \in [n]$,
\item $\pE$ satisfies $\sum_{i = 1}^n w_i = (1 - \eps) n$,
\item $\pE$ satisfies $w_i x'_i = w_i y_i$ for every $i \in [n]$,
\item $\pE[\E_i \ip{x'_i - \mu', v}^2] \leq (1 + \tilde{O}(\eps)) \pE[V(\mu', v)]$ for every $ v \in S$, and
\item $a \in T$, where $a$ is the vector with $a_i := \pE[w_i] \1(x_i = y_i)$ for each $i \in [n]$.
\end{enumerate}

Then, for every $v \in S$, the following hold:
\begin{flalign*}
&\pE \ip{\mu' - \sampmean, v}^2 \leq O(\eps) (\pE V(\mu', v) + V(\sampmean,v)) \enspace,\\
&\abs{\ip{\hat{\mu} - \sampmean, v}} \leq \tilde{O}(\eps)\sqrt{V(\sampmean, v)} + \sqrt{ \pE [\E_i[ (1 - w'_i) \ip{x'_i  - \sampmean, v}]^2]} \enspace,
\end{flalign*}
where $\hat{\mu} := \pE[\mu']$, and
\begin{flalign*}
 &\pE[\E_i[(1 - w'_i) \ip{x'_i  - \sampmean, v}]^2] \leq O(\eps) \cdot (  \pE V(\mu', v) - V(\sampmean, v)) + \tilde{O}(\eps^2) \cdot (\pE V(\mu', v) + V(\sampmean, v)) \enspace.
\end{flalign*}
\end{lemma}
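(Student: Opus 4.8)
The plan is to carry out the familiar ``delete the disagreement set'' analysis of robust mean estimation, but performed \emph{after} applying $\pE$, so that the resilience hypotheses \eqref{eq:robustness-assumption} are invoked only for the single deterministic weight vector $a=\pE[w']$, rather than as an SoS-certified statement (this is the point emphasized in the overview).

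\textbf{Setup.} Set $w'_i := w_i \1(x_i = y_i)$. Since $\1(x_i=y_i)\in\{0,1\}$ is a constant and $\pE$ satisfies $w_i^2=w_i$ and $w_i x'_i = w_i y_i$, one checks that $\pE$ also satisfies the polynomial identities ${w'_i}^2 = w'_i$ and $w'_i x'_i = w'_i x_i$. Hence $\mu' - \sampmean = \E_i(x'_i - x_i) = \E_i(1-w'_i)(x'_i - x_i)$, so for every $v$ we may write $\ip{\mu' - \sampmean,v} = B_v - C_v$ with $B_v := \E_i(1-w'_i)\ip{x'_i-\sampmean,v}$ and $C_v := \E_i(1-w'_i)\ip{x_i-\sampmean,v}$. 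Using constraints (1)--(2) and that at most $\eps n$ of the $\1(x_i=y_i)$ vanish, the polynomial $\E_i(1-w'_i)$ lies in $[\eps,2\eps]$ with constant-degree SoS certificates modulo the constraints; in particular $\E_i\pE[w'_i]\in[1-2\eps,1]$, so $a := (\pE[w'_i])_i$ — which lies in $T$ by hypothesis (5) — is a legitimate weight vector for \eqref{eq:robustness-assumption} (the exact constant multiplying $\eps$ in the sum constraint is immaterial here). We record two facts: first, $\E_i\ip{x_i-\sampmean,v}=0$ and linearity give $\pE[C_v] = -\E_i\pE[w'_i]\ip{x_i-\sampmean,v}$, so the first bound in \eqref{eq:robustness-assumption} at $a=\pE[w']$ gives $\abs{\pE[C_v]}\le\tilde O(\eps)\sqrt{V(\sampmean,v)}$; second, $\E_i\ip{x'_i-\mu',v}=0$ gives the identity $\E_i\ip{x'_i-\sampmean,v}^2 = \E_i\ip{x'_i-\mu',v}^2 + \ip{\mu'-\sampmean,v}^2$, so constraint (4) yields $\pE[\E_i\ip{x'_i-\sampmean,v}^2]\le(1+\tilde O(\eps))\pE[V(\mu',v)]+\pE[\ip{\mu'-\sampmean,v}^2]$ and $\pE[V(\mu',v)]\ge 0$.

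\textbf{The $2\eps$-trick and the bound on $\pE[B_v^2]$.} For any polynomial $q$ that is a sum of squares modulo the constraints with a certificate of degree $\le \deghalf-2$, the product $(2\eps-\E_i(1-w'_i))\,q$ is SoS modulo the constraints of degree $\le\deghalf$, so $\pE[(\E_i(1-w'_i))\,q]\le 2\eps\,\pE[q]$; we apply this with the squares $q=\E_i\big((1-w'_i)\ip{x'_i-\sampmean,v}\big)^2$ and $q=\E_i\big((1-w'_i)\ip{x_i-\sampmean,v}\big)^2$. Pseudo-expectation Cauchy--Schwarz (\cref{lem:pEfacts}) with $p_i=(1-w'_i)$, $q_i=(1-w'_i)\ip{x'_i-\sampmean,v}$ (and ${w'_i}^2=w'_i$) gives $B_v^2 \le (\E_i(1-w'_i))\cdot\big(\E_i(1-w'_i)\ip{x'_i-\sampmean,v}^2\big)$, hence $\pE[B_v^2]\le 2\eps\,\pE[\E_i(1-w'_i)\ip{x'_i-\sampmean,v}^2]$. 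Expanding $\E_i(1-w'_i)\ip{x'_i-\sampmean,v}^2 = \E_i\ip{x'_i-\sampmean,v}^2 - \E_i w'_i\ip{x_i-\sampmean,v}^2$ (using $w'_i(x'_i-\sampmean)=w'_i(x_i-\sampmean)$), applying $\pE$, using the second recorded identity, and using the second bound in \eqref{eq:robustness-assumption} at $a=\pE[w']$ (which gives $\E_i\pE[w'_i]\ip{x_i-\sampmean,v}^2\ge(1-\tilde O(\eps))V(\sampmean,v)$), we get
\[ \pE[B_v^2]\;\le\;2\eps\Big((1+\tilde O(\eps))\pE[V(\mu',v)]+\pE[\ip{\mu'-\sampmean,v}^2]-(1-\tilde O(\eps))V(\sampmean,v)\Big). \]
Likewise $\pE[C_v^2]\le 2\eps\,\E_i\pE[1-w'_i]\ip{x_i-\sampmean,v}^2\le 2\eps\,\E_i\ip{x_i-\sampmean,v}^2\le O(\eps)V(\sampmean,v)$, the last step being the second bound in \eqref{eq:robustness-assumption} at $a=\1$.

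\textbf{Closing the loop.} From $\pE[\ip{\mu'-\sampmean,v}^2]\le 2\pE[B_v^2]+2\pE[C_v^2]$, the two bounds above, and dropping the nonpositive term $-(1-\tilde O(\eps))V(\sampmean,v)$, we get $\pE[\ip{\mu'-\sampmean,v}^2]\le O(\eps)\pE[V(\mu',v)]+4\eps\,\pE[\ip{\mu'-\sampmean,v}^2]+O(\eps)V(\sampmean,v)$; rearranging (valid once $\eps$ is below an absolute constant) proves the first conclusion $\pE[\ip{\mu'-\sampmean,v}^2]\le O(\eps)(\pE[V(\mu',v)]+V(\sampmean,v))$. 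Substituting this into the displayed bound for $\pE[B_v^2]$ and using $\pE[V(\mu',v)],V(\sampmean,v)\ge 0$ gives the third conclusion $\pE[B_v^2]\le O(\eps)(\pE[V(\mu',v)]-V(\sampmean,v))+\tilde O(\eps^2)(\pE[V(\mu',v)]+V(\sampmean,v))$. Finally, Cauchy--Schwarz gives $\pE[B_v]^2\le\pE[B_v^2]$, so $\abs{\ip{\hat\mu-\sampmean,v}} = \abs{\pE[B_v]-\pE[C_v]}\le\sqrt{\pE[B_v^2]}+\tilde O(\eps)\sqrt{V(\sampmean,v)}$, which is the second conclusion since $\pE[B_v^2]=\pE[\E_i[(1-w'_i)\ip{x'_i-\sampmean,v}]^2]$.

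\textbf{Expected main difficulty.} The conceptual crux — and the reason this succeeds in polynomial time for Gaussians — is that resilience is never certified by SoS: it enters only through $\abs{\pE[C_v]}\le\tilde O(\eps)\sqrt{V(\sampmean,v)}$ and $\E_i\pE[w'_i]\ip{x_i-\sampmean,v}^2\ge(1-\tilde O(\eps))V(\sampmean,v)$, which are statements about the fixed real vector $\pE[w']$. The step requiring the most bookkeeping is the $2\eps$-trick: one must verify that $2\eps-\E_i(1-w'_i)$ has a constant-degree SoS proof of nonnegativity modulo the program constraints and that its product with the relevant squares stays of degree $\le\deghalf$, so that the degree-$\deghalf$ pseudo-expectation can be applied. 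The other delicate point is the self-referential occurrence of $\pE[\ip{\mu'-\sampmean,v}^2]$ on both sides in the proof of the first conclusion, which is what forces the smallness hypothesis on $\eps$.
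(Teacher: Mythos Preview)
Your proof is correct and follows essentially the same route as the paper: define $w'_i=w_i\1(x_i=y_i)$, split $\ip{\mu'-\sampmean,v}$ into the $B_v,C_v$ pieces, bound $\abs{\pE[C_v]}$ by applying resilience \emph{outside} $\pE$ at the single real vector $a=\pE[w']$, bound $\pE[B_v^2]$ by the SoS Cauchy--Schwarz/$2\eps$ argument together with constraint~(4) and the second-moment resilience lower bound, and then close the self-referential inequality for $\pE\ip{\mu'-\sampmean,v}^2$. The only cosmetic difference is in that last step: you bound $\pE\ip{\mu'-\sampmean,v}^2$ via $(B_v-C_v)^2\le 2B_v^2+2C_v^2$ and reuse your $\pE[B_v^2]$ bound, whereas the paper instead expands $x'_i-x_i=(x'_i-\mu')+(\mu'-\sampmean)+(\sampmean-x_i)$ and applies $(a+b+c)^2\le 3(a^2+b^2+c^2)$ directly; both yield the same $O(\eps)$ self-bounding inequality and hence the same conclusion.
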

One should think of $V$ as the variance of the distribution from which the $x_i$'s are drawn, along the direction $v \in \R^d$.
We now turn to the proof of \cref{lem:technicallem}.
\begin{proof}[Proof of \cref{lem:technicallem}]
For each $i \in [n]$, let $w'_i = w_i \cdot \mathbf{1}(x_i = y_i)$. One should think of $w_i$ as indicating that the algorithm ``thinks'' that $x_i = y_i$; the variable $w'_i$ then indicates that the algorithm correctly ``thinks'' that $x_i = y_i$.

We now notice that the constraints ${w'_i}^2 = w'_i$, $w'_i x'_i = w'_i x_i$, and $\sum_i w'_i \geq (1 - 2 \eps) n$ are all satisfied by $\pE$. Indeed, e.g., the fact that $w'_i x'_i = w'_i x_i$ is satisfied is consistent with the logic that if the algorithm thinks that $x_i = y_i$, then it chooses $x'_i = y_i = x_i$, and therefore $x'_i = x_i$.

We then have
\begin{alignat}{9}
\abs{\ip{\hat{\mu} - \sampmean, v}} &= \abs{\pE \E_i \ip{x'_i - x_i, v}} \\&= \abs{\pE\E_i w'_i \ip{x'_i - x_i, v} + \pE\E_i(1 - w'_i) \ip{x'_i  - x_i, v}} \\
&= 0 + \abs{ \pE\E_i(1 - w'_i) \ip{x'_i  - \sampmean + \sampmean - x_i, v}} \ \qq{(because $\pE$ satisfies $w'_i x'_i = w'_i x_i$)} \\
&\leq \abs{ \pE\E_i(1 - w'_i) \ip{x'_i  - \sampmean, v}}+ \abs{ \E_i\pE[1 - w'_i] \ip{x_i - \sampmean, v}} \enspace.
\end{alignat}
One should notice that in the calculation above, we split the estimation error into the term when the algorithm ``thinks'' correctly and the error term, and then we ``center'' the error term about the sample mean $\sampmean$.

We now apply the robustness assumption \eqref{eq:robustness-assumption} to the second error term. Let $a_i = \pE[w'_i]$ for each $i$. We have that $\sum_{i = 1}^n a_i \geq (1 -  2 \eps) n$ and $a_i \in [0,1]$ because $\pE[{w'_i}^2] = \pE[w'_i]$, and $a \in T$ by assumption. Hence, again by assumption, we have that
\begin{flalign*}
&\abs{ \E_i\pE[w'_i] \ip{x_i - \sampmean, v}} \leq \tilde{O}(\eps) \sqrt{V(\sampmean,v)} \qq{and} \abs{ \E_i\ip{x_i - \sampmean, v}} \leq \tilde{O}(\eps) \sqrt{V(\sampmean,v)} \enspace.
\end{flalign*}
Thus, $\abs{ \E_i\pE[1 - w'_i] \ip{x_i - \sampmean, v}} \leq \tilde{O}(\eps) \sqrt{V(\sampmean,v)}$. We note here that the robustness assumption we apply is an inequality that holds ``outside'' the pseudo-expectation $\pE$.

It thus remains to bound the first error term: $\abs{ \pE\E_i(1 - w'_i) \ip{x'_i  - \sampmean, v}}$. We do this by using constraint (4) to control its second moments.

First, by applying the Cauchy-Schwarz inequality, we have
\begin{flalign*}
&\abs{ \pE \E_i (1 - w'_i) \ip{x'_i  - \sampmean, v}} \leq \sqrt{\pE[\E_i [(1 - w'_i) \ip{x'_i  - \sampmean, v}]^2]} \enspace,
\end{flalign*}
and that
\begin{alignat}{9}
&\pE[\E_i[(1 - w'_i) \ip{x'_i  - \sampmean, v}]^2] \\
&\leq  \pE[ \E_i[(1 - w'_i)^2] \cdot \E_i [(1 - w'_i)^2 \ip{x'_i  - \sampmean, v}^2]] \ &\qq{(by Item (2) in \cref{lem:pEfacts})}\\
&=\pE[ \E_i[1 - w'_i] \cdot \E_i [(1 - w'_i) \ip{x'_i  - \sampmean, v}^2]] \ &\qq{(as $\pE$ satisfies ${w'_i}^2 = w'_i$)} \\
&\leq 2\eps \cdot \pE\E_i[(1 - w'_i) \ip{x'_i  - \sampmean, v}^2] \ &\qq{(as $\pE$ satisfies $\E_i[1 - w'_i] \leq 2 \eps$)}
\end{alignat}
Note that here we crucially need that $\pE$ is a degree-$\deghalf$ pseudo-expectation, as $\E_i[(1 - w'_i)^2] \cdot \E_i [(1 - w'_i)^2 \ip{x'_i  - \sampmean, v}^2]$ is a degree-$\deghalf$ polynomial in the SoS variables $x'_1, \dots, x'_n$ and $w_1, \dots, w_n$.

We thus need to control the second moments $\pE\E_i[(1 - w'_i) \ip{x'_i  - \sampmean, v}^2]$. Using constraint (4), we have that
\begin{alignat}{9}
&\pE\E_i[(1 - w'_i) \ip{x'_i  - \sampmean, v}^2]
\\&= \pE\E_i\ip{x'_i  - \sampmean, v}^2 -  \pE\E_i w'_i \ip{x'_i  - \sampmean, v}^2 \\
&= \pE\E_i\ip{x'_i  - \sampmean, v}^2 -  \E_i\pE w'_i \ip{x_i  - \sampmean, v}^2 \ &\qq{(as $\pE$ satisfies $w'_i x'_i = w'_i x_i$)} \\
&\leq \pE\E_i\ip{x'_i  - \mu' + \mu' - \sampmean, v}^2 - (1- \tilde{O}(\eps)) V(\sampmean, v) \ &\llap{\qq{(by \cref{eq:robustness-assumption}, setting $a_i = \pE[w'_i]$)}} \\
&=\mathrlap{\pE\E_i\ip{x'_i  - \mu'}^2 + \pE \ip{\mu' - \sampmean, v}^2 + \pE\E_i\ip{x'_i  - \mu', v}\ip{\mu' - \sampmean, v}- (1 - \tilde{O}(\eps)) V(\sampmean, v)} \\
&\leq \pE[(1 + \tilde{O}(\eps)) V(\mu', v)] + \pE\ip{\mu' - \sampmean, v}^2 + 0 - (1 - \tilde{O}(\eps)) V(\sampmean, v) &\qq{(by constraint (4))} \\
&= \mathrlap{ \pE V(\mu', v) - V(\sampmean, v) + \tilde{O}(\eps)(\pE V(\mu', v) + V(\sampmean, v)) + \pE\ip{\mu' - \sampmean, v}^2 \enspace. }
\end{alignat}
Again, we remark that \cref{eq:robustness-assumption}, used above to \emph{lower bound} the second moment, is an inequality that holds ``outside'' $\pE$.

Finally, we upper bound $\pE\ip{\mu' - \sampmean, v}^2$. We compute
\begin{alignat}{9}
&\pE\ip{\mu' - \sampmean, v}^2 
\\&= \pE[\E_i[(w'_i + (1 - w'_i))\ip{x'_i - x_i, v}]^2] \\
&=\pE[\E_i[(1 - w'_i)\ip{x'_i - x_i, v}]^2] \ &\qq{(as $\pE$ satisfies $w'_i x'_i = w'_i x_i$)}\\
&\leq \pE[ \E_i [(1 - w'_i)^2] \cdot \E_i[\ip{x'_i - x_i, v}^2]] \leq 2 \eps \pE \E_i\ip{x'_i - x_i, v}^2 \\
&= 2\eps \pE \E_i\ip{(x'_i - \mu') + (\mu' - \sampmean) + (\sampmean -  x_i), v}^2 \\
&\le6\eps \pE \E_i \ip{x'_i - \mu',v}^2 + 6\eps \E_i \ip{x_i - \sampmean,v}^2 + 6 \eps \pE\ip{\mu' - \sampmean, v}^2 &\qq{(by \cref{lem:pEfacts})} \\
&\leq 6 \eps (1 + \tilde{O}(\eps))(\pE V(\mu', v) + V(\sampmean,v)) + 6 \eps \pE \ip{\mu' - \sampmean, v}^2 \enspace,
\end{alignat}
and so it follows that $\pE \ip{\mu' - \sampmean, v}^2 \leq O(\eps) (\pE V(\mu', v) + V(\sampmean,v))$.

Putting everything together, we conclude that:
\begin{flalign*}
 \pE[\E_i[(1 - w'_i) \ip{x'_i  - \sampmean, v}]^2] \leq O(\eps) \cdot (  \pE V(\mu', v) - V(\sampmean, v)) + \tilde{O}(\eps^2) \cdot (\pE V(\mu', v) + V(\sampmean, v))
\end{flalign*}
and that 
\begin{flalign*}
\abs{\ip{\hat{\mu} - \sampmean, v}} \leq \tilde{O}(\eps)\sqrt{V(\sampmean, v)} + \sqrt{ \pE [\E_i [(1 - w'_i) \ip{x'_i  - \sampmean, v}]^2]} \enspace,
\end{flalign*}
for every $v$ in $S$.
\end{proof}

%%%%%%%%%%%%%%%%%%%%%%%%%%%%%%%%%%%%%%%%%%%%%%%%%%%%%%%%%%%%%%%%%%%%%%%%%%%%%%%%
%%%%%%%%%%%%%%%%%%%%%%%%%%%%%%%%%%%%%%%%%%%%%%%%%%%%%%%%%%%%%%%%%%%%%%%%%%%%%%%%
\bibliographystyle{alpha}

{\small \bibliography{sos-estimation.bbl}}

\appendix
%%%%%%%%%%%%%%%%%%%%%%%%%%%%%%%%%%%%%%%%%%%%%%%%%%%%%%%%%%%%%%%%%%%%%%%%%%%%%%%%
%%%%%%%%%%%%%%%%%%%%%%%%%%%%%%%%%%%%%%%%%%%%%%%%%%%%%%%%%%%%%%%%%%%%%%%%%%%%%%%%

%%%%%%%%%%%%%%%%%%%%%%%%%%%%%%%%%%%%%%%%%%%%%%%%%%%%%%%%%%%%%%%%%%%%%%%%%%%%%%%%
%%%%%%%%%%%%%%%%%%%%%%%%%%%%%%%%%%%%%%%%%%%%%%%%%%%%%%%%%%%%%%%%%%%%%%%%%%%%%%%%
%%%%%%%%%%%%%%%%%%%%%%%%%%%%%%%%%%%%%%%%%%%%%%%%%%%%%%%%%%%%%%%%%%%%%%%%%%%%%%%%

\appendix

%%%%%%%%%%%%%%%%%%%%%%%%%%%%%%%%%%%%%%%%%%%%%%%%%%%%%%%%%%%%%%%%%%%%%%%%%%%%%%%%
%%%%%%%%%%%%%%%%%%%%%%%%%%%%%%%%%%%%%%%%%%%%%%%%%%%%%%%%%%%%%%%%%%%%%%%%%%%%%%%%
%%%%%%%%%%%%%%%%%%%%%%%%%%%%%%%%%%%%%%%%%%%%%%%%%%%%%%%%%%%%%%%%%%%%%%%%%%%%%%%%
\section{Proof of \cref{cor:finalalg}}
\label{sec:finalalg}
We prove \cref{cor:finalalg}. Let $x_1, \dots, x_{2n}$ be drawn from $\cN(\mu, \Sigma)$, and let $y_1, \dots, y_{2n}$ be an $\eps$-corruption of $x_1, \dots, x_{2n}$. Note that $y_1, \dots, y_n$ is an $\eps$-corruption of $x_1, \dots, x_n$, and $y_{n+1}, \dots, y_{2n}$ is an $\eps$-corruption of $x_{n+1}, \dots, x_{2n}$.

First, we run the algorithm in \cref{thm:main} on $y_1, \dots, y_n$: this yields an estimate $\hat{\mu}$ satisfying Item (1) of \cref{cor:finalalg}, and an estimate $\Sigma_1$ of $\Sigma$ satisfying $\abs{v^{\top} (\Sigma_1 - \Sigma) v} \leq \tilde{O}(\eps) v^{\top} \Sigma v$ for all $v \in \R^d$.

Next, we run the algorithm in \cref{thm:frobnorm} on the transformed samples $\Sigma_1^{-1/2} y_{n+1}, \dots, \Sigma_1^{-1/2} y_{2n}$. We observe that these samples are an $\eps$-corruption of $\Sigma_1^{-1/2} x_{n+1}, \dots, \Sigma_1^{-1/2} x_{2n}$, which are drawn from $\cN(\mu, \Sigma_2)$, where $\Sigma_2 = \Sigma^{-1/2} \Sigma \Sigma^{-1/2}$. By our guarantee on $\Sigma_1$, we must have $(1 - \tilde{O}(\eps)) \Id \preceq \Sigma_2 \preceq (1 + \tilde{O}(\eps)) \Id$. Hence, the output of the algorithm in \cref{thm:frobnorm} is $\Sigma_3$ where $\Sigma_3$ satisfies $\norm{\Sigma_2^{-1/2} \Sigma_3 \Sigma_2^{-1/2} - \Id}_F \leq \tilde{O}(\eps)$.

Our final estimate for $\Sigma$ is $\hat{\Sigma} := \Sigma_1^{1/2} \Sigma_3 \Sigma_1^{1/2}$. We have that
\begin{flalign*}
&\norm{\Sigma^{-1/2} \hat{\Sigma} \Sigma^{-1/2} - \Id}_F = \norm{\Sigma^{-1/2} (\Sigma_1^{1/2} \Sigma_3 \Sigma_1^{1/2}) \Sigma^{-1/2} - \Id}_F\\
&= \norm{(\Sigma^{-1/2} \Sigma \Sigma^{-1/2})^{-1/2} \Sigma_3 (\Sigma^{-1/2} \Sigma \Sigma^{-1/2})^{-1/2} - \Id}_F = \norm{\Sigma_2^{-1/2} \Sigma_3 \Sigma_2^{-1/2} - \Id}_F \leq \tilde{O}(\eps) \enspace,
\end{flalign*}
where we use the following proposition. This finishes the proof of \cref{cor:finalalg}, as by Corollary 2.14 in~\cite{DiakonikolasKK016} we have the desired bound on the total variation distance.
\begin{proposition}
\label{prop:basischange}
Let $A,B \in \R^{d \times d}$ be symmetric, PSD matrices, with $B$ invertible. Then for any invertible $C \in \R^{d \times d}$, it holds that
\begin{equation*}
\norm{B^{-1/2} A B^{-1/2} - \Id}_F = \norm{(C B C^{\top})^{-1/2} C A C^{\top} (C B C^{\top})^{-1/2} - \Id}_F
\end{equation*}
\end{proposition}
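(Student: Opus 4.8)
The plan is to reduce both sides of the claimed identity to the same symmetric function of the eigenvalues of $B^{-1}A$ (equivalently, of the generalized eigenvalue problem $Ax = \lambda Bx$). First I would note that $M_1 := B^{-1/2} A B^{-1/2}$ is symmetric and positive semidefinite, and that conjugating by $B^{-1/2}$ gives $B^{-1/2} M_1 B^{1/2} = B^{-1} A$, so $M_1$ is similar to $B^{-1}A$ and in particular has the same (real, nonnegative) eigenvalues $\lambda_1, \dots, \lambda_d$. Next, writing $D := CBC^{\top}$ --- which is symmetric positive definite since $B \succ 0$ and $C$ is invertible, so $D^{\pm 1/2}$ make sense --- I would observe that $M_2 := D^{-1/2} (CAC^{\top}) D^{-1/2}$ is likewise symmetric PSD, and that conjugating by $D^{-1/2}$ gives $D^{-1/2} M_2 D^{1/2} = D^{-1} C A C^{\top} = (C^{\top})^{-1} B^{-1} C^{-1} C A C^{\top} = (C^{\top})^{-1} (B^{-1}A) C^{\top}$, which is similar to $B^{-1}A$. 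Hence $M_1$ and $M_2$ are both similar to $B^{-1}A$, so they have identical eigenvalue multisets.

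The second step is the elementary observation that for any symmetric $M$ with eigenvalues $\lambda_1, \dots, \lambda_d$, we have $\norm{M - \Id}_F^2 = \Tr((M - \Id)^2) = \sum_{j=1}^d (\lambda_j - 1)^2$. Applying this to $M_1$ and to $M_2$, which share the same eigenvalues, immediately yields $\norm{M_1 - \Id}_F = \norm{M_2 - \Id}_F$, which is the assertion of the proposition. If one prefers to avoid invoking the spectral theorem at all, the same conclusion follows by brute force: expand $\norm{M_i - \Id}_F^2 = \Tr(M_i^2) - 2\Tr(M_i) + d$, use cyclicity of the trace to rewrite $\Tr(M_1^2) = \Tr(B^{-1}AB^{-1}A)$ and $\Tr(M_1) = \Tr(B^{-1}A)$, and similarly, using $D^{-1}CAC^{\top} = (C^{\top})^{-1}(B^{-1}A)C^{\top}$, rewrite $\Tr(M_2^2) = \Tr(B^{-1}AB^{-1}A)$ and $\Tr(M_2) = \Tr(B^{-1}A)$; the two expressions then agree term by term.

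Honestly, there is no genuine obstacle here --- the statement is a routine change-of-basis invariance of the relative Frobenius distance. The only points requiring a moment's care are checking that $M_2$ is actually symmetric (so that the eigenvalue formula for $\norm{M - \Id}_F$ applies), which holds because $D^{-1/2}$ and $CAC^{\top}$ are both symmetric, and checking that all the inverse square roots in sight are well defined, which is exactly what the hypotheses $B \succ 0$ and $C$ invertible ensure.
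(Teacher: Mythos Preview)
Your proof is correct. Both you and the paper reduce to the fact that $M_1 := B^{-1/2}AB^{-1/2}$ and $M_2 := (CBC^\top)^{-1/2}CAC^\top(CBC^\top)^{-1/2}$ share the same spectrum, but the mechanisms differ. The paper constructs an explicit orthogonal matrix $D = (CBC^\top)^{-1/2} C B^{1/2}$ and verifies directly that $D M_1 D^\top = M_2$ and $DD^\top = D^\top D = \Id$, thereby showing orthogonal similarity. You instead show that $M_1$ and $M_2$ are each similar (not necessarily orthogonally) to the common matrix $B^{-1}A$, which forces identical eigenvalue multisets, and then invoke that $\norm{M-\Id}_F$ depends only on the eigenvalues of a symmetric $M$. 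Your route is a touch more conceptual --- it identifies the invariant (the spectrum of the generalized eigenvalue problem $Ax = \lambda Bx$) without having to guess the orthogonal map --- while the paper's route is more constructive and yields the actual intertwiner. Your trace-based alternative is also valid and makes the argument entirely computational, bypassing even the appeal to symmetry of $M_1, M_2$.
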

\begin{proof}
Recall that for a symmetric matrix $M \in \R^{d \times d}$ with eigenvalues $\lambda_1, \dots, \lambda_d$, $\norm{M}_F = \sum_{i = 1}^d \lambda_i^2$. It thus suffices to show that $B^{-1/2} A B^{-1/2}$ and $(C B C^{\top})^{-1/2} C A C^{\top} (C B C^{\top})^{-1/2}$ are equivalent up to an orthogonal change of basis.

Let $D =  (C B C^{\top})^{-1/2} C B^{1/2}$. We then have that $D B^{-1/2} A B^{-1/2} D^{\top} = (C B C^{\top})^{-1/2} C A C^{\top} (C B C^{\top})^{-1/2}$, so it remains to show that $D$ is orthogonal, i.e., $D D^{\top} = D^{\top} D = \Id$.
We have
\begin{flalign*}
&D D^{\top} = (C B C^{\top})^{-1/2} C B^{1/2} B^{1/2} C^{\top} (C B C^{\top})^{-1/2} = (C B C^{\top})^{-1/2} \left((C B C^{\top})^{1/2}\right)^2 (C B C^{\top})^{-1/2} = \Id \\
&D^{\top} D = B^{1/2} C^{\top} (C B C^{\top})^{-1/2} (C B C^{\top})^{-1/2} C B^{1/2} = B^{1/2} C^{\top} (C^{-1})^{\top} B^{-1} C^{-1} C B^{1/2} = \Id \enspace,
\end{flalign*}
which finishes the proof.
\end{proof}

%%%%%%%%%%%%%%%%%%%%%%%%%%%%%%%%%%%%%%%%%%%%%%%%%%%%%%%%%%%%%%%%%%%%%%%%%%%%%%%%
%%%%%%%%%%%%%%%%%%%%%%%%%%%%%%%%%%%%%%%%%%%%%%%%%%%%%%%%%%%%%%%%%%%%%%%%%%%%%%%%
%%%%%%%%%%%%%%%%%%%%%%%%%%%%%%%%%%%%%%%%%%%%%%%%%%%%%%%%%%%%%%%%%%%%%%%%%%%%%%%%
\section{Quantifier Elimination in Sum-of-Squares}\label{sec:elimination}
In this section we will justify why the SoS relaxations of \cref{program:canonical,program:frobnorm}, which are written as a family of infinitely many constraints, can be solved efficiently. The programs have the form
\begin{gather}\begin{aligned}
    \qq{find} &x \in \R^m 
    \\\qq{s.t.} &f_i(x) \ge 0 &&\forall i
    \\&g_j(x) = 0 &&\forall j
    \\& h(x, v) \ge 0 &&\forall v \in \R^d.
\end{aligned}\label{eq:csp2}\end{gather}
with $\poly(m)$ constraints $f_i, g_j$. As such, we need a way to express constraints of the form ``$h(x, v) \ge 0$ for all $v\in \R^d$'' within degree-$k$ SoS. This will follow from the following result:

\begin{lemma}[Quantifier elimination in SoS, e.g., Section~4.3.4 in~\cite{FlemingKP19}]\label{lem:elimination}
Suppose that there exists some $x^* \in \R^m$ such that $f_i(x^*) \ge 0$ for all $i$, $g_j(x^*) = 0$ for all $j$, and $h(x^*, \cdot)$ has a degree-$k$ SoS proof of nonnegativity. Then a degree-$k$ pseudoexpectation satisfying all constraints in~\eqref{eq:csp2} can be found by solving a semidefinite program of size $m^{O(k)}$. 
\end{lemma}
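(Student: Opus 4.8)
The plan is to replace the lone infinite family ``$h(x,v)\ge 0$ for all $v\in\R^d$'' by a single finite semidefinite condition on the pseudo-moments of $x$, obtained by demanding not that $h(x,\cdot)$ be pointwise nonnegative but that it carry a degree-$k$ SoS certificate, and enforcing this only ``in pseudo-expectation''. Write $h(x,v)=\sum_{|\alpha|\le 2t}h_\alpha(x)v^\alpha$ with $t:=\lfloor k/2\rfloor$ (so $\deg_v h\le 2t\le k$), where each coefficient $h_\alpha$ is a polynomial in $x$ of degree $d_x:=\deg_x h\le k$. Fix a basis $p_1(x),\dots,p_N(x)$ of the polynomials in $x$ of degree $\le (k-d_x)/2$, and let $m_t(v)$ be the vector of $v$-monomials of degree $\le t$; here $N$ and the length of $m_t$ are $\le m^{O(k)}$. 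For a candidate degree-$k$ pseudo-expectation $\pE$ in the variables $x$, and for each $\alpha$, form the $N\times N$ matrix $\Lambda_\alpha$ with $\Lambda_\alpha[a,b]:=\pE[p_a(x)p_b(x)h_\alpha(x)]$, a linear functional of the order-$\le k$ pseudo-moments (well-defined since $\deg(p_ap_bh_\alpha)\le k$ exactly). To the ordinary SoS SDP for $\{f_i\ge 0\}$ and $\{g_j=0\}$ I would adjoin a symmetric matrix variable $G$ indexed by pairs $(a,\beta)$ with $a\in[N]$ and $|\beta|\le t$, together with the constraints $G\succeq 0$ and, for all $\alpha,a,b$, $\Lambda_\alpha[a,b]=\sum_{\beta+\gamma=\alpha}G_{(a,\beta),(b,\gamma)}$. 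This is a PSD block of size $m^{O(k)}$ plus $m^{O(k)}$ linear constraints, so the whole program is an SDP of size $m^{O(k)}$.

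Then I would verify \emph{soundness}: from any feasible pair $(\pE,G)$, discarding $G$ leaves a degree-$k$ pseudo-expectation in $x$ satisfying all of~\eqref{eq:csp2}. The constraints $f_i\ge 0$ and $g_j=0$ are imposed verbatim (and the PSD-ness of $\pE$ is part of the retained SoS SDP). For any fixed $v$ and any $p=\sum_a c_ap_a$, writing $\xi$ for the vector with entries $c_av^\beta$, one computes $\pE[p^2\,h(\cdot,v)]=\sum_{a,b}c_ac_b\sum_\alpha v^\alpha\Lambda_\alpha[a,b]=\sum_{(a,\beta),(b,\gamma)}(c_av^\beta)(c_bv^\gamma)G_{(a,\beta),(b,\gamma)}=\xi^\top G\xi\ge 0$, so $\pE$ satisfies $h(\cdot,v)\ge 0$ in the SoS sense for every $v$, which is exactly what~\eqref{eq:csp2} demands.

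Next, \emph{feasibility}, which is where the hypothesis on $x^*$ enters. Take $\pE:=$ evaluation at $x^*$, a bona fide degree-$k$ pseudo-expectation, which satisfies $f_i\ge 0$ and $g_j=0$. Since $h(x^*,\cdot)$ has a degree-$k$ SoS proof of nonnegativity (\cref{def:sosproof}), expanding the square terms in the basis $m_t(v)$ gives $h(x^*,v)=m_t(v)^\top H^{*}m_t(v)$ for a constant PSD matrix $H^{*}$, i.e.\ $h_\alpha(x^*)=\sum_{\beta+\gamma=\alpha}H^{*}_{\beta\gamma}$. Put $\pi:=(p_a(x^*))_a$ and $G:=(\pi\pi^\top)\otimes H^{*}$; then $G\succeq 0$ as a Kronecker product of PSD matrices, and $\sum_{\beta+\gamma=\alpha}G_{(a,\beta),(b,\gamma)}=p_a(x^*)p_b(x^*)h_\alpha(x^*)=\Lambda_\alpha[a,b]$, so $(\pE,G)$ is feasible. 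Feasibility together with the size bound lets the sum-of-squares SDP solver return a degree-$k$ pseudo-expectation whose restriction to the $x$-variables is, by the soundness step, the desired object.

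The point needing the most care is not a hard inequality but the check that the two weakenings are lossless for our purpose: we have relaxed ``$h(x,\cdot)$ is nonnegative on $\R^d$'' to ``$h(x,\cdot)$ is a sum of squares of degree $\le k$'', and we enforce this only tested against the products $p_a(x)p_b(x)$ rather than pointwise in $x$. Soundness survives both weakenings automatically (a sum of squares is nonnegative, and a feasible $\pE$ literally satisfies each $h(\cdot,v)\ge 0$ in the SoS sense), so the entire burden falls on keeping the \emph{relaxed} program feasible — and the hypothesis on $x^*$ supplies precisely this, since the SoS certificate of $h(x^*,\cdot)$ is the constant Gram matrix $H^{*}$ which, tensored with the rank-one $\pi\pi^\top$ of the point mass at $x^*$, witnesses feasibility. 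The only remaining routine checks are the degree inequalities $\deg_v h\le k$, $\deg_x h\le k$, and $\deg p_a\le (k-d_x)/2$, which guarantee that every pseudo-moment invoked has order $\le k$ and that $G$ has size $m^{O(k)}$, so the final SDP has size $m^{O(k)}$ as claimed.
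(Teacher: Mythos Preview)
Your proposal is correct and is a detailed working-out of precisely the mechanism the paper gestures at. The paper does not actually prove \cref{lem:elimination}: it cites \cite{FlemingKP19} and gives only the one-sentence intuition that ``$h(x^*,\cdot)$ has a degree-$k$ SoS proof of nonnegativity'' is equivalent to a moment matrix of size $m^{O(k)}$ being PSD, which can be expressed within the SoS SDP. Your construction --- adjoining the auxiliary Gram matrix $G$ indexed by $(a,\beta)$, coupling it linearly to the pseudo-moments via $\Lambda_\alpha[a,b]=\sum_{\beta+\gamma=\alpha}G_{(a,\beta),(b,\gamma)}$, and witnessing feasibility by $G=(\pi\pi^\top)\otimes H^*$ at the point mass $x^*$ --- is exactly the standard way to make that sentence precise, and your soundness computation $\pE[p^2 h(\cdot,v)]=\xi^\top G\xi\ge 0$ is the right one. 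One small remark: your size bound ``length of $m_t(v)\le m^{O(k)}$'' silently assumes $d\le m^{O(1)}$; this is true in the paper's applications (where $d$ is the ambient data dimension and $m=\Theta(nd)$ counts all SoS variables) but is worth stating, since the lemma as written does not relate $d$ to $m$.
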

Intuitively, this is true because ``$h(x^*, \cdot)$ has a degree-$k$ SoS proof of nonnegativity'' is equivalent to a particular moment matrix of size $m^{O(k)}$ being PSD, which can be expressed within SoS.
We will use this result in two forms. 
\begin{lemma}[Lemma~4.27 in~\cite{FlemingKP19}]
If $h$ is a quadratic form, then $h$ has a degree-$2$ SoS proof of nonnegativity if and only if $h(v) \ge 0$ for every $v$.
\end{lemma}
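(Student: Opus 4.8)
The plan is to prove the two directions separately; both are short, and the content reduces to the standard fact that a quadratic form is nonnegative on $\R^d$ exactly when its Gram matrix is positive semidefinite.

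For the ``only if'' direction I would argue directly from the definition: if $h$ has a degree-$2$ SoS proof of nonnegativity, say $h = \sum_{i=1}^r h_i^2$ with each $h_i \in \P_{d,1}$ (a linear polynomial), then for any fixed $v \in \R^d$ each $h_i(v)$ is a real number, so $h(v) = \sum_i h_i(v)^2 \ge 0$. This requires nothing beyond unpacking \cref{def:sosproof}.

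For the ``if'' direction I would write the quadratic form as $h(v) = v^\top M v$ for a symmetric $M \in \R^{d\times d}$. The hypothesis that $h(v) \ge 0$ for all $v$ says precisely that $M \succeq 0$, so by the spectral theorem (or, equivalently, a Cholesky factorization $M = LL^\top$) we may write $M = \sum_{j=1}^d \lambda_j u_j u_j^\top$ with $\lambda_j \ge 0$ and $\{u_j\}_{j=1}^d$ an orthonormal basis of $\R^d$. Then
\[
h(v) = \sum_{j=1}^d \lambda_j \langle u_j, v\rangle^2 = \sum_{j=1}^d \bigl(\sqrt{\lambda_j}\,\langle u_j, v\rangle\bigr)^2,
\]
which exhibits $h$ as a sum of squares of the linear forms $v \mapsto \sqrt{\lambda_j}\,\langle u_j, v\rangle \in \P_{d,1}$, i.e.\ a degree-$2$ SoS proof of nonnegativity in the sense of \cref{def:sosproof}.

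I do not expect any genuine obstacle here: the only points to verify are the classical equivalence between pointwise nonnegativity of a quadratic form and positive semidefiniteness of its Gram matrix, and the trivial observation that the forms $v\mapsto\sqrt{\lambda_j}\langle u_j,v\rangle$ produced by the decomposition indeed lie in $\P_{d,1}$ so that the identity above is a bona fide degree-$2$ SoS certificate.
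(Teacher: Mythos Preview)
Your proof is correct and is the standard argument for this classical fact. Note that the paper does not actually prove this lemma itself: it is stated as a citation to Lemma~4.27 of \cite{FlemingKP19} and used without proof, so there is no paper proof to compare against. Your spectral-decomposition argument is exactly the expected one.
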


In \cref{program:canonical,program:frobnorm}, the constraint (4) is indeed a quadratic form in $v$ (or $P$), so we are done. For constraint (5) in \cref{program:canonical}, we require the certifiable hypercontractivity of Gaussians (\cref{lem:certifiable-hypercontractivity}), which we restate here:

\restatelemma{lem:certifiable-hypercontractivity}

By rearranging, the constraint (5) in \cref{program:canonical} is exactly the condition that $h(x', v) \ge 0$ for all $v$ in the above lemma, so the lemma states that a degree-$4$ SoS proof exists with high probability for the {\em true} samples $x'_i = x_i$. Thus, the conditions of \cref{lem:elimination} are satisfied, and we are done.

%%%%%%%%%%%%%%%%%%%%%%%%%%%%%%%%%%%%%%%%%%%%%%%%%%%%%%%%%%%%%%%%%%%%%%%%%%%%%%%%
%%%%%%%%%%%%%%%%%%%%%%%%%%%%%%%%%%%%%%%%%%%%%%%%%%%%%%%%%%%%%%%%%%%%%%%%%%%%%%%%
%%%%%%%%%%%%%%%%%%%%%%%%%%%%%%%%%%%%%%%%%%%%%%%%%%%%%%%%%%%%%%%%%%%%%%%%%%%%%%%%
\section{Deferred Proofs from \cref{sec:gaussianproperties}}
\label{sec:concproof}
\subsection{Proof of \cref{lem:resiliencecov}}

\begin{proof}
The first statement is
    \begin{align}
        \bignorm{\E_i a_i [ x_i x_i^\top - \Id]}_F \le \tilde O(\eps)
    \end{align}
    which is Corollary~4.8 in~\cite{DiakonikolasKK016}.
    The second statement is
    \begin{align}
        \bigabs{\E_i a_i [ \ip{x_i x_i^\top - \Id, P}^2 - 2\norm{P}_F^2 ] } \leq \tilde{O}(\eps) \norm{P}_F^2.
    \end{align}
    By convexity, we may assume that $a_i \in \{0, 1\}$ for all $i$. Let $S$ be the set of indices $i$ for which $a_i > 0$. 
    We have:
    \begin{align}
        \bigabs{\E_i a_i [ \ip{x_i x_i^\top - \Id, P}^2 - 2\norm{P}_F^2 ] } &\le \bigabs{ \E_{i \sim [n] } \ip{x_i x_i^\top - \Id, P}^2 - 2\norm{P}_F^2 } + \eps \bigabs{ \E_{i \sim [n] \setminus S} \ip{x_i x_i^\top - \Id, P}^2}
    \end{align}
    We bound the two terms separately. Condition on the ``good event'' in Lemma~5.17 of~\cite{DiakonikolasKK016}. Then,
    \begin{align}
        \bigabs{\E_{i \sim [n] } \ip{x_i x_i^\top - \Id, P}^2 - 2 \norm{P}_F^2} \le O(\eps)\norm{P}_F^2
    \end{align}
    follows from Item~3 of Definition~5.15 in~\cite{DiakonikolasKK016} with $p(x) = \ip{xx^T - \Id, P} / (\sqrt{2} \norm{P}_F)$. The fact that 
    \begin{align}
        \eps \bigabs{ \E_{i \sim [n] \setminus S} \ip{x_i x_i^\top - \Id, P}^2} \le \tilde O(\eps)
    \end{align}
    follows from Lemma~5.21 of~\cite{DiakonikolasKK016} with the same choice of $p$. Combining these bounds completes the proof. 
\end{proof}

\subsection{Proof of \cref{lem:annoying-statements}}
The statements in \cref{lem:annoying-statements} are similar to those in \cref{lem:resiliencemean} and \cref{lem:resiliencecov}, so it should be reasonable to believe that they should hold. The proofs are tedious but ultimately mostly brute force. 

All the statements are invariant to linear transformations, so assume WLOG that $\mu = 0$ and $\Sigma = \Id$. Condition on the conclusions of \cref{lem:resiliencemean,lem:resiliencecov}, which hold with high probability for the chosen $n$. Let $z_i = x_i - \mu_0$ for notational simlpicity.

In the proof, instead of the stated conditions on $a$, we will use instead normalized vectors, namely, we will assume that $\E_{ij} a_{ij} = \E_i a_i = 1$,  $\E_j a_{ij} = a_i \le 1 + O(\eps)$, and $a_{ij} \le a_i(1 + O(\eps))$. Since this amounts to nothing but scaling the coefficients by $1 + O(\eps)$, the conclusions of \cref{lem:resiliencemean,lem:resiliencecov} hold verbatim.
\begin{enumerate}
\item $\displaystyle \abs{\ip{\mu_0, v}} \leq \tilde{O}(\eps) \norm{v}_2$
\end{enumerate}
\begin{proof}Set $a_i = 1$ for all $i$ in \cref{lem:resiliencemean}. \end{proof}

\begin{enumerate}[resume]
\item$\displaystyle \bigabs{\E_i a_i \ip{x_i - \mu_0, v}} \leq \tilde{O}(\eps) \cdot \sqrt{v^{\top} \Sigma_0 v}$
\end{enumerate}
\begin{proof}
By (1) above and \cref{lem:resiliencemean}, we have
\begin{align}
    \bigabs{\E_i a_i \ip{x_i - \mu_0, v}} \le \bigabs{\E_i a_i \ip{x_i, v}} + \bigabs{\E_i a_i \ip{\mu_0, v}} \le \tilde O(\eps) \norm{v}_2
\end{align}
But $\norm{v}_2 = (1 \pm \tilde O(\eps)) \sqrt{v^\top \Sigma_0 v}$ by (4) (with $P = vv^\top$), so we are done. \end{proof}

The following intermediate results will be useful in the remaining proofs.
\begin{lemma}\label{lem:cross-terms}
$\norm{P \mu_0}_2 \le \tilde O(\eps) \norm{P}_F$, $\abs{\E_i a_i \ip{x_i \mu_0^\top, P}} \le \tilde O(\eps^2) \norm{P}_F$, and $\abs{\mu_0^\top P \mu_0} \le \tilde O(\eps^2) \norm{P}_F$.
\end{lemma}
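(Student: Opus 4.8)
The plan is to deduce all three estimates from the already-established statement~(1) of \cref{lem:annoying-statements} and from the first conclusion of \cref{lem:resiliencemean} (which we have conditioned on), using only elementary norm inequalities; no new concentration bound is needed. The first step is the crude estimate $\norm{\mu_0}_2 \le \tilde{O}(\eps)$: applying statement~(1) of \cref{lem:annoying-statements} with test vector $v = \mu_0$ gives $\norm{\mu_0}_2^2 = \abs{\ip{\mu_0, \mu_0}} \le \tilde{O}(\eps)\norm{\mu_0}_2$, and cancelling one factor of $\norm{\mu_0}_2$ proves it.

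Granting $\norm{\mu_0}_2 \le \tilde{O}(\eps)$, the first and third estimates follow at once from the inequalities $\norm{Pv}_2 \le \norm{P}_2\norm{v}_2$ and $\norm{P}_2 \le \norm{P}_F$: indeed $\norm{P\mu_0}_2 \le \norm{P}_2 \norm{\mu_0}_2 \le \tilde{O}(\eps)\norm{P}_F$, and $\abs{\mu_0^\top P \mu_0} \le \norm{\mu_0}_2\,\norm{P\mu_0}_2 \le \tilde{O}(\eps^2)\norm{P}_F$. For the middle estimate, use that $P$ is symmetric to rewrite $\ip{x_i \mu_0^\top, P} = x_i^\top P \mu_0 = \ip{x_i, P\mu_0}$, so that $\E_i a_i \ip{x_i \mu_0^\top, P} = \E_i a_i \ip{x_i, P\mu_0}$; applying the first conclusion of \cref{lem:resiliencemean} with the vector $v = P\mu_0$ and the same coefficients $a_i$ bounds this by $\tilde{O}(\eps)\norm{P\mu_0}_2 \le \tilde{O}(\eps^2)\norm{P}_F$, using the first estimate again.

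All of this is routine; the only points to watch are that the coefficient vector $a = (a_i)$ appearing here is precisely the one in the hypothesis of \cref{lem:annoying-statements}, so that statement~(1) and \cref{lem:resiliencemean} apply to it verbatim (after the $1+O(\eps)$ rescaling of coefficients used throughout this subsection), and that the appeal to \cref{lem:resiliencemean} is made with the \emph{data-dependent} direction $P\mu_0$ rather than a fixed $v$ — which is legitimate because \cref{lem:resiliencemean} holds simultaneously for all $v \in \R^d$. I do not anticipate any genuine obstacle.
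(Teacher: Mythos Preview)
Your proof is correct and follows essentially the same route as the paper's: both obtain $\norm{\mu_0}_2 \le \tilde O(\eps)$ from resilience (your invocation of statement~(1) of \cref{lem:annoying-statements} is exactly the paper's appeal to \cref{lem:resiliencemean} with $a_i\equiv 1$, since $\mu=0$, $\Sigma=\Id$), then combine with $\norm{P}_2\le\norm{P}_F$ for the first bound and with \cref{lem:resiliencemean} applied at $v=P\mu_0$ for the second. The only cosmetic difference is the third bound: the paper specializes the second bound to $a_i\equiv 1$ (so that $\E_i\ip{x_i\mu_0^\top,P}=\mu_0^\top P\mu_0$), while you use $\abs{\mu_0^\top P\mu_0}\le\norm{\mu_0}_2\norm{P\mu_0}_2$ directly---both are one-line observations.
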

\begin{proof}
The first inequality is $\norm{P \mu_0}_2 \le \norm{P}_2 \norm{\mu_0}_2 \le \tilde O(\eps) \norm{P}_F$  by \cref{lem:resiliencemean} and $\norm{P}_2 \le \norm{P}_F$.

The second is $\abs{\E_i a_i \ip{x_i \mu_0^\top, P}} \le \tilde O(\eps) \norm{P \mu_0}_2 \le \tilde O(\eps^2) \norm{P}_F$, by  \cref{lem:resiliencemean} the first inequality.

The third is the second when $a_i = 1$ for all $i$.
\end{proof}

\begin{enumerate}[resume]
\item$\displaystyle \bigabs{\E_i a_i \ip{z_i z_i^\top - \Id, P}} \leq \tilde{O}(\eps) \norm{P}_F$
\end{enumerate}
(The statement in the lemma follows from setting $P = vv^\top$ and applying (4) below, but we will need this more generic statement later, so this is the one we prove.)
\begin{proof}
We have
\begin{align}
    \E_i a_i  \ip{(x_i - \mu_0)(x_i - \mu_0)^\top, P} &= \E_i a_i \ip{x_i x_i^\top, P} + \ip{\mu_0 \mu_0^\top, P} - 2 \E_i a_i \ip{x_i \mu_0^\top, P}
\end{align}
The first term is $\E_i a_i \ip{x_i x_i^\top, P} = \ip{\Id, P} \pm \tilde O(\eps) \norm{P}_F$ by \cref{lem:resiliencecov}, and the other terms are $\pm \tilde O(\eps) \norm{P}_F$ by \cref{lem:cross-terms}.  
\end{proof}

\begin{enumerate}[resume]
\item$\displaystyle \bigabs{\ip{\Sigma_0 - \Id, P}} \leq \tilde{O}(\eps) \norm{P}_F$
\end{enumerate}
\begin{proof}
Set $a_i = 1$ for all $i$ in (3).
\end{proof}

\begin{enumerate}[resume]
\item$\displaystyle \bigabs{\E_{i} \ip{z_i z_i^\top - \Sigma_0, P}^2 - 2\norm{P}_F^2} \leq \tilde{O}(\eps) \cdot \norm{P }_F^2$
\end{enumerate}
\begin{proof}
We have:
\begin{align}
    \E_{i} a_{i} \ip{z_i z_i^\top - \Sigma_0, P}^2 =  \E_{i} a_{i} \ip{z_i z_i^\top  - \Id, P}^2 + \ip{\Sigma_0 - \Id, P}^2 - 2  \E_{i} a_{i} \ip{z_i z_i^\top  - \Id, P} \ip{\Sigma_0 - \Id, P}
\end{align}
For the second term, we have $\ip{\Sigma_0 - \Id, P}^2 \le \tilde O(\eps^2) \norm{P}_F^2$ by (4). For the third term, we have 
\begin{align}
    \bigabs{\E_{i} a_{i} \ip{z_i z_i^\top - \Id, P} \ip{\Sigma_0 - \Id, P}}
    &\le \tilde O(\eps)  \norm{P}_F  \bigabs{\E_{i} a_{i} \ip{z_i z_i^\top - \Id, P}}
    \le \tilde O(\eps^2) \norm{P}_F^2
\end{align}
by (4) and then (3). That only leaves the first term. We have:
\begin{align}
    \E_i a_i \ip{(x_i - \mu_0)(x_i - \mu_0)^\top - \Id, P}^2 &= \E_i a_i \bigiprod{ (x_i x_i^\top - \Id) + \mu_0 \mu_0^\top - 2 x_i \mu_0^\top, P }^2
    \\&= \E_i a_i \ip { x_i x_i^\top - \Id, P}^2 + \ip{\mu_0 \mu_0^\top, P}^2 + 4 \E_i a_i \ip{x_i \mu_0^\top, P}^2
    \\&\quad + \ip{\mu_0 \mu_0^\top, P} \E_i a_i \ip { x_i x_i^\top - \Id, P} - 2\ip{\mu_0 \mu_0^\top, P}\E_i a_i \ip{x_i \mu_0^\top, P}
    \\&\quad  - 2 \E_i a_i \ip { x_i x_i^\top - \Id, P} \ip{x_i \mu_0^\top, P}
\end{align}
We analyze each term separately. The first term is $(2 \pm \tilde O(\eps)) \norm{P}_F^2$ by \cref{lem:resiliencecov}, so it suffices to show that all remaining terms are small. The second term is $\tilde O(\eps^4) \norm{P}_F^2$ by \cref{lem:cross-terms}. The third term is $\abs{\E_i a_i \ip{x_i, P \mu_0}^2} = (1 \pm \tilde O(\eps)) \norm{P \mu_0}_2^2 \le \tilde O(\eps^2) \norm{P}_F^2$ by \cref{lem:resiliencemean,lem:cross-terms}. The fourth term is $\tilde O(\eps^3) \norm{P}_F^2$ by \cref{lem:cross-terms,lem:resiliencecov}. The fifth term is $\tilde O(\eps^4) \norm{P}_F^2 $ by \cref{lem:cross-terms}. For the final term, we have
\begin{align}
    \bigabs{ \E_i a_i \ip { x_i x_i^\top - \Id, P} \ip{x_i \mu_0^\top, P}} \le \sqrt{\qty\big(\E_i a_i \ip{ x_i x_i^\top - \Id, P}^2) \qty\big(\E_i a_i \ip{x_i \mu_0^\top, P}^2)}
\end{align}
by Cauchy-Schwarz. The first term is $\E_i a_i \ip{ x_i x_i^\top - \Id, P}^2 = (2 \pm \tilde O(\eps))\norm{P}_F^2$, and the second term is $\tilde O(\eps^2) \norm{P}_F^2$ as argued above. Combining these yields $\bigabs{ \E_i a_i \ip { x_i x_i^\top - \Id, P} \ip{x_i \mu_0^\top, P}} \le \tilde O(\eps) \norm{P}_F^2$, as needed.
\end{proof}

The proofs of the remaining two bounds will make repeated use of the following generic technique, roughly speaking. Suppose that resilience (\cref{lem:resiliencemean,lem:resiliencecov}) gives us $\abs{\E_i a_i z_i} = \tilde O(\eps)B$, and we want to argue that $\abs{\E_{ij} a_{ij} z_i z_j} = \tilde O(\eps)B^2$. This is not immediate, because $a_{ij} \ne a_i a_j$ in general. Instead, we write  $\E_{ij} a_{ij} z_i z_j = \E_i z_i \E_j a_{ij} z_j$, and apply resilience to the inner expectation (noting that the vector whose $j$th entry is $a_{ij} / a_i$ is, by construction, a valid resilience vector)  to find $\abs{\E_{j} a_{ij} z_j} \le a_i \tilde O(\eps)B$ for each $i$, so that $\abs{\E_{ij} a_{ij} z_i z_j} \le B \abs{\E_i z_i \tilde O(\eps)}$. In this expression, the $\tilde O(\eps)$ may depend on $i$. Let $b_i$ be the term hidden by the $\tilde O(\eps)$ for each $i$, and let $a'_i := 1 - b_i + \E b_i$. Note that $\E_i a'_i = 1$ and $a'_i = 1 \pm \tilde O(\eps)$ for all $i$, so $a'$ is a valid input to the resilience condition. Thus, we have
\begin{align}
    \abs{\E_i b_i z_i} \le ( 1 + \E_i b_i ) \abs{\E_i z_i} + \abs{\E_i a_i z_i}
\end{align}
Now applying resilience to each of the two terms separately gives $\abs{\E_i b_i z_i} \le \tilde O(\eps)B$, so $\abs{\E_{ij} a_{ij} z_i z_j} \le \tilde O(\eps) B^2$, as desired.

The following intermediate result will also be useful:
\begin{lemma}
\label{lem:paired-to-unpaired}
$\displaystyle \E_{ij} a_{ij} \ip{ X_{ij}, P }= \E_{i} a_{i} \ip{x_i x_i^\top, P} \pm \tilde O(\eps) \norm{P}_F$
\end{lemma}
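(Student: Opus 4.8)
The plan is to expand $X_{ij} = \frac12(x_i - x_j)(x_i - x_j)^\top$ and peel off a harmless cross term. Working in the normalized setup fixed at the start of the proof of \cref{lem:annoying-statements} (so $\mu = 0$, $\Sigma = \Id$, $\E_{ij} a_{ij} = \E_i a_i = 1$, $\E_j a_{ij} = a_i$, $a_{ij} \le a_i(1+O(\eps))$, and $\norm{\Sigma^{1/2}P\Sigma^{1/2}}_F = \norm{P}_F$), I would first record the identity $\ip{X_{ij}, P} = \tfrac12\big(\ip{x_i x_i^\top, P} + \ip{x_j x_j^\top, P}\big) - x_i^\top P x_j$, which uses the symmetry of $P$. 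Averaging against $a_{ij}$, using $a_{ij} = a_{ji}$ to merge the two diagonal contributions and $\E_j a_{ij} = a_i$ to collapse the $j$-average of the first, leaves
\begin{align*}
\E_{ij} a_{ij} \ip{X_{ij}, P} = \E_i a_i \ip{x_i x_i^\top, P} - \E_{ij} a_{ij}\, x_i^\top P x_j,
\end{align*}
so the whole statement reduces to the bound $\bigabs{\E_{ij} a_{ij}\, x_i^\top P x_j} \le \tilde O(\eps)\norm{P}_F$ on the cross term.

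To bound the cross term I would run the two-stage argument sketched just above the lemma. For each fixed $i$, the coefficients $(a_{ij}/a_i)_j$ have mean $1$ and are $\le 1 + O(\eps)$, hence (after the $O(\eps)$ rescaling) form a valid input to \cref{lem:resiliencemean}; applying that lemma in the direction $P x_i$ and using $x_i^\top P x_j = \ip{x_j, P x_i}$ gives $\bigabs{\E_j a_{ij}\, x_i^\top P x_j} \le a_i\, \tilde O(\eps)\,\norm{P x_i}_2$, with a single $\tilde O(\eps)$ factor valid for all $i$ since \cref{lem:resiliencemean} holds for all directions at once. Taking absolute values inside the $i$-average then yields $\bigabs{\E_{ij} a_{ij}\, x_i^\top P x_j} \le \tilde O(\eps)\,\E_i a_i\norm{P x_i}_2$. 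Finally, Cauchy--Schwarz over $i$ with $\E_i a_i = 1$ gives $\E_i a_i\norm{P x_i}_2 \le \sqrt{\E_i a_i\ip{x_i x_i^\top, P^2}}$, and since $P^2$ is symmetric, \cref{lem:resiliencecov} gives $\E_i a_i\ip{x_i x_i^\top, P^2} = \ip{\Id, P^2} \pm \tilde O(\eps)\norm{P^2}_F = (1 \pm \tilde O(\eps))\norm{P}_F^2$ using $\ip{\Id, P^2} = \norm{P}_F^2$ and $\norm{P^2}_F \le \norm{P}_2\norm{P}_F \le \norm{P}_F^2$; so $\E_i a_i\norm{P x_i}_2 = O(\norm{P}_F)$ and the cross term is $\tilde O(\eps)\norm{P}_F$.

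The main obstacle is precisely this cross term: a naive single application of resilience does not work because $a_{ij}$ is not of the product form $a_i a_j$. The resolution --- apply mean-resilience to the inner $j$-average in the \emph{data-dependent} direction $P x_i$, then second-moment resilience to the resulting $i$-average --- is the instance of the generic technique described just before the lemma statement, and everything else (the algebraic identity, the diagonal collapse, the $\norm{P^2}_F \le \norm{P}_F^2$ estimate) is routine.
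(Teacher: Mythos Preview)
Your proof is correct. The algebraic reduction to the cross term $\E_{ij} a_{ij}\, x_i^\top P x_j$ is identical to the paper's, and your two-stage bound on that term is valid: applying mean resilience in the data-dependent direction $Px_i$ gives $|\E_j a_{ij} x_i^\top P x_j| \le a_i\,\tilde O(\eps)\norm{Px_i}_2$, and the subsequent Cauchy--Schwarz plus covariance resilience on $P^2$ cleanly yields $\E_i a_i\norm{Px_i}_2 = O(\norm{P}_F)$.

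The paper handles the same cross term a little differently: it writes $P = \sum_k \lambda_k v_k v_k^\top$ spectrally, applies mean resilience to each $\E_j a_{ij}\ip{v_k,x_j}$, and then invokes the ``generic technique'' (a second mean-resilience pass exploiting that coefficients of size $\tilde O(\eps)$ can be absorbed into a perturbed weight vector) on the outer $i$-average. Your route avoids both the spectral decomposition and the somewhat delicate generic-technique bookkeeping, at the cost of invoking the second-moment resilience \cref{lem:resiliencecov} rather than only the first-moment \cref{lem:resiliencemean}. Either way the bound is $\tilde O(\eps)\norm{P}_F$; your version is arguably the cleaner of the two.
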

\begin{proof}
We have
\begin{align}
    \E_{ij} a_{ij} \ip{ X_{ij}, P }&= \frac12 \E_{ij} a_{ij} \ip{(x_i - x_j)(x_i - x_j)^\top, P}
    \\&= \E_{i} a_{i} \ip{x_i x_i^\top, P} - \E_{ij} a_{ij} \ip{x_i x_j^\top, P}
\end{align}
where we use the symmetry of $P$ and the $a_{ij}$s. It thus remains to bound the last term.  Write $P = \sum_k \lambda_k v_k v_k^\top$ for orthonormal vectors $v_k$. Note that $\norm{\sum_k \lambda_k v_k}_2 = \sqrt{\sum_k \lambda_k^2} = \norm{P}_F$ by Pythagorean theorem. Then:
\begin{align}
    \E_{ij} a_{ij} \ip{x_i x_j^\top, P} &=  \E_{i}\sum_k \lambda_k \ip{v_k, x_i} \E_j a_{ij} \ip{v_k, x_j}
    = \E_{i} a_i  (\pm \tilde O(\eps))   \Bigiprod{\sum_k \lambda_k v_k, x_i}
    = \tilde O(\eps) \norm{P}_F
\end{align}
by applying \cref{lem:resiliencemean} twice using the generic technique.

\end{proof}
We now prove the last two results in \cref{lem:annoying-statements}.

\begin{enumerate}[resume]
\item$\displaystyle \bigabs{\E_{ij}a_{ij} \ip{X_{ij} - \Sigma_0, P}} \leq \tilde{O}(\eps) \cdot \norm{P}_F$
\end{enumerate}
\begin{proof}
By \cref{lem:paired-to-unpaired}, we have:
\begin{align}
    \E_{ij} a_{ij} \ip{ X_{ij} - \Sigma_0, P }&= \E_{i} a_{i} \ip{x_i x_i^\top - \Sigma_0, P} \pm \tilde O(\eps^2) \norm{P}_F
\end{align}
It thus only remains to bound the first term. We have
\begin{align}
    \bigabs{\E_i a_i \ip{x_i x_i^\top - \Sigma_0, P}} \le \bigabs{\E_i a_i \ip{x_i x_i^\top - \Id, P}} + \bigabs{\E_i a_i \ip{\Sigma_0 - \Id, P}} \le \tilde O(\eps) \norm{P}_F
\end{align}
by applying \cref{lem:resiliencecov} and (4).  \end{proof}

\begin{enumerate}[resume]
\item$\displaystyle \bigabs{\E_{ij} a_{ij} \ip{X_{ij} - \Sigma_0, P}^2 - 2\norm{P}_F^2} \leq \tilde{O}(\eps) \cdot \norm{P}_F^2$
\end{enumerate}
\begin{proof}
We follow the same structure as the proof of (5) above. We have:
\begin{align}
    \E_{ij} a_{ij} \ip{X_{ij} - \Sigma_0, P}^2 =  \E_{ij} a_{ij} \ip{X_{ij} - \Id, P}^2 + \ip{\Sigma_0 - \Id, P}^2 - 2 \E_{ij} a_{ij} \ip{X_{ij} - \Id, P} \ip{\Sigma_0 - \Id, P}
\end{align}
For the second term, we have $\ip{\Sigma_0 - \Id, P}^2 \le \tilde O(\eps^2) \norm{P}_F^2$ by (4). For the third term, we have 
\begin{align}
    \bigabs{\E_{ij} a_{ij} \ip{X_{ij} - \Id, P} \ip{\Sigma_0 - \Id, P}}
    &\le \tilde O(\eps)  \norm{P}_F  \bigabs{\E_{ij} a_{ij} \ip{X_{ij} - \Id, P}}
    \\&=  \tilde O(\eps) \norm{P}_F \bigabs{\E_{i} a_{i} \ip{x_i x_i^\top - \Id, P} \pm \tilde O(\eps) \norm{P}_F}
    \\&\le \tilde O(\eps^2) \norm{P}_F^2
\end{align}
by \cref{lem:paired-to-unpaired,lem:resiliencecov}. 
That only leaves the first term. We have
\begin{align}
    \E_{ij} a_{ij} \ip{X_{ij} - \Id, P}^2 &= \E_{ij} a_{ij} \Bigiprod{\frac12 (x_i x_i^\top - \Id) + \frac12 (x_j x_j^\top - \Id) - x_i x_j^\top , P}^2
    \\&= \frac12 \E_{i} a_{i} \ip{x_i x_i^\top - \Id, P}^2 + \E_{ij} a_{ij} \ip{x_i x_j^\top, P}^2 + \frac12 \E_{ij} a_{ij} \ip{x_i x_i^\top - \Id, P} \ip{x_j x_j^\top - \Id, P} 
    \\&\quad - 2 \E_{ij} a_{ij} \ip{x_i x_i^\top - \Id, P} \ip{x_i x_j^\top, P}
\end{align}
The first term is $(1 \pm \tilde O(\eps)) \norm{P}_F^2$ by \cref{lem:resiliencecov}. For the second term, we have
\begin{align}
    \E_{ij} a_{ij} \ip{x_i x_j^\top, P}^2 &= \E_i \E_j a_{ij} x_i^\top P  x_j x_j^\top P x_i
    \\&= \E_i  \E_j a_{ij} \ip{x_j x_j^\top, P x_i x_i^\top P}
    \\&= \E_i a_i \qty\big[ \ip{\Id, P x_i x_i^\top P} \pm \tilde O(\eps) \norm{P x_i x_i^\top P}_F ]
    \\&=  \E_i a_i (1 \pm \tilde O(\eps)) \ip{x_i x_i^\top, P^2} 
    \\&= \ip{\Id, P^2} \pm \tilde O(\eps) \norm{P^2}_F
    \\&= (1 \pm \tilde O(\eps)) \norm{P}_F^2
\end{align}
where we use \cref{lem:resiliencecov} twice (the second time exploiting the fact that $a_i (1 \pm \tilde O(\eps))$ is still a valid resilience vector), and the last line uses $\ip{\Id, P^2} = \norm{P}_F^2$ and $\norm{P^2}_F \le \norm{P}_F^2$.

Thus, it only remains to show that the other two terms are small. For the third term, we have
\begin{align}
     \bigabs{\E_{ij} a_{ij} \ip{x_i x_i^\top - \Id, P} \ip{x_j x_j^\top - \Id, P}} &= \bigabs{\E_i \ip{x_i x_i^\top - \Id, P} \E_j a_{ij} \ip{x_j x_j^\top - \Id, P}}
     \\&\le  \norm{P}_F \bigabs{ \E_i a_i  (\pm \tilde O(\eps))  \ip{x_i x_i^\top - \Id, P} }
     \\&\le \tilde O(\eps) \norm{P}_F^2.
\end{align}
by the generic technique. For the final term, we have 
\begin{align}
    \bigabs{\E_{ij} a_{ij} \ip{x_i x_i^\top - \Id, P} \ip{x_i x_j^\top, P}} &= \bigabs{\E_i a_{ij} \ip{x_i x_i^\top - \Id, P} \qty\big(\E_j a_{ij} \ip{x_j, P x_i})}
    \\&= \bigabs{\E_i a_{i} (\pm \tilde O(\eps))    \ip{x_i x_i^\top - \Id, P}  \norm{P x_i}_2}
    \\&\le \sqrt{\qty\big(\E_i a_i \ip{x_i x_i^\top - \Id, P}^2) \qty\big( \E_i a_i \tilde O(\eps^2) \norm{P x_i}_2^2) }
\end{align}
The first term is $\E_i a_i \ip{x_i x_i^\top - \Id, P}^2 = (2 \pm \tilde O(\eps))\norm{P}_F^2$. For the second term, we have
\begin{align}
    \bigabs{\E_i a_i \tilde O(\eps^2) \norm{P x_i}_2^2} &= \tilde O(\eps^2) \bigabs{ \E_i  \ip{x_i x_i^\top, P^2} } 
    \\&\le \tilde O(\eps^2) \bigabs{\E_i a_i  \ip{x_i x_i^\top - \Id , P^2} } + \tilde O(\eps^2) \ip{\Id, P^2}
    \\&\le \tilde O(\eps^2) \norm{P}_F^2
\end{align}
where the last line applies \cref{lem:resiliencecov} to the first term (noting again that $\ip{\Id, P^2} = \norm{P}_F^2$), and the inequality $\norm{P^2}_F \le \norm{P}^2_F$ to the second. Combining these yields $\bigabs{\E_{ij} a_{ij} \ip{x_i x_i^\top - \Id, P} \ip{x_i x_j^\top, P}} \le \tilde O(\eps) \norm{P}_F^2$, which is what we needed. \end{proof}

\end{document}
%%%%%%%%%%%%%%%%%%%%%%%%%%%%%%%%%%%%%%%%%%%%%%%%%%%%%%%%%%%%%%%%%%%%%%%%%%%%%%%%
%%%%%%%%%%%%%%%%%%%%%%%%%%%%%%%%%%%%%%%%%%%%%%%%%%%%%%%%%%%%%%%%%%%%%%%%%%%%%%%%
%%%%%%%%%%%%%%%%%%%%%%%%%%%%%%%%%%%%%%%%%%%%%%%%%%%%%%%%%%%%%%%%%%%%%%%%%%%%%%%%